\renewcommand\@seccntformat[1]{\csname the#1\endcsname.\quad}
\def\@maketitle{%
  \begin{center}%
  \let \footnote \thanks
    {\large \@title \par}%
    {\normalsize
      \begin{tabular}[t]{c}%
        \@author
      \end{tabular}\par}%
    {\small \@date}%
  \end{center}%
}
\theoremstyle{definition}
\newtheorem{theorem}{Theorem}
\newtheorem*{corollary*}{Corollary}
\newtheorem{lemma}{Lemma}
\newtheorem{definition}{Definition}
\newtheorem{example}{Example}
\newcommand{\randvar}[1]{#1^\ast}
\newcommand{\randvardesc}{asterisk }
\newcommand{\citeappendices}{}
\newcommand{\citeappendicesalt}{}
\title{\bf Simulation-Based Calibration Checking for Bayesian Computation:\\
The Choice of Test Quantities Shapes Sensitivity\footnote{We thank Garud Iyengar and Henry Lam for helpful discussions on theory and proofs (Appendix A) and David Yao for bringing attention to stochastic ordering which motivated delving into families of test quantities. We thank Feras Saad for alerting us that a previous version of this paper did not correctly reflect their contributions. This work was supported by the ELIXIR CZ research infrastructure project (Ministry of Youth, Education and Sports of the Czech Republic, Grant No: LM2023055), including access to computing and storage facilities; the Deutsche Forschungsgemeinschaft (DFG, German Research Foundation) under Germany’s Excellence Strategy --- EXC-2075 - 390740016 (the Stuttgart Cluster of Excellence SimTech); the U.S. National Science Foundation, National Institutes of Health, and Office of Naval Research; and the Academy of Finland Flagship programme: Finnish Center for Artificial Intelligence.}
\vspace{.1in}}
\author{Martin Modr\'ak\footnote{Institute of Microbiology of the Czech Academy of Sciences}, 
Angie H. Moon\footnote{Massachusetts Institute of Technology}, 
Shinyoung Kim\footnote{Department of Computer Science, Kookmin University}, 
Paul B\"{u}rkner\footnote{Department of Statistics, Technical University of Dortmund}, \\
Niko Huurre\footnote{Unaffiliated}, 
Kate\v{r}ina Faltejskov\'{a}\footnote{Institute of Organic Chemistry
and Biochemistry of the Czech Academy of Sciences},\,
Andrew Gelman\footnote{Department of Statistics and Political Science, Columbia University},\, 
Aki Vehtari\footnote{Department of Computer Science, Aalto University}\vspace{.1in}}
\date{19 Oct 2023}
\begin{document}

\maketitle

\begin{abstract}
Simulation-based calibration checking (SBC) is a practical method to validate computationally-derived posterior distributions or their approximations. In this paper, we introduce a new variant of SBC to alleviate several known problems. Our variant allows the user to in principle detect any possible issue with the posterior, while previously reported implementations could never detect large classes of problems including when the posterior is equal to the prior. This is made possible by including additional data-dependent test quantities when running SBC. We argue and demonstrate that the joint likelihood of the data is an especially useful test quantity. Some other types of test quantities and their theoretical and practical benefits are also investigated.
We provide theoretical analysis of SBC, thereby providing a more complete understanding of the underlying statistical mechanisms. We also bring attention to a relatively common mistake in the literature and clarify the difference between SBC and checks based on the data-averaged posterior. We support our recommendations with numerical case studies on a multivariate normal example and a case study in implementing an ordered simplex data type for use with Hamiltonian Monte Carlo. The SBC variant introduced in this paper is implemented in the \texttt{SBC} R package.
\end{abstract}

\section{Introduction}

Simulation-based calibration checking (SBC; \citealt{talts_sbc}) is a method to validate Bayesian computation, extending ideas from \citet{cook_gelman_rubin}.\footnote{The term in the literature is ``simulation-based calibration''; here we have added the word ``checking'' to emphasize that these methods do not themselves produce calibration; rather, they measure departure from calibration.}
While SBC is primarily intended for validating sampling algorithms such as MCMC, it can be used for validating any method implementing or approximating Bayesian inference. Published applications include variational inference \citep{yao_yes_2018} and neural posterior approximations \citep{radev2023jana}.  

Throughout this paper we assume an implicit and fixed Bayesian statistical model $\pi$ with data space $Y$ and parameter space $\Theta$. For $y \in Y, \theta \in \Theta$ the model implies the following joint, marginal, and posterior distributions:
\begin{gather*}
    \pi_\text{joint}(y, \theta) = \pi_\text{obs}(y | \theta) \pi_\text{prior}(\theta)\\
    \pi_\text{marg}\left(y \right) = \int_\Theta \mathrm{d} \theta \: \pi_{\text{obs}}(y | \theta) \pi_\text{prior}(\theta)\\
    \pi_\text{post}(\theta | y) = \frac{\pi_\text{obs}(y | \theta) \pi_\text{prior}(\theta)}{\pi_\text{marg}\left(y \right)}.
\end{gather*}

Typically, the posterior distribution $\pi_\text{post}$ is the target of inference but is impossible to evaluate directly. While many computational approaches exist for sampling from the posterior or its approximations, they may fail to provide a correct answer. Problems can arise from errors in how the algorithm or the statistical model are encoded or from inherent inability of the computational method to correctly handle a given model with a given dataset. 

\subsection{Self-consistency of Bayesian models}

To discover problems with computation, several classes of checks can be derived from self-consistency properties of statistical models. One such property concerns the data-averaged posterior \citep{geweke_getting_2004}:
\begin{equation}
\pi_\text{prior}(\theta) = \int_Y \mathrm{d} y \int_\Theta \mathrm{d}\tilde\theta  \: \pi_\text{post}(\theta |y) \pi_\text{obs}(y | \tilde\theta) \pi_\text{prior}(\tilde \theta).  \label{eq:data_averaged_posterior_introduction} 
\end{equation}

SBC relies on a different property that involves the joint distribution of prior and posterior samples from the same model \citep{cook_gelman_rubin}:
\begin{equation}
\pi_\text{SBC}(y, \theta, \tilde\theta) = \pi_\text{prior}(\tilde\theta) \pi_\text{obs}(y | \tilde\theta) \pi_\text{post}(\theta | y).
\label{eq:sbc_joint_distribution}
\end{equation}
Since $\pi_\text{obs}(y | \tilde\theta) \pi_\text{prior}(\tilde\theta) = \pi_\text{marg}\left(y \right)\pi_\text{post}(\tilde\theta | y)$, this implies,
\begin{equation}
\pi_\text{SBC}(y, \theta, \tilde\theta) = \pi_\text{marg}(y) \pi_\text{post}(\theta | y) \pi_\text{post}(\tilde\theta | y).      \label{eq:sbc_joint_2}
\end{equation}
Equation (\ref{eq:sbc_joint_2}) immediately shows that conditional on a specific data $y \in Y$, the distributions of $\theta$ and $\tilde\theta$ in Equations (\ref{eq:sbc_joint_distribution}) and (\ref{eq:sbc_joint_2}) are identical. In general, SBC-like checks are sensitive to different deviations from the correct posterior than checks based on the data-averaged posterior (see Section~\ref{sec:examples_summary} for more details). The two families of checks coincide when $Y$ has just a single element as in this case both reduce to directly comparing two distributions.

SBC and related methods employ two different implementations of the same statistical model and check if the results have the same distribution conditional on data. The first step is to define a \emph{generator} capable of directly simulating draws from $\pi_\text{prior}(\tilde\theta)$ and $\pi_\text{obs}(y | \tilde\theta)$, and the second step is to define a \emph{probabilistic program} that, in combination with a given \emph{posterior approximation algorithm}, samples from the posterior distribution $\pi_\text{post}(\theta | y)$. Each simulation from the generator yields,
\begin{align}
  \randvar{\tilde\theta} &\sim \pi_\text{prior}(\tilde\theta)    \notag\\
  \randvar{y} &\sim \pi_\text{obs}(y | \randvar{\tilde\theta})    \notag\\
  \theta_1, \dots \theta_M &\sim \pi_\text{post}(\theta | \randvar{y}),
  \label{eq:sbc_setup}
\end{align}
where $M$ is the number of posterior draws sampled. Where confusion is possible we use \randvardesc  to mark a random variable. We run many such simulations and then inspect the realized distributions of $\theta_1, \ldots, \theta_M$ and $\randvar{\tilde\theta}$ conditional on $\randvar{y}$. Specific calibration checking methods differ in how exactly they test the conditional equality of the two distributions. 

\subsection{Proposed SBC variant}

SBC has been believed to be insensitive to some classes of mismatches, and as described in \citep{talts_sbc} would not work for discrete variables. To remove those limitations, we argue for the following variant of the SBC check: First, project the potentially high-dimensional parameter and data space into a scalar \emph{test quantity} $f: \Theta \times Y \rightarrow \mathbb{R}$. Second, compute the rank of the prior draw in the posterior conditional on $y$. Specifically, we take the number of posterior sample draws where the test quantity is lower than in the prior draw, and, if there are any ties, choosing the rank randomly among the tied positions: 
\begin{align*}
   N_{\mathtt{less}} &:= \sum_{m=1}^M \mathbb{I} \left[f(\theta_m, y) < f(\tilde \theta, y) \right] \\
   N_{\mathtt{equals}} &:= \sum_{m=1}^M \mathbb{I} \left[f(\theta_m, y) = f(\tilde \theta, y) \right] \\
   K &\sim \mathrm{uniform}(0,  N_{\mathtt{equals}})\\
   N_\mathtt{total} &:= N_{\mathtt{less}} + K,
\end{align*}  
where $\mathbb{I}[P]$ denotes the indicator function for predicate $P$. The procedure simplifies if there are no ties, which will be true for most practical test quantities over models with continuous parameter space. When no ties occur, we have $N_\mathtt{total} = N_{\mathtt{less}}$. Then, if the probabilistic program and the generator implement the same probabilistic model, we have
\begin{equation}
    N_{\mathtt{total}} \sim \mathrm{uniform}(0, M).    
    \label{eq:sbc_sample}
\end{equation}
See Theorems~3 and 4 for a formal statement and proof. As a result, once we obtain a set of draws from empirical distribution of $N_{\mathtt{total}}$ via multiple simulations, we can perform a test for uniformity. The process is then repeated for all test quantities we want to consider. If we are using MCMC to sample from $\pi_\text{post}$, the posterior sample typically needs to be thinned to ensure that $\theta_1, \dots, \theta_M$ are  approximately independent \citep{talts_sbc, sailynoja_graphical_2021}. The overall SBC process is illustrated in Figure~\ref{fig:sbc_schema}.

\begin{figure}
    \centering
    \usetikzlibrary{positioning}

    \definecolor{priorcolor2}{HTML}{DBEEF3}
    \definecolor{priorcolor1}{HTML}{31859B}
    \definecolor{programcolor2}{HTML}{F2DCDB}
    \definecolor{programcolor1}{HTML}{953734}
    \definecolor{rankcolor2}{HTML}{FDEADA}
    \definecolor{rankcolor1}{HTML}{E36C09}
    \definecolor{unifcolor2}{HTML}{EBF1DD}
    \definecolor{unifcolor1}{HTML}{76923C}
    \resizebox{\textwidth}{!}{%
    \begin{tikzpicture}
    
    \tikzstyle{invisible} = [outer sep=0,inner sep=0,minimum size=0]
    
    % Groupings
    \filldraw [color=priorcolor1,fill=priorcolor2] (0.55,3.1) rectangle (2.45,-2.9);
    \filldraw [color=programcolor1, fill=programcolor2] (3,3.1) rectangle (5.2,-2.9);
    \filldraw [color=rankcolor1, fill=rankcolor2] (5.7,3.1) rectangle (11.1,-2.9);
    \node at (1.45,3.6) {Generator};
    \node at (4,3.6) {\begin{tabular}{c} Probabilistic \\ program \end{tabular}};
    \node at (8.2,3.6) {Test quantity};

    %Nodes
    \node[invisible] (prioranchor) {};
    \node (priordraw2) [right = 5mm of prioranchor] {$\tilde\theta^{(2)}$};
    \node (priordraw1) [above  = 12mm of priordraw2] {$\tilde\theta^{(1)}$};
    \node (priordrawdots) [below = 4mm of priordraw2.center] {$\vdots$};
    \node (priordrawS) [below = 10mm of priordrawdots] {$\tilde\theta^{(S)}$};
    
    \foreach \i in {1,2,S} {
      \node (datadraw\i) [right = 12mm of priordraw\i.west] {$y^{(\i)}$};
      \node (postdraw\i) [right = 12mm of datadraw\i.west] {$\theta^{(\i)}_1, \dots, \theta^{(\i)}_M$};
      \node[invisible] (priordatajoinhelper\i) [above = 9mm of datadraw\i.west] {};
      \node[invisible] (priordatajoin\i) [right = 10mm of priordatajoinhelper\i] {};
      \node (fpost\i) [right = 27mm of postdraw\i.west] {$f\left(\theta^{(\i)}_1, y^{(\i)}\right), \dots, f\left(\theta^{(\i)}_M, y^{(\i)}\right)$};
      \node (fprior\i) [right = 43mm of priordatajoin\i] {$f\left(\tilde\theta^{(\i)}, y^{(\i)}\right)$};
      \node (rank\i) [right = 60mm of fpost\i.west] {$N^{(\i)}_\text{total}$};
    }
    
    \foreach \g in {datadraw,postdraw,fpost,rank} {
        \node ({\g}dots) [below= 4mm of \g2.center] {$\vdots$};
    }

    \node[rectangle,  draw=unifcolor1, fill=unifcolor2] (uniformity) [right = 5mm of rank2] {\begin{tabular}{c} Uniformity \\ test \end{tabular}};

    \foreach \i in {1,2,S} {
      \draw[->] (prioranchor) -- (priordraw\i);
      \draw[->] (priordraw\i) -- (datadraw\i);
  
      \draw (priordraw\i) -- (priordatajoin\i);
      \draw (datadraw\i) -- (priordatajoin\i);
      \draw[->] (priordatajoin\i) -- (fprior\i);

      \draw[->] (datadraw\i) -- (postdraw\i);
      \draw[->] (postdraw\i) -- (fpost\i);
      \draw[->] (fprior\i.east) .. controls ++(15mm, 0mm) .. (rank\i);

      \draw[->] (datadraw\i) .. controls ++(20mm, 8mm) .. (fpost\i.north west);
      \draw[->] (fpost\i) -- (rank\i);
      %\draw[->] (priordraw\i).. controls ++(35mm, 8mm) .. (rank\i);
    }
    
     \draw[->] (rank1) -- (uniformity.north west);
     \draw[->] (rank2) -- (uniformity.west);
     \draw[->] (rankS) -- (uniformity.south west);

    \end{tikzpicture}
    }%
    \caption{\em Schematic representation of SBC with $S$ simulations. The generator is responsible for sampling from the prior distribution $\tilde\theta \sim \pi_\text{prior}(\tilde\theta)$ and from the observation model $y\sim \pi_\text{obs}(y \mid \tilde\theta)$. The draws from the observation model are then treated as input for the probabilistic program and the associated algorithm which takes $M$ posterior draws $\theta_1, \dots \theta_M$. Each test quantity projects the prior draw and the posterior draws (potentially using data) onto the real line, letting us compute a single rank ($N_\text{total}$). Finally, deviations from discrete uniform distribution are assessed numerically or visually.}
    \label{fig:sbc_schema}
\end{figure}
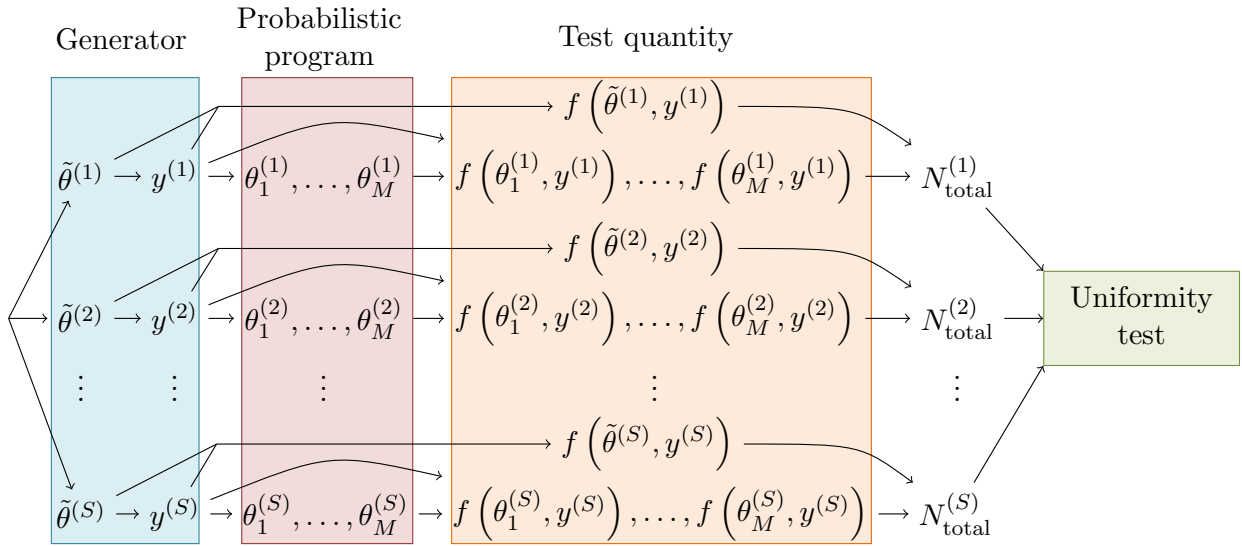

While it is possible to use numerical tests for uniformity with SBC, we generally prefer to use visualisations of the rank distribution as they are more informative than numerical summaries and discourage dichotomous thinking. Most prominent are rank histograms and plots of empirical cumulative distribution functions \citep{sailynoja_graphical_2021}.

Our proposed SBC variant improves upon the way SBC has been previously reported and used in two major ways:
\begin{itemize}
    \item We let test quantities depend on both data and parameters, while previous work only considered quantities that depend on the parameters. In practice, these test quantities were almost exclusively just the individual parameters themselves.
    \item Previous formulations of SBC required uniformity of $N_\mathtt{less}$. However, even if the probabilistic program is exactly correct, $N_\mathtt{less}$ will not be uniform if $\mbox{Pr}(N_\mathtt{equals} > 0) > 0$, that is, if ties can occur. With our improved SBC procedure, we can handle test quantities that have distributions with point masses and thus ties between $f(\tilde\theta, y)$ and $(f(\theta_1, y), \ldots, f(\theta_M, y))$. Resolving ties lets us use SBC for models with discrete parameters as well as in some other special cases, such as when a theoretically strictly positive test quantity suffers underflow and some prior/posterior sample draws are numerically zero. Random tie-braking has previously been used for checking that data-averaged posterior equals prior \eqref{eq:data_averaged_posterior_introduction} over discrete parameter spaces \citep{saad_family_2019}.
\end{itemize}

\subsection{Practical considerations}

SBC will be satisfied if the generator, probabilistic program, and posterior approximation algorithm are in harmony: The generator and the probabilistic program should correspond to the same data-generating process. At the same time the posterior approximation algorithm (including the associated tuning parameters) provides samples that have at most a negligible difference from the correct posterior for the probabilistic program, given the data simulated from the prior.  Failure indicates that at least one of the components is mismatched to the others. However, by itself, SBC cannot determine where exactly the problem lies. As a result, two broad uses of SBC arise:

\begin{itemize}
\item We have code to simulate data and a probabilistic program we trust, and the goal is to check that an algorithm correctly samples from the posterior, or
\item We have an algorithm that we trust is correct and trustworthy code to simulate data, and the goal is to check that we correctly implemented our probabilistic program.
\end{itemize}

In practice, those classes overlap and mix: we are rarely completely certain of the correctness of any algorithm, generator, or probabilistic program. Additionally, SBC as a simulation method has no way to inform us about a discrepancy between the process that generated real data and the assumptions of our statistical model. For reliable inference, SBC thus needs to be combined with other elements of Bayesian workflow that can detect model misspecification, such as posterior predictive checks or analysis of residuals \citep{gabry_visualization_2019,workflow_preprint, kay_residuals}. 

\subsection{Importance of test quantities}

It has been generally believed that methods based on Equation~\eqref{eq:sbc_joint_distribution}, including SBC, are never sensitive to some classes of mismatches between the generator and the probabilistic program---most notably that it is impossible to detect if the probabilistic program samples from the prior distribution and ignores the information in the data (e.g., Equation (1.3) of \citealt{lee_calibration_2019}; Appendix M.2 of  \citealt{pmlr-v130-lueckmann21a}; \citealp{schad_sbc_bf, pmlr-v161-zhao21b, ramesh_gatsbi_2022, cockayne_testing_2022}).

In this paper we show that the choice of test quantities greatly influences the usefulness and sensitivity of SBC. We show that using test quantities that depend on data makes it possible to detect any conceivable mismatch between the generator and the probabilistic program. Thus, we demonstrate that the belief in inherent limitations of SBC has relied on overly restrictive and sometimes plainly incorrect assumptions. We discuss useful classes of test quantities that have not been used so far and provide characterization of possible remaining undetected failures. We provide simulation studies as well as theoretical analysis of SBC to support our findings. We hope that our theoretical framework can serve as a basis for a better understanding of the properties of SBC and related methods. All of the techniques discussed are implemented in the \texttt{SBC} R package \citep{SBC_package}.

The rest of the paper is structured as follows: Section~\ref{sec:related} discusses related work, Section~\ref{sec:summary_theory} summarizes the theoretical results we derived, Sections~\ref{sec:numerical} and \ref{sec:real_world} show results of simulation and real-world case studies, and Section~\ref{sec:conclusions} discusses the results and our recommendations for practical use of SBC. 

\section{Related work}
\label{sec:related}

Prior contributions to validation of Bayesian computation can be roughly split into works that focus on the data-averaged posterior, those that focus on the SBC property, and other relevant works that do not directly invoke any self-consistency property.

\subsection{Data-averaged posterior}

The idea of using simulations via a generator to verify Bayesian computation can be traced back to \citet{geweke_getting_2004} who compared the moments of the prior and the data-averaged posterior distributions for multiple test quantities. That paper proposed to integrate a transition kernel for an MCMC sampler targeting $\pi_\text{post}$ into a scheme that samples $\pi_\text{joint}$ directly. This lets one obtain the data-averaged posterior from a single run of this sampler, potentially reducing the computational cost but increasing implementation burden. Geweke's formalism allows the test quantities to depend on data, although all the examples actually shown only depend on parameters. Comparing the mean vector and covariance matrix of the prior distribution and the data-averaged posterior distribution is also discussed by \citet{yu_assessment_2021}, who use repeated fits to build the data-averaged posterior. 

\citet{saad_family_2019} proposed a check for identity of two potentially high-dimensional discrete distributions by inspecting ranks generated by different total orderings over the parameter space. Their work is relevant in four ways: (1) it can be used to assess the data-averaged posterior criterion \eqref{eq:data_averaged_posterior_introduction} for discrete domains, (2) in close analogy to the use of test quantities in this paper, they focus on different orderings of the underlying domain and their different power to detect discrepancies, (3) they propose breaking ties in ordering uniformly at random in the same way we do, and (4) they prove some results that are analogous to or special cases of some of our theoretical results.

\subsection{SBC-like checks}

The identity of prior and posterior distributions conditional on a specific dataset as a tool to check computation was proposed by \citet{cook_gelman_rubin} and further refined by \citet{talts_sbc} who introduced SBC as it is currently used. Specific variants of SBC have been proposed for variational inference \citep{yao_yes_2018}, Bayes factors \citep{schad_sbc_bf}, and Gaussian processes \citep{mcleod_validating_2021}. SBC has also been used to validate likelihood-free inference methods including neural posterior approximators with normalizing flows \citep{radev2020bayesflow, radev2021bayesflow} and an SBC variant for checking joint calibration of such methods has been proposed and used in \citep{radev2023jana}.

\citet{gandy_unit_2021} proposed a procedure similar to SBC that can work with shorter sequences of Markov transitions than a full fit, reducing computational cost. This is, however, less relevant for algorithms that need a nontrivial warmup phase to adapt to the specific posterior (e.g., the adaptive Hamiltonian Monte Carlo sampler implemented in Stan; \citealp{stan_jss}). This is because warmup is a fixed cost that occurs during every model fit even if fewer post-warmup draws are needed. 
%For a lot of more complex models, the majority of computational effort is in fact in the warmup and it is unlikely that the result of the check without warm-up is informative about the actual distribution sampled post-warmup.

\citet{prangle_abc} proposed an SBC-like procedure for approximate Bayesian computation (ABC). They note that the possibility that the probabilistic program simply samples from the prior distribution cannot be ignored in this context and resolve this issue by separately inspecting ranks for some subsets of the simulated datasets. SBC is closely related to the \emph{coverage property} discussed by \citeauthor{prangle_abc}: when using $M$ posterior sample draws, SBC can be understood as checking for all posterior intervals of width $\alpha \in \left\{\frac{1}{M}, \dots, \frac{M - 1}{M}\right\}$ that the probability the interval contains the original simulated value of the test quantity is $\alpha$.

A broader framework for calibration of learning procedures has been proposed by \citet{cockayne_testing_2022}. There, Bayesian inference is just one example of procedures where calibration can be empirically verified with an SBC-like check. They distinguish between \emph{strong calibration} which corresponds to passing SBC (specifically continuous SBC as defined in Appendix A; \citeappendicesalt) for all measurable test quantities and \emph{weak calibration} which corresponds to having a correct data-averaged posterior (Equation~\ref{eq:data_averaged_posterior_introduction}). They however only consider test quantities that depend only on parameters. 

\subsection{Miscellaneous}

The problem of diagnosing and understanding computational issues is transformed by \citet{pmlr-v151-rendsburg22a}. Their approach tries to find a prior distribution that would make the probabilistic program and algorithm exactly match the generator.

Both \citet{grosse2016} and \citet{domke_easy_2021} proposed to use fits to multiple generated datasets to estimate the symmetrized KL-divergence between a distributional approximation to the correct posterior (e.g., Laplace or variational inference) and the true posterior. \citet{CusumanoTowner2017} described a method to compute the symmetrized KL-divergence between a gold standard posterior and an approximate posterior. 

\section{Theoretical results}
\label{sec:summary_theory}

The formalism required for SBC is relatively heavy on definitions and syntax, so for all results in this section we also provide plain English-language summaries. Proofs, expanded definitions (as required for proofs) and some additional discussion can be found in Appendix A \citeappendices. Some of the results for stochastic rank statistics (SRS) by \cite{saad_family_2019} can be understood as special cases of some of our theorems. Specifically, SRS assumes that the parameter space $\Theta$ is finite or countable and the goal is to directly compare two distributions, which is the same as assuming the data space $Y$ has just a single element.  We will refer to this special case as the \emph{SRS assumption}.

\begin{definition}[Posterior family, test quantity] A \emph{posterior family} $\phi$ assigns a normalized posterior density to each possible $y \in Y$. That is, a posterior family is a function $\phi: \Theta \times Y  \rightarrow  \mathbb{R^{+}}$ such that $\forall y: \int \mathrm{d}\theta \:\phi(\theta | y) = 1.$
For each $y$, we will denote the implied distribution over $\Theta$ as $\phi_y$. 
A \emph{test quantity} is any measurable function $f: \Theta \times Y  \to  \mathbb{R}$ 
\end{definition}

\begin{definition}[Sample rank CDF, sample Q, sample SBC]
Given a test quantity $f$, $M \in \mathbf{N}$ and a posterior family $\phi$. If $\theta_1, \dots, \theta_M \sim \phi_y$ we can define the following random variables:
\begin{align*}
   N_{\phi,f,\tilde\theta,y}^{\mathtt{less}} &:= \sum_{m=1}^M \mathbb{I} \left[f(\theta_m, y) < f(\tilde \theta, y) \right] \\
   N_{\phi,f,\tilde\theta,y}^{\mathtt{equals}} &:= \sum_{m=1}^M \mathbb{I} \left[f(\theta_m, y) = f(\tilde \theta, y) \right] \\
    K_{\phi,f,\tilde\theta,y} &\sim \mathrm{uniform}\left(0,  N_{\phi,f,\tilde\theta,y}^{\mathtt{equals}}\right) \\
    N_{\phi,f,\tilde\theta,y}^\mathtt{total} &:= N_{\phi,f,\tilde\theta,y}^{\mathtt{less}} + K_{\phi,f,\tilde\theta,y}.
\end{align*}    
The \emph{$M$-sample $Q$} is:
\begin{equation*}
  Q_{\phi,f}(i | y) := \int_\Theta \mathrm{d}\tilde\theta \: \pi_\text{post}(\tilde\theta|y)  \mbox{Pr} \left( N_{\phi,f,\tilde\theta,y}^\mathtt{total} \leq i \right)
\end{equation*}
We then say that \emph{$\phi$ passes $M$-sample SBC w.r.t.\ $f$} if, $\forall i \in 0, \dots, M - 1$,
\begin{equation*}
\int_Y \: \mathrm{d}y \: Q_{\phi,f}(i |y) \pi_\text{marg}(y) = \frac{i+1}{M + 1}.    
\end{equation*}
\end{definition}

This definition does not match immediately with the procedure we actually use to run SBC in practice but is more convenient for further analysis and is equivalent:

\begin{theorem}[Procedural definition of sample SBC]
\label{th:procedural_sbc}
A posterior family $\phi$ passes $M$-sample SBC w.r.t.\ $f$ if and only if given $\randvar{\tilde\theta} \sim \pi(\tilde\theta), \randvar{y} \sim \pi_\text{obs}(y | \randvar{\tilde\theta}), N^\mathtt{total} = N_{\phi,f,\randvar{\tilde\theta},\randvar{y}}^\mathtt{total}$ we have $N^\mathtt{total} \sim \mathrm{uniform}(0, M)$.    
\end{theorem}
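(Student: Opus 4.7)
The plan is to unpack both sides of the equivalence into statements about the CDF of $N^{\mathtt{total}}$ and show that the two conditions are literally the same equation, rewritten by Fubini after exchanging the order in which the joint prior $\times$ likelihood factorizes.

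First, I would observe the basic identity
\begin{equation*}
\pi_\text{prior}(\tilde\theta)\,\pi_\text{obs}(y \mid \tilde\theta) \;=\; \pi_\text{marg}(y)\,\pi_\text{post}(\tilde\theta \mid y),
\end{equation*}
which is just Bayes' rule and is valid $\pi_\text{marg}$-a.e. Using this, the law of $(\randvar{\tilde\theta}, \randvar{y})$ in the theorem statement can equivalently be described as first drawing $\randvar{y} \sim \pi_\text{marg}$ and then $\randvar{\tilde\theta} \sim \pi_\text{post}(\cdot \mid \randvar{y})$. This is the re-factorization that connects the ``procedural'' form to the $Q_{\phi,f}$-based form.

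Next I would compute, for each $i \in \{0, \dots, M-1\}$, the CDF of $N^{\mathtt{total}}$ under the procedural sampling scheme. Conditioning on $(\randvar{\tilde\theta}, \randvar{y}) = (\tilde\theta, y)$ freezes the construction of $N^{\mathtt{total}}_{\phi,f,\tilde\theta,y}$ as in the definition (the remaining randomness comes from $\theta_1,\dots,\theta_M \sim \phi_y$ and the uniform tie-breaking $K$), giving
\begin{equation*}
\Pr\!\left(N^{\mathtt{total}} \leq i\right) \;=\; \int_Y \mathrm{d} y \int_\Theta \mathrm{d}\tilde\theta \; \pi_\text{marg}(y)\,\pi_\text{post}(\tilde\theta \mid y)\,\Pr\!\left(N^{\mathtt{total}}_{\phi,f,\tilde\theta,y} \leq i\right).
\end{equation*}
Applying Fubini (justified since the integrand is a bounded nonnegative measurable function of $(\tilde\theta, y)$), the inner integral over $\tilde\theta$ is exactly $Q_{\phi,f}(i \mid y)$, so
\begin{equation*}
\Pr\!\left(N^{\mathtt{total}} \leq i\right) \;=\; \int_Y \mathrm{d} y \; Q_{\phi,f}(i \mid y)\,\pi_\text{marg}(y).
\end{equation*}

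Finally, I would note that $N^{\mathtt{total}} \sim \mathrm{uniform}(0, M)$ on the set $\{0, 1, \dots, M\}$ is equivalent to $\Pr(N^{\mathtt{total}} \leq i) = (i+1)/(M+1)$ for every $i \in \{0, \dots, M-1\}$ (the case $i = M$ is automatic because $N^{\mathtt{total}} \leq M$ by construction). Comparing this with the definition of passing $M$-sample SBC establishes the ``if and only if'' in both directions simultaneously. The only step requiring any care is the measurability/integrability bookkeeping needed to invoke Fubini and to ensure $\Pr(N^{\mathtt{total}}_{\phi,f,\tilde\theta,y} \leq i)$ is a jointly measurable function of $(\tilde\theta, y)$; this should follow from the measurability assumption on $f$ and the fact that the tie-breaking $K$ contributes only an independent uniform draw, so I do not expect it to be a real obstacle.
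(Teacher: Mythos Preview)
Your proposal is correct and follows essentially the same route as the paper: rewrite $\pi_\text{prior}(\tilde\theta)\,\pi_\text{obs}(y\mid\tilde\theta)=\pi_\text{marg}(y)\,\pi_\text{post}(\tilde\theta\mid y)$, express $\Pr(N^{\mathtt{total}}\le i)$ as a double integral that collapses to $\int_Y Q_{\phi,f}(i\mid y)\,\pi_\text{marg}(y)\,\mathrm{d}y$, and compare with the uniform CDF. Your version is a bit more explicit about Fubini, the $i=M$ case, and measurability, but the argument is the same.
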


Next, we define an idealized, continuous version of SBC that will be more amenable to theoretical analysis:

\begin{definition}[Continuous rank CDF, continuous $q$, continuous SBC]
We first define fitted CDF: $C^\pi_{\phi,f}: \bar{\mathbb{R}}\times Y\to [0,1]$, $C^\pi_{\phi,f}(s | y) := \int_{\Theta}\mathrm{d}\theta\,\mathbb{I}\left[f\left(\theta, y \right) \leq s\right]\phi\left(\theta| y\right)$ and fitted tie probability: $D^\pi_{\phi,f}: \bar{\mathbb{R}}\times Y\to [0,1]$, 
    $D^\pi_{\phi,f}(s | y) := \int_\Theta \mathrm{d}\theta \: \phi(\theta | y) \mathbb{I}\left[ f(\theta, y) = s \right]$

We then define the \emph{continuous } $q: [0,1] \times Y \to [0,1]$ as
\begin{equation*}
    q_{\phi,f}(x|y) := 
\int_\Theta \mathrm{d} \tilde\theta \: \pi_\text{post}(\tilde\theta | y) \mbox{Pr}\left(C_{\phi,f} \left(f \left(\left.\tilde\theta, y \right) \right| y \right) - U D_{\phi,f} \left(f \left(\left.\tilde\theta, y \right) \right| y \right) \leq x\right),
\end{equation*}
assuming $U$ is a random variable distributed uniformly over the $[0, 1]$ interval.

Finally, \emph{$\phi$ passes continuous SBC w.r.t.\ $f$} if $
\forall x \in [0, 1]: \int_Y \mathrm{d}y \: q_{\phi,f}(x|y) \pi_\text{marg}(y)  = x$.
\end{definition}

\subsection{Correctness}

With the definitions ready, we first establish that if a probabilistic program achieves uniform distribution of ranks in sample SBC for a given test quantity as $M \to \infty$, then it will satisfy continuous SBC as well.

\begin{theorem}[Sample SBC implies continuous SBC] 
\label{th:sample_implies_continous}
\ 
\begin{enumerate}
    \item For any fixed $y \in Y$ if as the number of sample draws $M \to \infty$ we have $\forall i \in \{0, \dots, M \}: Q_{\phi,f}(i | y) \to \frac{i + 1}{M + 1}$ then $\forall x \in [0, 1]: q_{\phi,f}(x | y) = x$.
    \item If as $M \to \infty$ we have $\forall i \in \{0, \dots, M \}: \int_Y \mathrm{d}y \, Q_{\phi,f}(i | y) \pi_\text{marg}(y) \to \frac{i + 1}{M + 1}$ then $\phi$ passes continuous SBC for $f$.
\end{enumerate}

\end{theorem}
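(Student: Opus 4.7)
The plan is to show that the rescaled sample rank $N^{\mathtt{total}}/M$ converges in distribution to the continuous rank statistic $R := C_{\phi,f}(f(\tilde\theta,y)\mid y) - U\, D_{\phi,f}(f(\tilde\theta,y)\mid y)$ as $M \to \infty$, and then use the hypothesis to pin down the limiting CDF. Observe that $Q_{\phi,f}(i\mid y)$ is exactly the CDF of $N^{\mathtt{total}}/M$ (after averaging over $\tilde\theta \sim \pi_\text{post}(\cdot\mid y)$) evaluated at $i/M$, while $q_{\phi,f}(x\mid y)$ is by definition the CDF of $R$ at $x$. So if the distributional convergence can be established, the conclusion will reduce to matching the limiting CDF identified from the hypothesis with $q_{\phi,f}$.

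For the distributional convergence, fix $y$ and $\tilde\theta$, and abbreviate $c = C_{\phi,f}(f(\tilde\theta,y)\mid y)$ and $d = D_{\phi,f}(f(\tilde\theta,y)\mid y)$. Since $\theta_1,\dots,\theta_M$ are i.i.d. $\sim\phi_y$, the strong law of large numbers gives $N^{\mathtt{less}}/M \to c-d$ and $N^{\mathtt{equals}}/M \to d$ almost surely. For the tie-breaking term, $K \sim \mathrm{uniform}(0, N^{\mathtt{equals}})$: if $d>0$ then $N^{\mathtt{equals}}\to\infty$ and the discrete uniform on $\{0,\dots,N^{\mathtt{equals}}\}$, rescaled by $N^{\mathtt{equals}}$, converges in distribution to $U\sim\mathrm{Uniform}(0,1)$, so $K/M = (K/N^{\mathtt{equals}})(N^{\mathtt{equals}}/M) \Rightarrow Ud$; if $d=0$ then $K/M\to 0 = Ud$. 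Either way, Slutsky's theorem yields $N^{\mathtt{total}}/M \Rightarrow (c-d)+Ud$, which has the same distribution as $c - U d$ (replace $U$ by $1-U$). Integrating over $\tilde\theta\sim\pi_\text{post}(\cdot\mid y)$ via dominated convergence (integrands are bounded by $1$) shows that the CDF of $N^{\mathtt{total}}/M$ converges to $q_{\phi,f}(\cdot\mid y)$ at every continuity point of the latter.

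To close Part 1, take any $x\in[0,1]$ that is a continuity point of $q_{\phi,f}(\cdot\mid y)$ and set $i_M=\lfloor xM\rfloor$. Then $\frac{i_M+1}{M+1}\to x$, while $Q_{\phi,f}(i_M\mid y)\to q_{\phi,f}(x\mid y)$ by the convergence just established, so the hypothesis forces $q_{\phi,f}(x\mid y)=x$. Because $q_{\phi,f}(\cdot\mid y)$ is a CDF (nondecreasing, right-continuous), its continuity points are dense in $[0,1]$, and monotonicity then forces $q_{\phi,f}(x\mid y)=x$ for every $x\in[0,1]$. Part 2 follows the same path: integrate once more over $y\sim\pi_\text{marg}$, using dominated convergence (bounded by $1$) to exchange the limit in $M$ with the $y$-integral, and conclude $\int_Y\mathrm{d}y\,q_{\phi,f}(x\mid y)\pi_\text{marg}(y)=x$ at every continuity point of the averaged CDF, extending to all $x\in[0,1]$ by monotonicity.

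The main obstacle I anticipate is the tie-breaking step: verifying that $K/N^{\mathtt{equals}}\Rightarrow\mathrm{Uniform}(0,1)$ and correctly merging this with the almost-sure limits of $N^{\mathtt{less}}/M$ and $N^{\mathtt{equals}}/M$ on the same probability space so that Slutsky applies and the joint limit agrees with $R$. A secondary technicality is justifying the exchange of limit and the integrals over $\tilde\theta$ and $y$; this should be harmless with dominated convergence but must be spelled out because the pointwise distributional convergence only holds at continuity points of the limit CDF, which themselves depend on $(\tilde\theta,y)$.
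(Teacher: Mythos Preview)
Your proposal is correct and takes a genuinely different route from the paper. The paper first invokes the tie-removal lemma (Lemma~\ref{le:tie_removal}) to reduce to the case where both $\pi$ and $\phi$ have no ties w.r.t.\ $f$; in that case $N^{\mathtt{total}}=N^{\mathtt{less}}$ is binomial with success probability $c=C_{\phi,f}(f(\tilde\theta,y)\mid y)$, and the paper uses the normal approximation (CLT) to show $R_{\phi,f}(\lfloor xM\rfloor\mid\tilde\theta,y)\to r_{\phi,f}(x\mid\tilde\theta,y)$ for all $x\neq c$, then applies dominated convergence. Your argument instead handles ties head-on: SLLN for $N^{\mathtt{less}}/M$ and $N^{\mathtt{equals}}/M$, the elementary fact that a rescaled discrete uniform converges to $\mathrm{Uniform}(0,1)$, and Slutsky to combine them into $N^{\mathtt{total}}/M\Rightarrow c-Ud$. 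This buys you a self-contained proof that does not rely on the auxiliary tie-removal construction, and the weak-convergence framing (identify the limit CDF at continuity points, then extend by monotonicity of a CDF) is arguably cleaner than the paper's pointwise-a.e.\ argument. The paper's route, once the lemma is in hand, has the advantage that the no-ties case makes $r_{\phi,f}(\cdot\mid\tilde\theta,y)$ a pure step function and the limiting computation fully explicit via the binomial CDF. Your anticipated obstacle about merging the Slutsky pieces is not a real issue: $N^{\mathtt{less}}/M$ and $N^{\mathtt{equals}}/M$ converge to constants, so Slutsky applies regardless of dependence with $K/N^{\mathtt{equals}}$. The secondary technicality you flag---that the exceptional set of $x$ where convergence may fail depends on $\tilde\theta$---is handled exactly as you say, since the \emph{integrated} CDF $q_{\phi,f}(\cdot\mid y)$ has at most countably many discontinuities, and at any continuity point $x$ the set $\{\tilde\theta:d(\tilde\theta)=0,\ c(\tilde\theta)=x\}$ is $\pi_{\text{post}}$-null, so DCT goes through there.
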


Theorem~3 then shows that if a probabilistic program passes continuous SBC for a  given test quantity, it will pass sample SBC for all $M$. We then show that passing continuous SBC (and thus our SBC variant) is a necessary condition for the correctness of posterior estimation (Theorem~4). That is, the correct posterior will always produce uniformly distributed ranks, including for test quantities that may have ties (see also Examples~5 and 6 in Appendix B \citeappendicesalt). A special case of Theorem 4 under the SRS assumption was proven as Theorem 3.1 of \cite{saad_family_2019}.

\begin{theorem}[Continuous SBC implies sample SBC] 
\label{th:continuous_implies_sample}
For all $M \in \mathbf{N}$:
\begin{enumerate}
    \item For any $y \in Y$, if $\forall x \in [0,1]: q_{\phi,f}(x|y) = x$ then $\forall i \in \{0, \dots, M - 1 \}: Q_{\phi,f}(i | y) = \frac{i + 1}{M + 1}$.
    \item If $\phi$ passes continuous SBC w.r.t.\ $f$, then $\phi$ passes $M$-sample SBC w.r.t.\ $f$. 
\end{enumerate}

\end{theorem}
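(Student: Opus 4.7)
The plan is to construct an explicit coupling that represents the randomized rank $N^\mathtt{total}_{\phi,f,\tilde\theta,y}$ as the rank of a single randomized-PIT value among $M$ iid Uniform$[0,1]$ copies, and then reduce the statement to an elementary Beta integral. Fix $y$ and define $V(\theta, u) := C^\pi_{\phi,f}(f(\theta, y)|y) - u\, D^\pi_{\phi,f}(f(\theta, y)|y)$ for $\theta \in \Theta$ and $u \in [0,1]$. The key preliminary fact is the standard randomized probability integral transform: when $\theta \sim \phi_y$ and $U$ is an independent $\mathrm{Unif}[0,1]$, $V(\theta, U) = (1-U)\,C^\pi_{\phi,f}(f(\theta,y)|y) + U\,(C^\pi_{\phi,f}(f(\theta,y)|y) - D^\pi_{\phi,f}(f(\theta,y)|y))$ is itself $\mathrm{Unif}[0,1]$.

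The coupling is the core of the argument. For fixed $\tilde\theta$, take $\theta_1, \ldots, \theta_M \sim \phi_y$ iid and $U_0, U_1, \ldots, U_M$ iid $\mathrm{Unif}[0,1]$, all mutually independent. Set $V_0 := V(\tilde\theta, U_0)$ and $V_m := V(\theta_m, U_m)$, and let $R := |\{m \geq 1 : V_m < V_0\}|$. Because $C^\pi_{\phi,f}(\cdot|y)$ is nondecreasing with jumps of size $D^\pi_{\phi,f}(\cdot|y)$, a short case analysis shows that almost surely $V_m - V_0$ has the same sign as $f(\theta_m, y) - f(\tilde\theta, y)$ whenever the latter is nonzero, while in the tied case $V_m < V_0$ iff $U_m > U_0$. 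Hence $R = N^\mathtt{less} + |\{m : f(\theta_m,y) = f(\tilde\theta, y),\ U_m > U_0\}|$. Conditional on $N^\mathtt{equals} = k$, exchangeability of the $k+1$ tied $U$'s (including $U_0$) forces the second summand to be uniform on $\{0, 1, \ldots, k\}$, which matches the distribution of $K$ in the definition of $N^\mathtt{total}$. Consequently $R$ and $N^\mathtt{total}_{\phi,f,\tilde\theta,y}$ share the same conditional law given $(\tilde\theta, y)$.

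Since $V_1, \ldots, V_M$ are iid $\mathrm{Unif}[0,1]$ independent of $V_0$, conditional on $V_0 = x$ the rank $R$ is $\mathrm{Bin}(M, x)$. Writing $B_i(x) := \sum_{j=0}^i \binom{M}{j} x^j (1-x)^{M-j}$, and noting that by definition $q_{\phi,f}(\cdot|y)$ is the CDF of $V_0$ when $\tilde\theta \sim \pi_\text{post}(\cdot|y)$ and $U_0 \sim \mathrm{Unif}[0,1]$, integrating the coupling identity over $\tilde\theta$ and $U_0$ yields the key formula $Q_{\phi,f}(i|y) = \int_{[0,1]} B_i(x)\, dq_{\phi,f}(x|y)$. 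For part~1, substituting $q_{\phi,f}(x|y) = x$ reduces this to $\int_0^1 B_i(x)\,dx$, and the Beta integral $\int_0^1 \binom{M}{j} x^j (1-x)^{M-j}\,dx = 1/(M+1)$ collapses the sum to $(i+1)/(M+1)$. For part~2, Fubini (integrands are bounded and nonnegative) gives $\int dy\, Q_{\phi,f}(i|y)\,\pi_\text{marg}(y) = \int_{[0,1]} B_i(x)\, d\bar q(x)$ with $\bar q(x) := \int dy\, q_{\phi,f}(x|y)\,\pi_\text{marg}(y)$; the continuous SBC hypothesis $\bar q(x) = x$ reduces the outer integral to Lebesgue measure and reproduces $(i+1)/(M+1)$.

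The main obstacle is the coupling step. Verifying that the sign of $V_m - V_0$ is almost surely determined by $f(\theta_m,y)$ versus $f(\tilde\theta, y)$ requires care when $C^\pi_{\phi,f}$ has flat regions (so two distinct $f$-values can give the same $C$ value), and the exchangeability argument must be made precise enough to conclude that the distribution of ties below $U_0$ is exactly the uniform $K$ prescribed in the definition of $N^\mathtt{total}$ rather than merely a plausible surrogate. Once this distributional identification is established, the remainder is standard manipulation of the randomized PIT and Beta integrals.
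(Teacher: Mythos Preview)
Your argument is correct and follows a genuinely different route from the paper's. The paper first invokes the tie-removal lemma (Lemma~\ref{le:tie_removal}) to reduce to the tie-free case, then uses invertibility of $q_{\phi,f}(\cdot\mid y)$ to perform the substitution $z=q_{\phi,f}^{-1}(x\mid y)$ and lands on the same Beta integral; for part~2 it additionally swaps $\partial/\partial z$ with the integral over $Y$. Your coupling via the randomized PIT handles ties natively---no auxiliary construction is needed---and produces the clean identity $Q_{\phi,f}(i\mid y)=\int_{[0,1]}B_i(x)\,dq_{\phi,f}(x\mid y)$, from which both parts drop out by substituting the hypothesis and using Fubini for the mixture over $y$. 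This is arguably more transparent and sidesteps the differentiation-under-the-integral step in the paper's treatment of part~2. The price you pay is the careful case analysis in the coupling: you should record explicitly that (i) when $s_m<\tilde s$ one has $V_m\le C_{\phi,f}(s_m\mid y)\le C_{\phi,f}(\tilde s^-\mid y)\le V_0$, and since each $V_m$ is exactly $\mathrm{Unif}[0,1]$ and independent of $V_0$, the event $V_m=V_0$ has probability zero, forcing strict inequality; and (ii) on $\{s_m=\tilde s,\ D_{\phi,f}(\tilde s\mid y)=0\}$ the tie set $B$ is almost surely empty because $\theta_m\sim\phi_y$. With those two observations made explicit your ``main obstacle'' dissolves and the proof is complete.
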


\begin{theorem}[Correct posterior and $q$] 
\label{th:SBC_correct}
For any $y \in Y$, if $\forall \theta \in \Theta: \phi(\theta | y) = \pi_\text{post}(\theta | y)$ then for any test quantity $f$ we have $\forall x \in [0, 1]: q_{\phi, f}(x | y) = x$.
\end{theorem}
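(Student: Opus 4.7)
The plan is to reduce the statement to the classical randomized probability integral transform. Substituting $\phi = \pi_\text{post}$ into the definitions, $F(s) := C_{\phi,f}(s | y)$ becomes the CDF of the random variable $Z := f(\tilde\theta, y)$ where $\tilde\theta \sim \pi_\text{post}(\cdot | y)$, and $D(s) := D_{\phi,f}(s | y) = \mbox{Pr}(Z = s)$ is its atomic-mass function. With $U \sim \mathrm{uniform}(0,1)$ independent of $\tilde\theta$, the definition of $q$ gives
\begin{equation*}
q_{\phi,f}(x | y) = \mbox{Pr}\bigl(F(Z) - U D(Z) \leq x\bigr).
\end{equation*}
Setting $V := 1 - U$, also uniform on $[0,1]$, this equals $\mbox{Pr}\bigl(F^{-}(Z) + V D(Z) \leq x\bigr)$ with $F^{-}(z) := F(z) - D(z)$, which is the canonical form of the randomized probability integral transform. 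So the theorem reduces to the claim that if $Z$ has CDF $F$ and $V$ is an independent uniform on $[0,1]$, then $W := F^{-}(Z) + V D(Z)$ is uniform on $[0,1]$.

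To establish this I would condition on $Z$. At a continuity point of $F$ we have $D(z) = 0$ and $W = F(z)$ is deterministic, so $\mbox{Pr}(W \leq x \mid Z = z) = \mathbb{I}[F(z) \leq x]$. At an atom with $D(z) > 0$, the conditional law of $W$ is uniform on the interval $[F^{-}(z), F(z)]$ of length $D(z)$, giving $\mbox{Pr}(W \leq x \mid Z = z) = \min\bigl(1, \max(0, (x - F^{-}(z))/D(z))\bigr)$. Splitting the law $\mu$ of $Z$ as $\mu_c + \sum_i p_i \delta_{z_i}$ into continuous and atomic parts, the continuous part contributes $\mu_c(\{z : F(z) \leq x\})$ to $\mbox{Pr}(W \leq x)$, each atom with $F(z_i) \leq x$ contributes its full mass $p_i$, and at most one atom can straddle the level $x$, contributing $x - F^{-}(z_i)$.

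The remaining step is to verify that these three pieces sum to $x$. The observation that makes the bookkeeping clean is that $F$ pushes $\mu_c$ forward to Lebesgue measure restricted to $[0,1] \setminus \bigcup_i (F^{-}(z_i), F(z_i))$ (the deterministic PIT applied to the continuous part), while the intervals $(F^{-}(z_i), F(z_i))$ are pairwise disjoint subintervals of $[0,1]$ of total length $\sum_i p_i$. Together these tile $[0,1]$ up to a null set, so intersecting with $[0,x]$ accounts for exactly the three contributions above and their total Lebesgue measure is $x$. The main obstacle is precisely this mixed-distribution bookkeeping: ensuring nothing is double-counted at the boundaries between atoms and the surrounding continuous or singular part. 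Once handled, $q_{\phi,f}(x|y) = x$ for every $x \in [0, 1]$ follows immediately.
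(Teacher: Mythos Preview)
Your approach is correct and takes a genuinely different route from the paper. The paper does not confront ties directly: it first invokes its tie-removal lemma (Lemma~1) to replace $(\pi,\phi,f)$ by an auxiliary model with an extra uniform parameter in which the test quantity has no point masses, and then uses the no-ties identity $q_{\phi,f}(x\mid y)=C_f\bigl(C_{\phi,f}^{-1}(x\mid y)\,\big|\,y\bigr)$ from Lemma~2. Since $\phi=\pi_\text{post}$ forces $C_{\phi,f}=C_f$, the composition collapses to $x$ in one line. You instead recognise $q_{\phi,f}(x\mid y)$ as the CDF of the randomized probability integral transform $F^{-}(Z)+V\,D(Z)$ and argue uniformity by conditioning on $Z$ and tiling $[0,1]$ with the atom intervals $(F^{-}(z_i),F(z_i)]$ together with the image of the continuous part. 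Your argument is self-contained and makes the probabilistic mechanism visible, while the paper's is modular: the tie-removal lemma is reused across several proofs, so its cost is amortised and the present theorem becomes almost immediate. One small caution: the statement that $F$ pushes $\mu_c$ forward to Lebesgue measure on $[0,1]\setminus\bigcup_i(F^{-}(z_i),F(z_i))$ is true but is not quite ``the deterministic PIT applied to the continuous part,'' since the standard PIT assumes a continuous $F$ whereas here you apply the full (jump-containing) $F$ to $\mu_c$; that step deserves either a short direct argument or a citation to the classical randomized-PIT result.
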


\subsection{Characterization of SBC failures}
Still, many incorrect posteriors will also pass SBC for any given test quantity, so in Theorem~5 we characterize those situations. 

\begin{theorem}[Characterization of SBC failures]
\label{th:characterization_failures}
For all $y \in Y$ and $s \in \mathbb{R}: \\ \int_{\Theta}\mathrm{d}\theta\,\mathbb{I}\left[f\left(\theta, y \right) \leq s\right]\phi\left(\theta| y\right) = \int_{\Theta}\mathrm{d}\theta\,\mathbb{I}\left[f\left(\theta, y \right) \leq s\right]\pi_{\text{post}}(\theta | y)$ if and only if \\ $\forall x \in [0, 1]: q_{\phi,f}(x | y) = x$.
\end{theorem}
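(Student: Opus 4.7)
The plan is to reformulate the statement as a two-directional randomized probability integral transform (PIT). Fix $y \in Y$ and abbreviate $F(\cdot) := C^\pi_{\phi,f}(\cdot \mid y)$, $G(\cdot) := C^\pi_{\pi_\text{post},f}(\cdot \mid y)$, and the jump function $D(\cdot) := D^\pi_{\phi,f}(\cdot \mid y) = F(\cdot) - F(\cdot^-)$. The left-hand side of the biconditional is exactly the equality of CDFs $F = G$. For the right-hand side, draw $\tilde\theta \sim \pi_\text{post}(\cdot \mid y)$ and an independent $U \sim \mathrm{uniform}(0,1)$: unrolling the definition of $q_{\phi,f}$, the random variable $X := f(\tilde\theta, y)$ has CDF $G$, and $q_{\phi,f}(\cdot \mid y)$ is the CDF of $Z := F(X) - U\,D(X)$. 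The claim thus reduces to: $F = G$ iff $Z \sim \mathrm{uniform}(0,1)$.

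\textbf{Forward direction.} Assuming $X \sim F$, this is the classical randomized PIT: conditionally on $X = x$, $Z$ is uniform on $[F(x^-), F(x)]$ (a point mass at $F(x)$ when $D(x) = 0$), and integrating against $dF$ yields $\mathbb{P}(Z \leq z) = z$ on $[0,1]$ by a direct case split, analogous to the argument that will appear in the proof of Theorem~4.

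\textbf{Reverse direction.} The conditional laws give
\[
\mathbb{P}(Z \leq z \mid X = x) = \begin{cases} \mathbb{I}[F(x) \leq z] & \text{if } D(x) = 0, \\ \max\bigl(0, \min\bigl(1, (z - F(x^-))/D(x)\bigr)\bigr) & \text{if } D(x) > 0. \end{cases}
\]
I then probe $G$ using three kinds of test values $z$. First, at a jump point $s$ of $F$, take $z_\alpha := F(s^-) + \alpha D(s)$ for $\alpha \in (0,1)$. A case split on the position of $x$ relative to $s$ shows the conditional probability is $1$ for $x < s$, equal to $\alpha$ for $x = s$, and $0$ for $x > s$ (using $F(x^-) \geq F(s) > z_\alpha$). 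Integrating against $G$ yields
\[
F(s^-) + \alpha D(s) = z_\alpha = \mathbb{P}(Z \leq z_\alpha) = G(s^-) + \alpha\, G(\{s\}),
\]
a linear identity in $\alpha$ valid on $(0,1)$; matching the constant term and the slope gives $G(s^-) = F(s^-)$ and $G(\{s\}) = D(s)$, hence $G(s) = F(s)$ at every jump point. Second, at a continuity point $s$ of $F$ whose right flat region $\{t > s : F(t) = F(s)\}$ is empty, choosing $z = F(s)$ collapses the case analysis to $\mathbb{P}(Z \leq F(s)) = G(s)$, so $G(s) = F(s)$. Third, for $s$ in the interior of a flat region $[a,b]$ of $F$ at height $c$, both endpoints fall under one of the first two cases, so $G(a) = G(b) = c$, and monotonicity of $G$ forces $G(s) = c = F(s)$. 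Together these cover all $s \in \mathbb{R}$.

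\textbf{Main obstacle.} The bookkeeping around flat regions of $F$ is what makes the reverse direction more than a one-line exercise: the set $\{t > s : F(t) = F(s)\}$ contributes extra mass to $\mathbb{P}(Z \leq F(s))$ under $G$, preventing a naive reading off $G(s) = F(s)$ at continuity points from a single $z$. The key idea is to first pin down $G$ at jump points of $F$ by varying $\alpha$ within a jump interval (extracting two equations from one location) and only then return to the continuity points via the right-most endpoints of flat regions.
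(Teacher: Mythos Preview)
Your approach differs from the paper's in a useful way. The paper reduces to the no-ties case via the tie-removal construction (Lemma~1 in Appendix~A), builds a pushforward model on $\mathbb{R}$, and then observes that without ties $q_{\phi,f}(x\mid y)=C_f\bigl(C_{\phi,f}^{-1}(x\mid y)\mid y\bigr)$, so $q_{\phi,f}(\cdot\mid y)=\mathrm{id}$ forces $C_{\phi,f}^{-1}$ to invert $C_f$ and hence $F=G$. You instead work directly with the randomized PIT on the original, possibly discontinuous, CDFs and probe $z\mapsto\mathbb{P}(Z\le z)$ at well-chosen test values. The paper's route is shorter because tie removal absorbs all the case analysis into a single lemma; your route is self-contained and makes the mechanism (matching jump heights by varying $\alpha$ inside a jump interval) explicit.

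There is, however, a genuine gap in your case~3. You assert that both endpoints of a flat region $[a,b]$ at height $c$ fall under case~1 or case~2. The right endpoint does: either the level set is closed at $b$, in which case $F(b^-)=F(b)=c$ and $b$ is a continuity point with empty right flat region (case~2), or the level set is $[a,b)$ with $F(b)>c=F(b^-)$ and $b$ is a jump point (case~1); either way you get $G(b)=c$ or $G(b^-)=c$, hence $G(s)\le c$ for $s\in[a,b)$. But the left endpoint $a$ need not be a jump point, and if it is a continuity point its right flat region is exactly $(a,b]\neq\emptyset$, so neither case~1 nor case~2 applies to $a$; your sandwich $G(a)\le G(s)\le G(b)$ is missing its lower jaw. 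The fix is short: for $t<a$, the level set of $F$ at height $F(t)<c$ has right endpoint $b(t)\in[t,a)$, and \emph{that} point does fall under case~1 or~2, giving $G(b(t))=F(b(t))$; since $t\le b(t)<a$, letting $t\uparrow a$ yields $G(a)\ge G(a^-)\ge\lim_{t\uparrow a}G(b(t))=\lim_{t\uparrow a}F(b(t))=F(a^-)=c$. Combined with $G(s)\le c$ this closes case~3 (and also handles $s=a$ itself, which you currently leave out of all three cases).
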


Not only does the correct posterior yield a uniform distribution of ranks when averaging over the whole data space $Y$, but the ranks are uniformly distributed even when we only consider simulations that yielded data in some $\bar{Y} \subset Y$. The reverse implication also holds: when the ranks are uniformly distributed for all subsets of the data space $\bar{Y} \subset Y$, then the implied posterior distribution of the test quantity under investigation has to be exactly correct. In other words, whenever SBC ``fails'' and the implied posterior distribution of a given test quantity is incorrect although the rank distribution is uniform, we can find a subset of the data space, where the ranks are non-uniform. It just so happens that all the deviations in various subsets cancel each other out perfectly.

An obvious application of Theorem~5 is that we could partition our simulations based on some features of the data space and investigate uniformity separately for each part, similarly to the procedure suggested by \citet{prangle_abc}. This however quickly runs into issues of multiple testing due to the lower number of simulations in each part. It is thus in our experience not practical except for the special case that interest lies only in some subset of the data space, so that the SBC checks can focus only on that data space of interest. This is a form of rejection sampling and can be practically useful if it is easy to formulate a criterion that constrains plausible real data sets but hard to construct a defensible prior distribution that would enforce this criterion implicitly. For example, prior information can be available on the plausible variance of an outcome across the whole population, which may be hard to express as a prior on coefficients associated with predictors (but see the approaches for linear models discussed in \citealp{zhang_bayesian_2022} and \citealp{aguilar2022intuitive}). 

\subsection{Data-dependent test quantities}
\label{sec:data_data_dependent_quantities}
The characterization of SBC failures discussed above  provides intuition why test quantities that depend on data are useful: If SBC passes for a test quantity $f$, but the posterior is in fact incorrect, we can always pick a test quantity $g$ that combines $f$ with some aspect of the data and ensures that the discrepancies in various parts of data space add up instead of canceling out. For example, we could have over-abundance of low ranks and under-abundance of high ranks in $Y_1 \subset Y$ and a matching under-abundance of low ranks and over-abundance of high ranks in $Y_2 \subset Y$. Setting 
$$
g(\theta, y) = \begin{cases}-f(\theta, y) & y \in Y_1 \\ f(\theta, y) & \text{ otherwise} \end{cases}
$$ 
will ensure over-abundance of high ranks in both $Y_1$ and $Y_2$. Since such a test quantity uses all the simulations, we do not lose power from reduced number of simulations.

An even stronger reason to use data-dependent test quantities is that they make SBC in some sense complete: If there is any difference between the correct posterior and the posterior implemented by the probabilistic program, there will exist a data-dependent test quantity that fails SBC. In fact, we can construct a specific test quantity that detects the failures, which is the ratio of the correct posterior density to the posterior density actually implemented by the probabilistic program. 

\begin{theorem}[Density ratio]
\label{th:density_ratio}
For any posterior family $\phi$, take $g\left(\theta,y\right)=\frac{\pi_{\mathrm{post}}\left(\theta\mid y\right)}{\phi\left(\theta|y\right)}$. Then $\phi$ passes continuous SBC w.r.t.\ $g$ if and only if $\pi_{\mathrm{post}}$ and $\phi$ are equal except for a set of measure $0$:

\begin{equation*}
\int_Y \mathrm{d}y \int_\Theta \mathrm{d}\theta \: \pi_\text{joint}(y, \theta) \mathbb{I}\left[ \pi_\text{post}(\theta | y) \neq \phi(\theta | y)\right] = 0. 
\end{equation*}
\end{theorem}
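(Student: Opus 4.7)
The backward direction (equality of posteriors $\pi_\text{joint}$-a.e.\ implies continuous SBC w.r.t.\ $g$) is immediate from Theorem~4, since modifying $\phi$ on a $\pi_\text{joint}$-null set leaves every integral defining $q_{\phi,g}$ unchanged. The substantive direction is the forward one, and the proof hinges on a stochastic-dominance identity for the density ratio.

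Fix $y$ and let $F_\phi(s|y) = C^\pi_{\phi,g}(s|y)$ and $F_\pi(s|y)$ denote the CDFs of $g(\cdot, y) = \pi_\text{post}(\cdot|y)/\phi(\cdot|y)$ under $\phi_y$ and under $\pi_\text{post}(\cdot|y)$, respectively. Since $\pi_\text{post}(\theta|y) = g(\theta,y)\,\phi(\theta|y)$, a direct change of measure gives
\[
F_\pi(s|y) - F_\phi(s|y) \;=\; \int_\Theta \bigl(g(\theta,y) - 1\bigr)\,\mathbb{I}[g(\theta,y) \le s]\,\phi(\theta|y)\,\mathrm{d}\theta.
\]
For $s \le 1$ the integrand is pointwise non-positive; for $s > 1$, using $\int (g-1)\phi\,\mathrm{d}\theta = 0$, the right-hand side rewrites as $-\int (g-1)\,\mathbb{I}[g > s]\,\phi\,\mathrm{d}\theta \le 0$. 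Hence $F_\pi(s|y) \le F_\phi(s|y)$ for every $s$ and $y$, and translating through the definition of $q_{\phi,g}$ (using the $C_{\phi,g} - U D_{\phi,g}$ representation to absorb any ties) yields the pointwise bound $q_{\phi,g}(x|y) \le x$.

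Continuous SBC asserts $\int_Y q_{\phi,g}(x|y)\,\pi_\text{marg}(y)\,\mathrm{d}y = x$ for every $x \in [0,1]$. Combined with the pointwise inequality $q_{\phi,g}(x|y) \le x$, this forces $q_{\phi,g}(x|y) = x$ for $\pi_\text{marg}$-a.e.\ $y$ at each fixed $x$; passing to a countable dense set of $x$ and invoking monotonicity in $x$ upgrades this to: for $\pi_\text{marg}$-a.e.\ $y$, $q_{\phi,g}(\cdot|y) = \mathrm{id}$ on $[0,1]$. Theorem~5 then gives $F_\pi(\cdot|y) \equiv F_\phi(\cdot|y)$ for such $y$. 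Plugging this equality back into the key identity forces the signed integrand $(g-1)\,\mathbb{I}[g \le s]\,\phi$ to vanish $\phi_y$-a.e.\ for every $s < 1$ (non-positive integrand with zero integral), and the symmetric rewrite handles $s > 1$; together, $\phi(\{g \ne 1\}|y) = 0$. Taking $s \to \infty$ in $F_\pi = F_\phi$ additionally excludes $\pi_\text{post}$-mass on the singular set $\{\phi(\cdot|y) = 0\}$. Consolidating over $y$ yields $\pi_\text{joint}(\{\pi_\text{post} \ne \phi\}) = 0$, as claimed.

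The hardest part will be the measure-theoretic bookkeeping: verifying that the $C - UD$ randomised-CDF representation preserves the bound $q_{\phi,g}(x|y) \le x$ in the presence of atoms of $g$, and handling the singular set $\{\phi(\cdot|y) = 0,\,\pi_\text{post}(\cdot|y) > 0\}$ (where $g = +\infty$ formally) so that both the stochastic-dominance step and the ``equal CDFs imply equal densities'' step go through without gaps. The underlying intuition is clean, however: the density ratio stochastically dominates itself when one swaps the base measure from $\phi_y$ to $\pi_\text{post}(\cdot|y)$, and the only way that dominance can average to equality is for it to be trivial.
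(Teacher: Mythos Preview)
Your approach is essentially the same as the paper's: both establish the stochastic-dominance inequality $C_{\phi,g}(s\mid y)\ge C_g(s\mid y)$ (your $F_\pi\le F_\phi$) by the same case split at $s=1$, deduce $q_{\phi,g}(x\mid y)\le x$, and then use the SBC averaging condition to force equality. The paper handles ties via its explicit Lemma on the piecewise-linear form of $q$ rather than your $C-UD$ argument, and it concludes somewhat more tersely by noting the dominance is strict whenever $g\ne 1$ on a set of positive mass, whereas you route through Theorem~5 and a countable-dense-set / monotonicity step; your version is a bit more careful about the measure-theoretic bookkeeping (in particular the explicit treatment of the singular set $\{\phi=0,\pi_\text{post}>0\}$, which the paper leaves implicit).
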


Here $g$ is not a practical test quantity, as it (a) depends on the specific probabilistic program we implemented and (b) requires that we already have the correct posterior density. However our empirical results in this paper, and our experience with using SBC in model development more generally, shows that the model likelihood $\pi_\text{obs}(y | \theta)$ is frequently useful as a general-purpose test quantity. This makes sense intuitively, as the likelihood is an important contributor to the density ratio. In their Theorem 3.1, \cite{saad_family_2019} proved an analogous result  under the SRS assumption, although relying on a different test quantity. Under the SRS assumption, they also show that the \emph{difference} of the two densities will fail $M$-sample SBC for all $M > 1$ (their Theorem 3.6) and has maximum power against discrepancies (their Theorem 3.7).

\subsection{Ignoring data}
We generalize the result that probabilistic programs sampling from the prior distribution will pass SBC against all test quantities that do not depend on data. 

\begin{theorem}[Incomplete use of data] 
\label{th:incomplete_use_data}
Assume a model $\pi$ with observation space $Y$ and parameter space $\Theta$, a space $Y^\prime$, and a measurable function $t: Y \rightarrow Y^\prime$. Denote the set $t^{-1}(y^\prime) = \{y \in Y: t(y) = y^\prime\}$. Consider the model $\pi^\prime$ with parameter space $\Theta$ and observation space $Y^\prime$ such that for all $\theta \in \Theta, y^\prime \in Y^\prime$:
\begin{align*}
\pi^\prime_\text{prior}(\theta) &= \pi_\text{prior}(\theta) \\
\pi^\prime_\text{obs}(y^\prime |\theta) &= \int_{t^{-1}(y^\prime)} \mathrm{d}y \: \pi_\text{obs}(y | \theta).
\end{align*}

Assume a test quantity $f^\prime: Y^\prime \times \Theta \rightarrow \mathbb{R}$. 
If we have a posterior family $\phi^\prime$ on $Y^\prime, \Theta$ such that $\phi^\prime$ passes continuous SBC w.r.t.\ $f^\prime$ and set test quantity $f: Y \times \Theta \rightarrow \mathbb{R}, f(\theta, y) = f^\prime(\theta, t(y))$ and posterior family $\phi$ on $\Theta, Y$ such that $\phi(\theta | y) = \phi^\prime(\theta | t(y))$ then $\phi$ passes continuous SBC w.r.t.\ $f$.
\end{theorem}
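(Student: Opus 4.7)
The aim is to verify the continuous SBC identity
\[
\int_Y \mathrm{d}y\: q_{\phi,f}(x | y)\,\pi_\text{marg}(y)=x
\]
by reducing it to the corresponding identity for $\phi^\prime$ and $f^\prime$ over $Y^\prime$, which holds by hypothesis. The key observation is that every quantity entering $q_{\phi,f}(x | y)$ \emph{except the true posterior density $\pi_\text{post}(\tilde\theta|y)$} depends on $y$ only through $t(y)$; once this true posterior is replaced by the joint density via Bayes' rule, the remaining $y$-integral can be collapsed along the fibers $t^{-1}(y^\prime)$ and turned into an integral over $Y^\prime$ that exactly assembles into $q_{\phi^\prime,f^\prime}$.

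\textbf{Step 1 (factoring through $t$).} Because $f(\theta,y)=f^\prime(\theta,t(y))$ and $\phi(\theta | y)=\phi^\prime(\theta | t(y))$, substituting directly into the definitions of the fitted CDF and tie probability gives $C^\pi_{\phi,f}(s | y)=C^{\pi^\prime}_{\phi^\prime,f^\prime}(s | t(y))$ and $D^\pi_{\phi,f}(s | y)=D^{\pi^\prime}_{\phi^\prime,f^\prime}(s | t(y))$. Introducing the auxiliary function
\[
h(\tilde\theta,y^\prime):=\mbox{Pr}\!\left(C^{\pi^\prime}_{\phi^\prime,f^\prime}\!\left(f^\prime(\tilde\theta,y^\prime) | y^\prime\right)-U\,D^{\pi^\prime}_{\phi^\prime,f^\prime}\!\left(f^\prime(\tilde\theta,y^\prime) | y^\prime\right)\leq x\right)
\]
with $U\sim\mathrm{uniform}(0,1)$, one immediately has
\[
q_{\phi,f}(x | y)=\int_\Theta\mathrm{d}\tilde\theta\:\pi_\text{post}(\tilde\theta | y)\,h(\tilde\theta,t(y)),\quad q_{\phi^\prime,f^\prime}(x | y^\prime)=\int_\Theta\mathrm{d}\tilde\theta\:\pi^\prime_\text{post}(\tilde\theta | y^\prime)\,h(\tilde\theta,y^\prime).
\]

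\textbf{Step 2 (Bayes and fiber marginalization).} Using $\pi_\text{post}(\tilde\theta | y)\,\pi_\text{marg}(y)=\pi_\text{prior}(\tilde\theta)\,\pi_\text{obs}(y | \tilde\theta)$ together with Fubini,
\[
\int_Y\mathrm{d}y\:q_{\phi,f}(x | y)\,\pi_\text{marg}(y)=\int_\Theta\mathrm{d}\tilde\theta\:\pi_\text{prior}(\tilde\theta)\int_Y\mathrm{d}y\:\pi_\text{obs}(y | \tilde\theta)\,h(\tilde\theta,t(y)).
\]
Since the inner integrand depends on $y$ only through $t(y)$, disintegrating along $t$ and using $\pi^\prime_\text{obs}(y^\prime | \tilde\theta)=\int_{t^{-1}(y^\prime)}\mathrm{d}y\:\pi_\text{obs}(y | \tilde\theta)$ turns this into
\[
\int_\Theta\mathrm{d}\tilde\theta\:\pi_\text{prior}(\tilde\theta)\int_{Y^\prime}\mathrm{d}y^\prime\:\pi^\prime_\text{obs}(y^\prime | \tilde\theta)\,h(\tilde\theta,y^\prime).
\]
Applying Bayes' rule within the reduced model ($\pi^\prime_\text{prior}=\pi_\text{prior}$ and $\pi^\prime_\text{prior}(\tilde\theta)\pi^\prime_\text{obs}(y^\prime | \tilde\theta)=\pi^\prime_\text{marg}(y^\prime)\pi^\prime_\text{post}(\tilde\theta | y^\prime)$) reassembles the right-hand side as $\int_{Y^\prime}\mathrm{d}y^\prime\:q_{\phi^\prime,f^\prime}(x | y^\prime)\,\pi^\prime_\text{marg}(y^\prime)$, which equals $x$ by the hypothesis that $\phi^\prime$ passes continuous SBC w.r.t.\ $f^\prime$.

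\textbf{Main obstacle.} The only delicate step is the fiber marginalization in Step~2, which implicitly treats $\int_{t^{-1}(y^\prime)}\mathrm{d}y$ as a disintegration of the base measure on $Y$ relative to $t$. This is automatic under the measurability hypothesis on $t$ and in the standard product/Lebesgue settings the paper works in, but the write-up must make clear that the stated formula for $\pi^\prime_\text{obs}$ is exactly the pushforward marginal of $\pi_\text{obs}(y | \tilde\theta)\,\mathrm{d}y$ along $t$, so that interchanging the $\mathrm{d}y$-integral with the trivial conditioning on $t(y)=y^\prime$ is legitimate. Everything else is direct Fubini, Bayes, and substitution.
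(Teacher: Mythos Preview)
Your proof is correct and takes essentially the same approach as the paper: factor the fitted CDF through $t$, use Bayes' rule to trade $\pi_\text{post}\,\pi_\text{marg}$ for $\pi_\text{prior}\,\pi_\text{obs}$, collapse the $y$-integral along the fibers $t^{-1}(y^\prime)$ to recover $\pi^\prime_\text{obs}$, and reassemble into the reduced-model SBC identity. The only minor difference is that the paper routes through an intermediate integral-representation lemma that assumes no ties, whereas you work directly with the full definition of $q_{\phi,f}$ (including the $U\,D_{\phi,f}$ term) and so handle the tied case without a separate reduction.
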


Here, the choice of $t$ lets us choose which aspects of the data are ignored, if $\forall y \in Y: t(y) = 1$, we recover the case where all data are ignored:  $\pi^\prime_\text{post}(\theta | y) = \pi_\text{prior}(\theta)$ and thus $\phi(\theta |y) = \pi_\text{prior}(\theta)$ will pass SBC w.r.t.\ $f$. If $t$ is a bijection, no information is lost. Other choices of $t$ then let us interpolate between those two extremes, for example ignoring just a subset of the data points, treating some data points as censored, rounding all data to integers.

\subsection{Detailed analysis of simple models and test quantities}
\label{sec:examples_summary}
Appendix B \citeappendices \ provides full theoretical analysis of SBC for simple models and test quantities where we can actually characterize all possible posterior distributions that will satisfy SBC. This is aimed at providing intuition on what SBC actually does and also serves as counterexamples to some claims.
In some literature (e.g., \citealp{lee_calibration_2019}, \citealp{pmlr-v130-lueckmann21a}, \citealp{schad_sbc_bf}, \citealp{grinsztajn_bayesian_2021}, \citealp{ramesh_gatsbi_2022},  \citealp{saad_family_2019}), it is assumed that SBC is based on the data-averaged posterior \eqref{eq:data_averaged_posterior_introduction}.
We show that this is incorrect: Example~2 not only explicitly constructs posterior distributions that will satisfy \eqref{eq:data_averaged_posterior_introduction} for some test quantity while not passing SBC, but also posterior distributions that pass SBC while not satisfying \eqref{eq:data_averaged_posterior_introduction}. One possibly more general lesson is that SBC is most naturally understood as enforcing constraints on the quantile function of the test quantity while having a correct data-averaged posterior is most naturally seen as constraint on the density of the test quantity.

This implies there might be some gains from using both the data-averaged posterior and SBC when verifying the correctness of Bayesian computation. We however suspect that the additional practical benefit of using the data-averaged posterior is small in the sense that the incorrect posteriors that pass SBC but are ruled out by Equation \eqref{eq:data_averaged_posterior_introduction} are mostly contrived and unlikely to be a result of a computational problem or an inadvertent mistake. Lemma 2.19 of \citet{cockayne_testing_2022} proves that if a posterior passes SBC for \emph{all possible} test quantities that do not depend on data, it will have the correct data-averaged posterior for all test quantities that do not depend on data, so SBC is stronger at least in the limit of using infinitely many test quantities. We leave a more thorough examination of the relationship between data-averaged posterior and SBC as future work. 

Additionally, we show the behavior of SBC when ties are present, whether induced by a test quantity (Example~5) or by discrete parameter space (Example~6). Discrete parameter spaces may induce additional structure on the space of posterior families passing SBC.

\subsection{Monotonic transformations of test quantities}
\label{sec:monotonic}
Finally, transforming a test quantity by a strictly monotonic function produces equivalent SBC results:
\begin{theorem}[Monotonic transformations]
\label{th:monotonous_transformations}
Assume test quantities $f,g$ and a set of measurable functions $h_y: \mathbb{R} \to \mathbb{R}$ such that $\forall y \in Y ,\theta \in \Theta: f(\theta, y) = h_y(g(\theta_1, y))$ and a posterior family $\phi$. If either for all $y \in Y: h_y$ is strictly increasing or  for all $y \in Y: h_y$ is strictly decreasing then 1) $\phi$ passes continuous SBC w.r.t.\ $f$ if and only if $\phi$ passes continuous SBC w.r.t.\ $g$ and 2) $\phi$ passes $M$-sample SBC w.r.t.\ $f$ if and only if $\phi$ passes $M$-sample SBC w.r.t.\ $g$.    
\end{theorem}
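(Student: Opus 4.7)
The plan is to observe that SBC ranks --- both in the sample form and in the continuous form --- depend on the test-quantity values only through their order structure together with the atomic mass at ties, so a strictly monotonic $h_y$ should leave the check unchanged up to an obvious reflection in the decreasing case. I would verify this by directly computing how the rank-producing quantities ($N^\mathtt{less}, N^\mathtt{equals}$ and $C^\pi_{\phi,\cdot}, D^\pi_{\phi,\cdot}$) transform under $h_y$, and then noting that the target distribution (discrete uniform on $\{0,\dots,M\}$ for sample SBC, uniform on $[0,1]$ for continuous SBC) is invariant under the induced transformation.

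In the strictly increasing case, $h_y$ is an order-preserving injection, so we have the pointwise indicator identities $\mathbb{I}[f(\theta_m,y) < f(\tilde\theta,y)]=\mathbb{I}[g(\theta_m,y) < g(\tilde\theta,y)]$ and $\mathbb{I}[f(\theta_m,y) = f(\tilde\theta,y)]=\mathbb{I}[g(\theta_m,y) = g(\tilde\theta,y)]$. This gives $N^\mathtt{less}_{\phi,f,\tilde\theta,y}=N^\mathtt{less}_{\phi,g,\tilde\theta,y}$ and $N^\mathtt{equals}_{\phi,f,\tilde\theta,y}=N^\mathtt{equals}_{\phi,g,\tilde\theta,y}$, hence $N^\mathtt{total}$ has the same conditional law for $f$ and $g$, and the $M$-sample equivalence follows from Theorem~1. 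The same monotonicity yields $C^\pi_{\phi,f}(f(\tilde\theta,y)|y)=C^\pi_{\phi,g}(g(\tilde\theta,y)|y)$ and $D^\pi_{\phi,f}(f(\tilde\theta,y)|y)=D^\pi_{\phi,g}(g(\tilde\theta,y)|y)$, so $q_{\phi,f}(x|y)\equiv q_{\phi,g}(x|y)$ pointwise in $y$, and continuous SBC equivalence follows immediately.

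In the strictly decreasing case, strict inequalities flip while ties are preserved, so $N^\mathtt{less}_{\phi,f,\tilde\theta,y}=M-N^\mathtt{less}_{\phi,g,\tilde\theta,y}-N^\mathtt{equals}_{\phi,g,\tilde\theta,y}$ and $N^\mathtt{equals}_{\phi,f,\tilde\theta,y}=N^\mathtt{equals}_{\phi,g,\tilde\theta,y}$. Because the tie-breaker $K$ is discrete uniform on $\{0,\dots,N^\mathtt{equals}\}$, whose law is invariant under $k\mapsto N^\mathtt{equals}-k$, conditioning on the simulated draws gives $N^\mathtt{total}_{\phi,f,\tilde\theta,y}\stackrel{d}{=}M-N^\mathtt{total}_{\phi,g,\tilde\theta,y}$, and sample SBC equivalence follows from the invariance of $\mathrm{uniform}(0,M)$ under $n\mapsto M-n$. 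For continuous SBC, using $\mbox{Pr}_{\theta\sim\phi_y}(g(\theta,y)<g(\tilde\theta,y))=C^\pi_{\phi,g}(g(\tilde\theta,y)|y)-D^\pi_{\phi,g}(g(\tilde\theta,y)|y)$, a short computation yields
$$C^\pi_{\phi,f}(f(\tilde\theta,y)|y)-U\,D^\pi_{\phi,f}(f(\tilde\theta,y)|y) = 1-\bigl(C^\pi_{\phi,g}(g(\tilde\theta,y)|y)-(1-U)\,D^\pi_{\phi,g}(g(\tilde\theta,y)|y)\bigr).$$
Since $1-U\stackrel{d}{=}U$, the random variable $Z_f$ defining $q_{\phi,f}$ under the joint law $\pi_\text{marg}(y)\pi_\text{post}(\tilde\theta|y)$ satisfies $Z_f\stackrel{d}{=}1-Z_g$, and since continuous SBC is equivalent to the relevant $Z$ being uniform on $[0,1]$ (and uniformity is preserved under $z\mapsto 1-z$), the check for $f$ passes if and only if it passes for $g$.

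The main obstacle is the decreasing continuous case, where reversing an ``$\leq$'' indicator under a strictly decreasing map produces a strict ``$<$'' indicator, whose probability under $\phi_y$ differs from $C^\pi_{\phi,g}$ by exactly the tie mass $D^\pi_{\phi,g}$. The key algebraic step is to rearrange $C_f-UD_f$ into the form $1-(C_g-(1-U)D_g)$, so that the symmetry $1-U\stackrel{d}{=}U$ of the independent tie-breaker can absorb the sign flip. Once this identity is isolated, the rest reduces to the symmetries of the uniform distribution under $z\mapsto 1-z$ and $n\mapsto M-n$ together with the procedural characterization (Theorem~1) and the definition of continuous SBC.
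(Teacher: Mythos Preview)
Your proposal is correct and follows essentially the same route as the paper. The paper factors the argument through two lemmas---one showing that order-preserving maps leave the rank CDFs $r_{\phi,\cdot}$ and $R_{\phi,\cdot}$ unchanged, and one computing exactly how negation acts on them---and then integrates to $q$ and $Q$ to check that uniformity survives the reflection; you carry out the same indicator/CDF computations but package the decreasing case as the distributional identities $N^{\mathtt{total}}_{\phi,f}\stackrel{d}{=}M-N^{\mathtt{total}}_{\phi,g}$ and $Z_f\stackrel{d}{=}1-Z_g$, invoking Theorem~1 and the reflection symmetry of the uniform directly rather than passing through the intermediate CDFs.
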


The result cannot be easily strengthened as many non-monotonic transformations lead to different, non-equivalent SBC checks. Example~3 shows that flipping the ordering of values only for some subset of the data space yields a different SBC check. Example~4 shows that we can also obtain a different check if we combine a test quantity with a non-monotonic bijection, and Example~5 shows the same for the case when a whole range of values is projected onto a single point. In all those examples, the transformed test quantities rule out some sets of posteriors that pass SBC for the original quantity, but there are also sets of posteriors not passing SBC for the original quantity but passing SBC for the transformed quantity.

\section{Numerical case studies}
\label{sec:numerical}

The theoretical analysis in previous section primarily deals with the behavior of SBC in the limit of both infinitely many posterior draws per fit and infinitely many simulations. Here, we further support the results by numerical experiments which let us understand not only whether a certain problem is detectable at all but also how much computational effort is required for SBC to detect the problem.

\subsection{Setup}

To illustrate some of the properties of various types of test quantities, we use a simple multivariate normal model,
\begin{align}
\mathbf{\mu} &\sim \mbox{MVN}(0, \mathbf{\Sigma}) \notag \\ 
\mathbf{y}_1, \ldots, \mathbf{y}_n &\sim \mbox{MVN}(\mathbf{\mu}, \mathbf{\Sigma}) \notag\\
\mathbf{\Sigma} &= \left(\begin{matrix}
 1 & 0.8 \\
 0.8 & 1 \\
\end{matrix}\right),
\end{align}
%
% Note: Using Sigma as both prior covariance of parameters and covariance of observations is required to get simple analytical solution, see e.g. https://en.wikipedia.org/wiki/Multivariate_normal_distribution#Bayesian_inference
%
where the two-element vector $\mu$ is the target of inference and $\mathbf{y}_1, \ldots, \mathbf{y}_n$ are observed.
Introducing $\bar{\mathbf{y}} = \frac{1}{n}\sum_{i = 1}^{n} \mathbf{y}_i$, the correct analytic posterior is $\mbox{MVN}\left(\frac{N\bar{\mathbf{y}}}{n + 1}, \frac{1}{n + 1}\mathbf{\Sigma}\right)$.  Unless mentioned otherwise we will use $n = 3$.
In most previous use cases of SBC, the only test quantities used would have been the parameters themselves, that is, the elements of $\mathbf{\mu}$ in the above example. Below, we also check a host of derived quantities: the sum, difference, and product of the $\mathbf{\mu}$ elements, the joint likelihood of all the data, and pointwise likelihoods for the first two data points.

To quantify the discrepancy between an observed distribution of posterior ranks and the uniform distribution, we take the likelihood of observing the most extreme point on the empirical CDF if the rank distribution was indeed uniform:
\begin{equation}
\gamma = 2 \min_{i\in \{1, \dots, M+1\}}\left(\min\{\text{Bin}(R_i | S, z_i), 1 - \text{Bin}(R_i - 1 | S, z_i)\}\right).
\end{equation}
Here, $M$ is the number of draws in the sample obtained from the posterior, $S$ is the number of simulations (and thus the number of observed ranks), $z_i = \frac{i}{M + 1}$ is the expected proportion of observed ranks smaller than $i$, $R_i$ is the observed count of ranks smaller than $i$, and $\text{Bin}(R | S, p)$ is the CDF of the binomial distribution with $S$ trials and probability of success $p$ evaluated at $R$. This metric was introduced in a paper by \citet{sailynoja_graphical_2021}, where we can also find computational methods to evaluate the distribution of $\gamma$ under uniform distribution of ranks for given $M$ and $S$. Our primary metric of interest would then be $\log\frac{\gamma}{\bar{\gamma}}$, where $\bar{\gamma}$ is the 5th percentile of the null distribution. That is, if you adopt a hypothesis-testing framework, then $\log\frac{\gamma}{\bar{\gamma}} < 0$ implies a rejection of the hypothesis of uniform distribution at the 5\% level. Having $\log\frac{\gamma}{\bar{\gamma}} < 0$ also corresponds to situations where visual checks of the ECDF plots would show problems (for a single test quantity). This diagnostic is typically more sensitive than the Kolmogorov-Smirnoff or $\chi^2$ test.

\begin{figure}
    \centering
    \includegraphics[width=\textwidth]{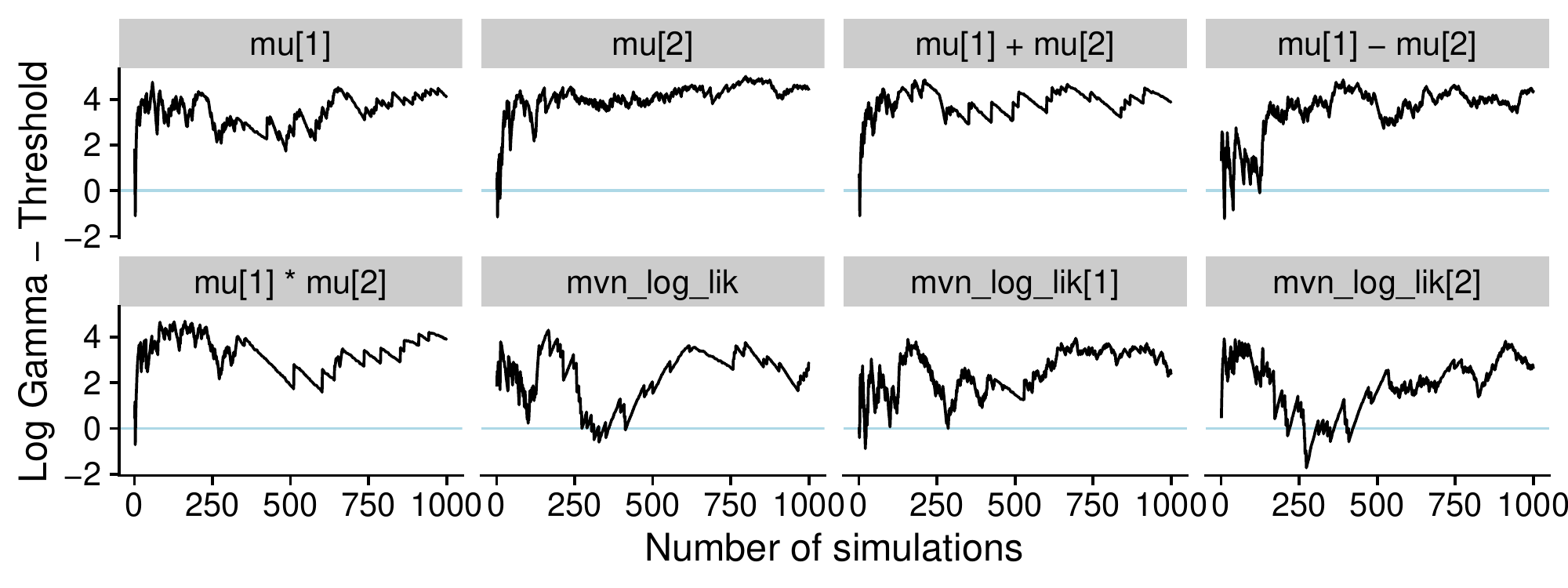}
    \vspace{-.3in}
    \caption{\em Case study 1: Evolution of the difference between the gamma statistic and threshold ($\log \bar{\gamma}$) for rejecting uniformity at 5\% for the correct posterior. \texttt{mvn\_log\_lik[1]} and \texttt{mvn\_log\_lik[2]} are the pointwise likelihoods $\pi(\mathbf{y}_1|\mu)$ and $\pi(\mathbf{y}_2|\mu)$ respectively, while \texttt{mvn\_log\_lik} is the joint likelihood. As expected when using a 5\% level for rejection, false positives (values below the threshold) do happen, but they tend to correspond to only small discrepancies.}
    \label{fig:hist_correct}
\end{figure} 

\subsection{Correct posterior - Case study 1}

Figure~\ref{fig:hist_correct} shows how the $\gamma$ statistic evolves in a fairly typical SBC run as we add more simulations using a probabilistic program that samples from the correct posterior. There is some variability, but most of the time all quantities would indicate uniformity and if they indicate some non-uniformity, the discrepancies tend to be small so we are unlikely to reject this model as incorrect.

\subsection{Ignoring data - Case studies 2--4}

\begin{figure}
    \centering
    \includegraphics[width=\textwidth]{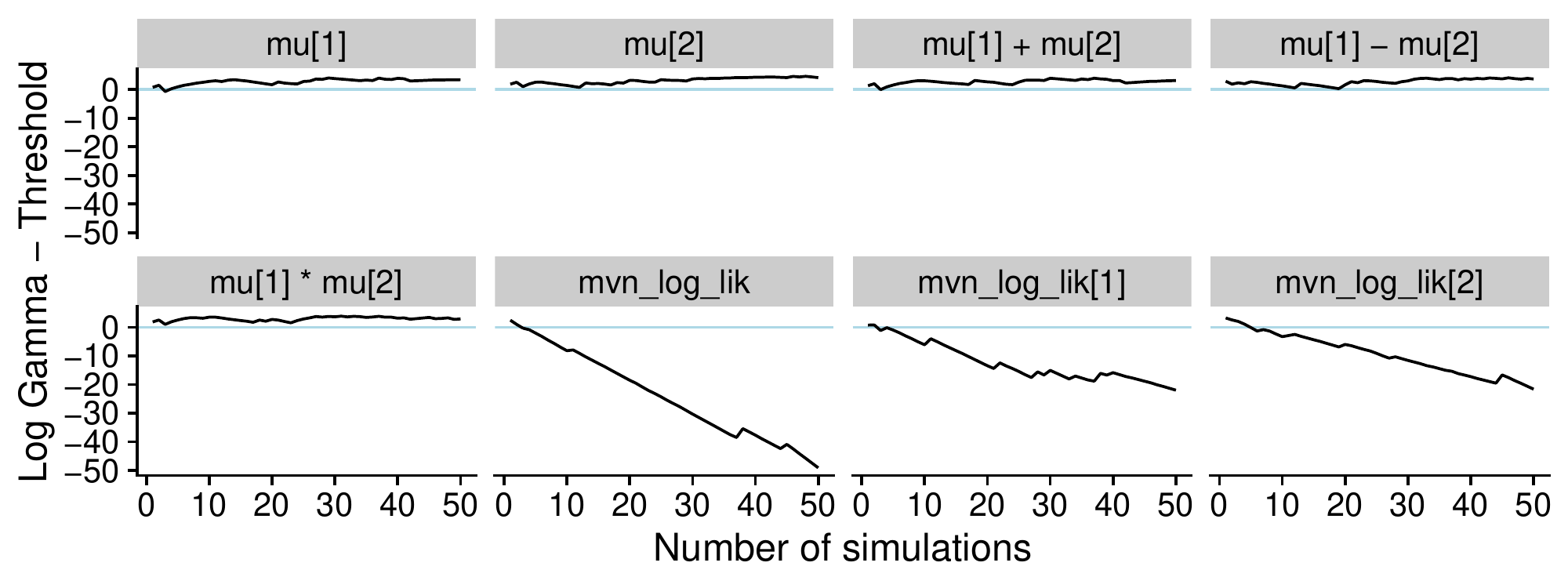}
    \vspace{-.3in}
    \caption{\em Case study 2: Evolution of the difference between the gamma statistic and threshold for rejecting uniformity at 5\% for an incorrect posterior that equals the prior. Note how quickly large discrepancies accumulate for the likelihood-based quantities, despite the horizontal axis being zoomed to show only first 50 simulations.}
    \label{fig:hist_prior_only}
\end{figure}

For comparison, case study 2 (Figure~\ref{fig:hist_prior_only}) shows the evolution of the same quantities for a typical run with an incorrect posterior that is completely equal to the prior. All quantities that do not depend on data pass SBC, barring small short-term deviations as seen for the correct posterior. But all the likelihood-based quantities start showing big discrepancies after just a handful of simulations. While the overall distribution of ranks for the parameters themselves is uniform, when we look separately at data with large average $y$ and low average $y$, the ranks are strongly non-uniform in both regions (Figure~\ref{fig:rank_hist_prior_only_split}).

In case study 3, we observe similar behaviour for the posterior that ignores only the first data point; see Figure~\ref{fig:hist_one_missing}.  The biggest difference is that that now the pointwise likelihood for the second data point---which was not ignored---passes SBC, while the joint likelihood as well as the pointwise likelihood for the first ignored data point show problems. Additionally, the pointwise likelihood for the ignored data point now shows bigger discrepancy than the joint likelihood. For both quantities, the discrepancy is smaller and requires about $S = 20$ simulations to reliably uncover, because ignoring a single data point produces a posterior that is closer to the correct one than when ignoring all the data. For case study 4 we increase the number of data points to $n = 20$ (Figure~\ref{fig:hist_one_missing_20}), ignoring just a single data point produces a posterior that is  close to correct and even after 1000 simulations, the discrepancy for the joint likelihood is  small. The pointwise likelihood for the first (ignored) data point still detects the problem relatively quickly.

More generally, if the model (partially) ignores data, then adding a test quantity that involves both data and parameters can detect this failure. Specifically adding the joint log-likelihood of the data as a derived quantity seems to be a useful default. If only a small part of the data is missing, using the joint likelihood in SBC will turn it into a problem of precision. Missing just a single datapoint in a large dataset (e.g., an off-by-one error in the probabilistic program) may change the posterior only slightly and be undetectable with realistic computational effort.

\begin{figure}
    \centering
    \includegraphics[width=\textwidth]{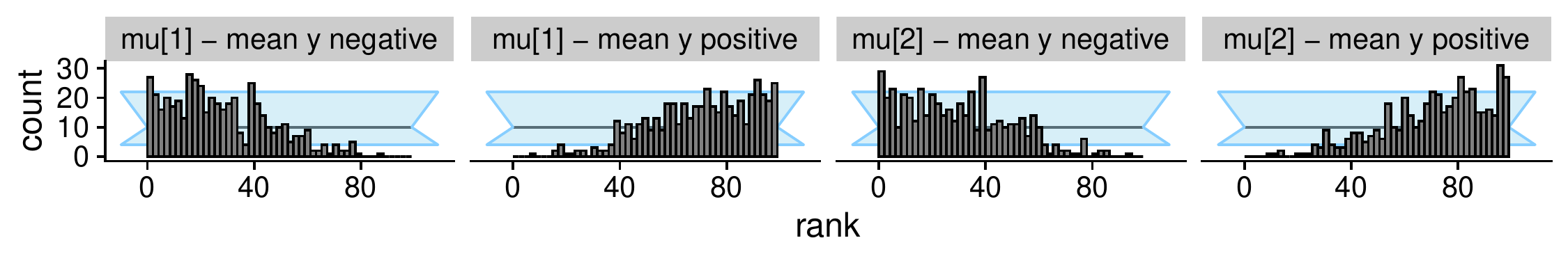}
    \vspace{-.3in}
    \caption{\em Case study 2: Rank distribution for the elements of $\mu$ split by the average value of the corresponding $y$ elements for the incorrect posterior that is completely equal to the prior. The distributions for the two cases exactly compensate to make the overall distribution uniform. The gray horizontal line represents exact uniform distribution and the blue areas represent an approximate 95\% prediction interval for the observed ranks, assuming uniform rank distribution.}
    \label{fig:rank_hist_prior_only_split}
\end{figure}

\begin{figure}
    \centering
    \includegraphics[width=\textwidth]{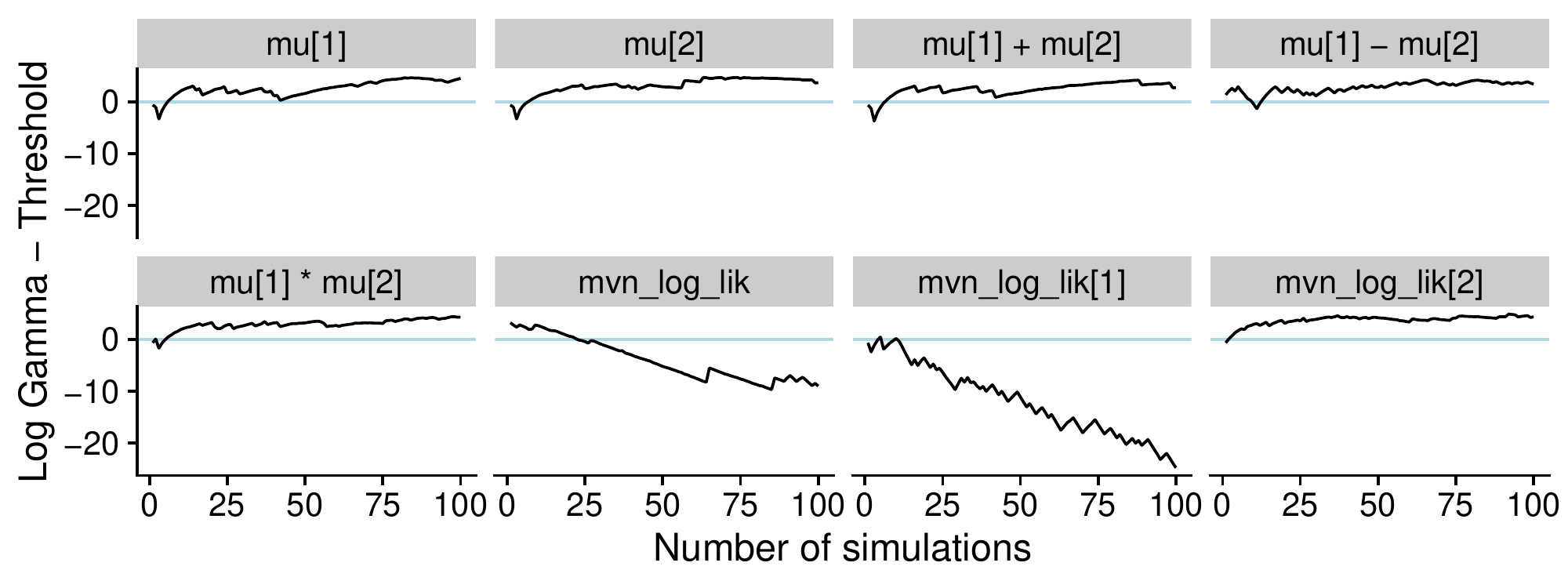}
    \vspace{-.3in}
    \caption{\em Case study 3: Evolution of the difference between the gamma statistic and threshold for rejecting uniformity at 5\% for an incorrect posterior that ignores the first datapoint among a small data set ($n = 3$).}
    \label{fig:hist_one_missing}
\end{figure} 

\begin{figure}
    \centering %% Maybe omit the top row of quantities
    \includegraphics[width=\textwidth]{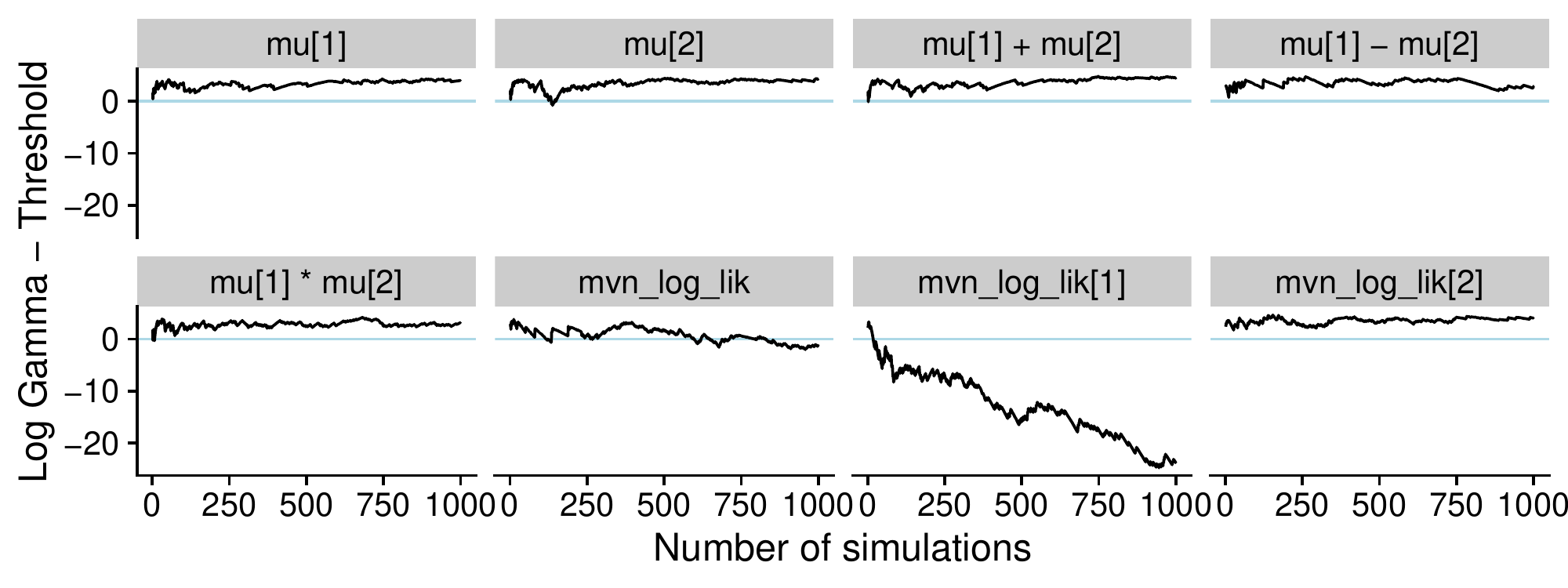}
    \vspace{-.3in}
    \caption{\em Case study 4: Evolution of the difference between the gamma statistic and threshold for rejecting uniformity at 5\% for an incorrect posterior that ignores the first datapoint among a larger dataset ($n = 20$).}
    \label{fig:hist_one_missing_20}
\end{figure} 

\subsection{Incorrect correlations - Case study 5}

Suppose we have an incorrect posterior that has the correct marginal distributions for both parameters, i.e., sampling is done from independent univariate normal distributions, $\mu_i \mid \mathbf{y}_1, \ldots, \mathbf{y}_n \sim N\left(\frac{n\bar{\mathbf{y}}_i}{n + 1}, \frac{1}{n + 1}\mathbf{\Sigma}_{i,i}\right)$. The evolution of the discrepancy as simulations are added is shown in Figure~\ref{fig:hist_corr}. If the test quantities are the univariate parameters,  SBC passes without any indication of problems, while the likelihood-based quantities as well as the difference, product, and sum of the variables show problems relatively quickly. The joint likelihood is the first to show serious issues.

\begin{figure}
    \centering
    \includegraphics[width=\textwidth]{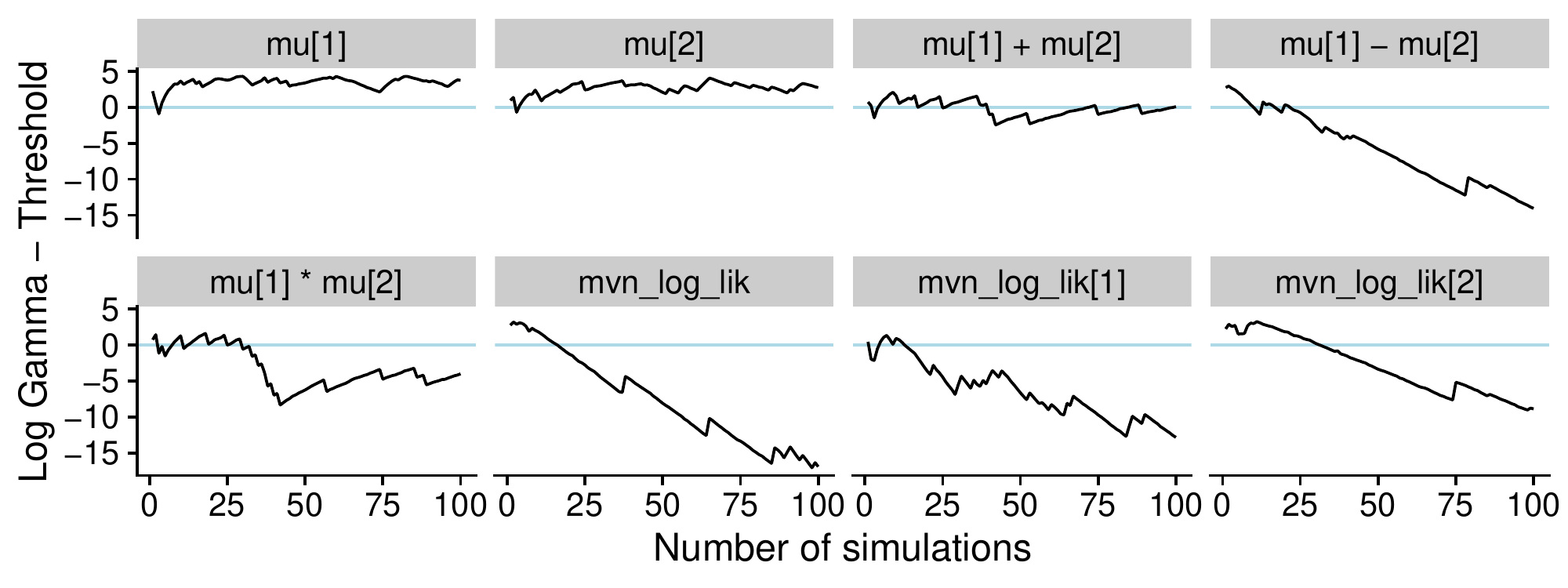}
    \vspace{-.3in}
    \caption{\em Case study 5: Evolution of the difference between the gamma statistic and threshold for rejecting uniformity at 5\% for incorrect posterior that has wrong correlation structure.}
    \label{fig:hist_corr}
\end{figure} 

If the inference  does not represent correlations in the posterior correctly, this should as well manifest in an SBC failure for some function of the parameters. This can be directly targeted by using products (``interactions'') of model parameters, but the log-likelihood once again seems to be generally useful as a highly nonlinear function of all model parameters.

\subsection{Less plausible problems - Case study 6}

In this subsection our results get less practical and more theoretical. The (partially) unused data case may easily arise in practice due to a bug in the probabilistic program such as an indexing bug or a deficient overall approach. For example, an approximate Bayesian computation algorithm may not learn from the data at all and just stick to the prior \citep{prangle_abc}. Incorrect correlations or more general higher-order structure of the posterior may also easily arise due to a problem with an approximate inference algorithm. For example, mean-field variational inference will never recover any correlations by design. Beyond those examples, we have found it hard to find incorrect probabilistic programs that would satisfy the SBC identity and could plausibly arise from unintentional mistakes in program code or problems with an algorithm. We see this as anecdotal evidence that SBC augmented with a few well-chosen test quantities that probe usage of data and higher order posterior structure such as the likelihood can robustly detect these kinds of mistakes. That said, for specific models, wide sets of artificial counterexamples that incorrectly pass SBC can be constructed.

In case study 6, we show a specific case of a more general class of setups where we can create an incorrect posterior approximation that produces overabundance of low ranks for datasets with average of $\mathbf{y}$ positive and compensates by producing overabundance of high ranks for other datasets. If this is done right, the test quantity will pass SBC. The distribution of the ranks conditional on the average of $\mathbf{y}$ for one such setup is shown in Figure~\ref{fig:rank_hist_non_mon_split}---here we transform draws from the correct posterior distribution by first applying the correct CDF, manipulating the results to achieve the desired shape of ranks and then transform back via the quantile function. See the associated code for more details. As seen in Figure~\ref{fig:hist_non_mon}, when averaging over all datasets, SBC indeed passes for the univariate parameter test quantities, but if we instead look at, say, the absolute value of $\mu$ (as well as some other non-monotonic transformations of $\mu$), we immediately see problems as now some of the previously low ranks flip to high ranks and the discrepancies accumulate instead of canceling each other. In this particular case, the problem is also eventually picked up by the product of the $\mu$ values and with enough simulations even by the joint likelihood, but there is no guarantee this will always happen. In general, non-monotonic transformations can discover incorrect posteriors that would be otherwise hidden when looking at the original variables. Still, the practical relevance of non-monotonic transforms in SBC is, in our view, likely limited, as it required careful work to construct posteriors that manifested this behaviour. We were unable to find even remotely plausible scenarios where an issue with Bayesian computation was best discovered by using a non-monotonic transformation of another test quantity.

\begin{figure}
    \centering
    \includegraphics[width=\textwidth]{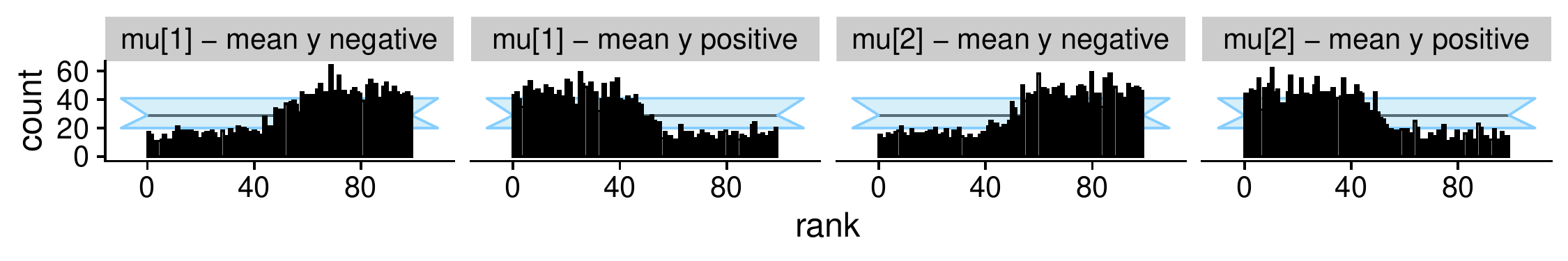}
    \vspace{-.3in}
    \caption{\em Case study 6: Rank distribution for the elements of $\mu$ split by the average value of the corresponding $y$ elements for the incorrect posterior that satisfies SBC for individual parameters. The distributions for the two cases exactly compensate to make the overall distribution uniform. The gray horizontal line represents exact uniform distribution and the blue areas represent an approximate 95\% prediction interval for the observed ranks, assuming uniform rank distribution}
    \label{fig:rank_hist_non_mon_split}
\end{figure} 

\begin{figure}
    \centering
    \includegraphics[width=\textwidth]{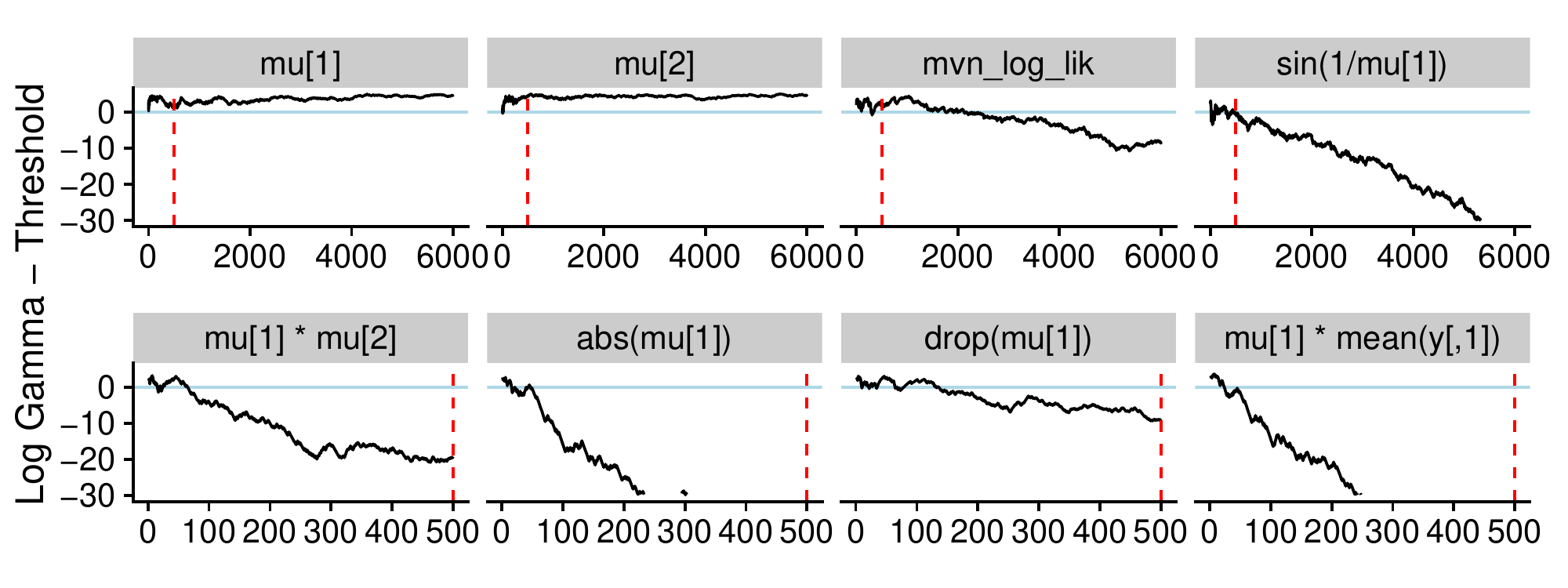}
    \vspace{-.3in}
    \caption{\em Case study 6: Evolution of the difference between the gamma statistic and threshold for rejecting uniformity at 5\% for incorrect posterior that satisfies SBC for individual parameters.
    Note the different horizontal axis between top row (quantities that detect the problem slowly or not at all) and bottom row (quantities that detect the problem quickly). The vertical red dashed line marks 500 simulations. We only show quantities derived from the first element of $\mu$; the situation is analogous for the second element. The \texttt{drop(mu[1])} quantity is defined as  $\mu_1 \text { if } \mu_1 < 1$ and as $\mu_1 - 5 \text{ otherwise}$.}
    \label{fig:hist_non_mon}
\end{figure} 

\subsection{Small discrepancies - Case study 7}

A final case study considers small discrepancies in the posterior. To be specific, we introduce a small bias in the posterior drawn from $\mbox{normal}(0, 0.3)$ independently for each simulation and element of $\mu$. The resulting SBC history is shown in Figure~\ref{fig:hist_small_change}. While all of the monitored quantities will eventually show the problem, the likelihood-based quantities and the difference of $\mu$ do that noticeably sooner than others. This demonstrates that derived quantities can somewhat improve precision of SBC: small changes in the univariate marginals can result in big (and thus easy to detect) changes for some test quantities combining the univariate marginals with data and other parameters.

\begin{figure}
    \centering
    \includegraphics[width=\textwidth]{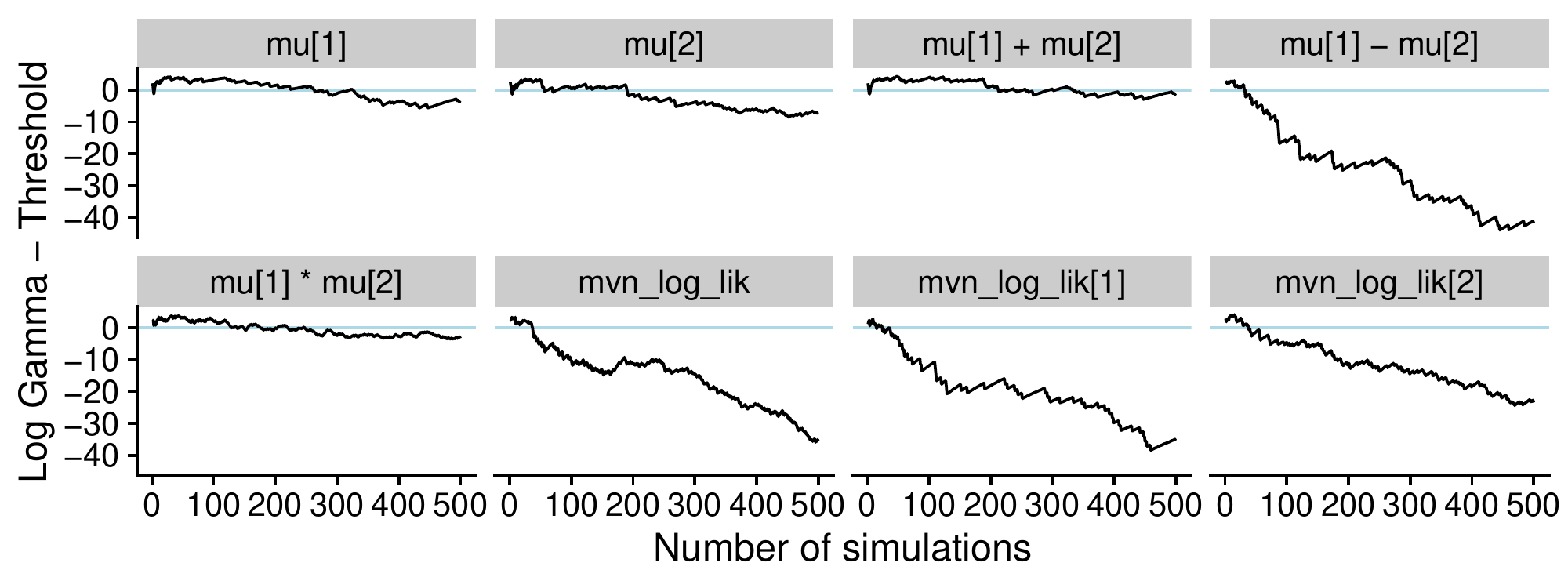}
    \vspace{-.3in}
    \caption{\em Case study 7: Evolution of the difference between the gamma statistic and threshold for rejecting uniformity at 5\% for incorrect posterior that introduces a small bias for each parameter.}
    \label{fig:hist_small_change}
\end{figure} 

\section{Real-world case study}
\label{sec:real_world}

We present a case study adapted from an actual user discussion on forums of the Stan probabilistic programming language. Our goal is to use Stan and its Hamiltonian Monte Carlo implementation to sample from a distribution over an ordered $K$-dimensional simplex which is then to be used as a component in a larger model:\footnote{The discussion can be found at \url{https://discourse.mc-stan.org/t/ordered-simplex-constraint-transform/24102}. We thank Sean Pinkney, Bob Carpenter and Ben Goodrich for contributing to the discussion and suggesting solutions.}

\begin{equation*}
 \text{OrdSimplex}_K = \{\mathbf{x} \in \mathbb{R}^K | 0 < x_1 < \ldots < x_K < 1, \sum_{i=1}^K x_i = 1 \}.
\end{equation*}

To do that, we need to construct an ordered simplex from primitive data types available in Stan and compute the logarithm of the Jacobian determinant of the transformation (up to a constant). 

\subsection{Proposed implementations}

We consider three variants. Mimicking the fallibility of methods proposed by real statisticians, not all of the following derivations will be correct. A reader interested in little mathematical puzzles may try to pin down any errors. In the next subsection, we will then show how to use SBC to discover the error(s) without the painstaking attention to detail required for checking the math. We then also remedy the error(s).

The first variant will be called \texttt{min}. 
Here, we start with an unordered bounded vector $\mathbf{u} \in [0,1]^{K - 1}$ (which is a primitive in Stan). The minimal element of the simplex needs to satisfy $x_1 < \frac{1}{K}$, so we set $x_1 = \frac{u_1}{K}$. Given $x_1$, if we set $\mathbf{x}^\prime \in \mathbb{R}^{K-1}, x^\prime_i = \frac{x_{i + 1} - x_1}{1 - Kx_1} = \frac{x_{i + 1} - x_1}{1 - u_1}$, then $\mathbf{x}^\prime \in \text{OrdSimplex}_{K - 1}$, giving us a recursive formula for the transformation, which we can unroll as:
\begin{align*}
b_1 &= 0, \ r_1 = 1 \\
\text{for } 1 \leq i < K &: \ x_i =  b_i + r_i \frac{u_i}{K + 1 - i}, \ b_{i + 1} = x_i, \ r_{i + 1} = r_{i}( 1 - u_i) \\
x_k &= b_k + r_k = 1 - \sum_{i=1}^{K - 1}x_i.
\end{align*}
Here $r_i$ can be understood as tracking the remaining amount to be distributed to ensure $x_i$ sum to $1$ if all the following elements will be at least $b_i$. For $1 \leq i < K$, we have $\frac{\partial x_i}{\partial u_i} = \frac{r_i}{K + 1 - i}$, and when also $i \leq j < K$ then $\frac{\partial x_i}{\partial u_j} = 0$, so the Jacobian matrix is triangular and the Jacobian determinant is thus 
\begin{equation}
\det \mathbf{J} = \prod_{i=1}^{K-1} \frac{r_i}{K + 1 - i}.    
\end{equation}
The second variant, called \texttt{softmax} starts with a positive ordered vector $\mathbf{v} \in (0, +\infty)^{K-1}$, $v_1 < \ldots < v_{K-1}$ (also a primitive in Stan). We then prepend $0$ to the vector and normalize it with the softmax function:\footnote{One could also base the normalization on the arithmetic sum of the elements, but this results in problematic geometry of the posterior and the sampler has trouble converging.}
\begin{equation*}
s = 1 + \sum_{i=1}^{K-1} \exp (v_i), \ x_1 = \frac{1}{s}, \ x_k = \frac{\exp (v_{k - 1})}{s}.
\end{equation*}
For $k > 1, 1 \leq j \leq K - 1, j \neq k - 1$ the partial derivatives are:
\begin{align*}
\frac{\partial x_k}{\partial v_{k - 1}} &= \frac{\exp (v_{k-1})}{s} - \frac{\exp(2v_{k-1})}{s^2} = \frac{\exp (v_{k-1})(s - \exp v_{k-1})}{s^2} \\
\frac{\partial x_k}{\partial v_{j}} &= -\frac{\exp (v_{k -1 } + v_j)}{s^2}.
\end{align*}
We notice the repeated elements and define a $K-1$ dimensional diagonal matrix $\mathbf{D}$, where $\mathbf{D}_{i, i} = \frac{\exp (y_i)}{s^2}$. We can now express the Jacobian matrix as
$$
\mathbf{J} = \left(
\mathbf D
\begin{pmatrix}
- \exp (v_1) & \dots &  - \exp (v_{K-1}) \\
\vdots & \ddots & \vdots \\
- \exp (v_1) & \dots &  - \exp (v_{K-1}) \\
\end{pmatrix} + s\mathbf{I}_{K-1} \right).
$$
We now define a $K-1$ dimensional column vector $\mathbf{c}, c_k = -\exp (v_k)$ and a row vector $\mathbf{r}, r_k = 1$ and obtain $\mathbf{J} = \mathbf{D}\left(\mathbf{cr} + s\mathbf{I}_{K-1}\right)$.
By the matrix determinant lemma, $\det(\mathbf{cr} + \mathbf{X}) =\det(\mathbf{X})( 1 + \mathbf{r}\mathbf{X}^{-1}\mathbf{c})$, for any invertible matrix $\mathbf{X}$. Since $\mathbf{rc} = \sum_{i=1}^{K - 1}(-\exp v_i) = 1 - s$, we have:
$$
\det( \mathbf{cr} + s\mathbf{I}_{K-1}) = \left(1 + \frac{1}{s}\mathbf{rc} \right) s^{K-1} = \left(1 + \frac{1 - s}{s} \right) s^{K-1} = s^{K - 2}. \\
$$
Since $\det (\mathbf{D}) = \frac{\exp (\sum_{i=1}^{K - 1} y_i)}{s^{2(K - 1)}}$, we finally have
\begin{equation}
\det (\mathbf{J}) = \det (\mathbf{D}) \det (\mathbf{cr} + s\mathbf{I}_{K-1}) = \frac{\exp (\sum_{i=1}^{K - 1} v_i)}{s^{K - 1}}.
\label{eq:jacobian_bad}
\end{equation}

As a different approach, if we are willing to restrict our priors over the ordered simplex to Dirichlet distributions, we may employ the fact that if $\mathbf{w} \in (0, +\infty)^K, w_i \sim \Gamma(\alpha_i, 1)$ then $\frac{\mathbf{w}}{\sum_{i=1}^K w_i }\sim \text{Dirichlet}(\mathbf{\alpha})$. So if we start with $\mathbf{w}$ positive ordered (a primitive in Stan), then $\mathbf{x} = \frac{\mathbf{w}}{\sum_{i=1}^K w_i }$ will be Dirichlet distributed over $\text{OrdSimplex}_K$ and no Jacobian adjustment is required. A downside of this approach is that the mapping is many-to-one and in models where $\mathbf{x}$ is tightly constrained by data, the implied geometry on $\mathbf{w}$ will likely pose difficulty for most samplers. This variant will be referred to as \texttt{gamma}.

At this point the interested reader is welcome to try to find issues with any of the above approaches. 

\subsection{Testing with SBC}
Whether the reader managed to find errors or not, we can use SBC to test all approaches. To run SBC we embed the ordered simplex into a simple model:
\begin{align}
\mathbf{x} &\in \text{OrdSimplex}_4, \pi(\mathbf{x}) \propto \text{Dirichlet(2, 2, 2, 2)} \notag \\
\mathbf{y} &\sim \text{Multinomial(10, x)}. \label{eq:ord_simplex_model}
\end{align}
Implementing the simulator code is straightforward:  due to symmetry, we can sample $\mathbf{x}$ simply by ordering a sample from the unordered Dirichlet distribution. Both \texttt{min} and \texttt{gamma} variant show no problems in SBC and are indeed correct, but \texttt{softmax} exhibits issues. Figure~\ref{fig:hist_real_world} shows the evolution of the discrepancies. The problems are most quickly picked up by the first element of $\mathbf{x}$ and the log Dirichlet prior density. Although the problem is found relatively quickly with SBC, the bias in the inferences would likely not be noticed in an informal assessment of the model:  the results are not completely wrong, just somewhat biased. The source of the issue is an off-by-one error in the exponent for $s$ in equation \eqref{eq:jacobian_bad}; the correct Jacobian determinant is
$$
\det (\mathbf{J}) = \det (\mathbf{D}) \det (\mathbf{cr} + s\mathbf{I}_{K-1}) =   \frac{\exp \sum_{i=1}^{K - 1} v_i}{s^{K}}.
$$
Indeed, if we correct the Jacobian, SBC passes. 

\begin{figure}
    \centering
    \includegraphics[width=\textwidth]{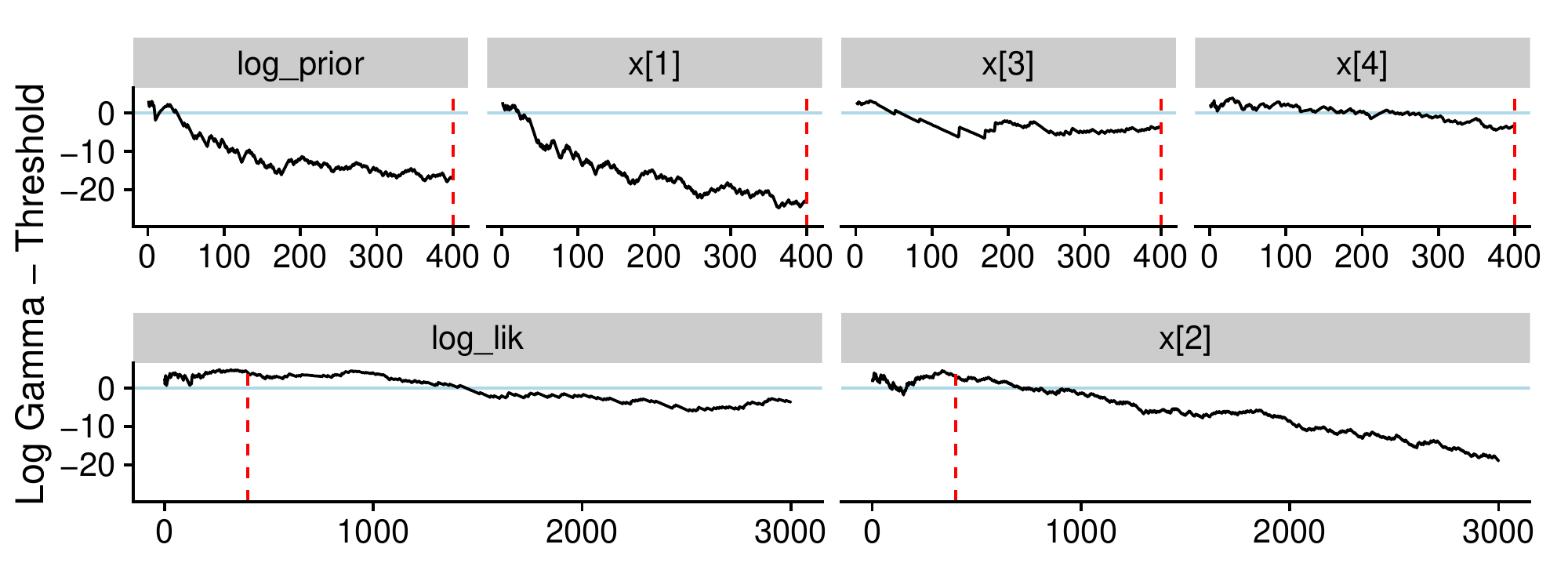}
    \vspace{-.3in}
    \caption{\em Evolution of the difference between the gamma statistic and threshold for rejecting uniformity at 5\% for the incorrectly implemented \texttt{softmax} variant of an ordered simplex model. \texttt{log\_lik} is the multinomial log likelihood of the data and \texttt{log\_prior} is the log density of the prior Dirichlet distribution. Note the different horizontal axis between top row (quantities that detect the problem quickly) and bottom row (quantities that detect the
problem slowly). The vertical red dashed line marks 400 simulations.}
    \label{fig:hist_real_world}
\end{figure}

\subsection{Remarks}

The previous section showed the type of modeling problem where SBC is in our view the most useful: deriving and implementing the probabilistic program is relatively involved and offers plenty of opportunities for error, but building a simulator is straightforward. Jacobian adjustments for changes of variables are also in our experience one of the most confusing concepts to Stan users and SBC offers a good way to check if one's reasoning is correct. 

The examples in this section introduce several non-obvious conceptual questions: For \texttt{min} and \texttt{softmax} we compute the Jacobian only considering $K - 1$ elements of the ordered simplex, even when the Dirichlet prior then acts on all elements.
Is that correct? For \texttt{gamma}, will ordering $\mathbf{w}$ imply the correct ordered simplex distribution? Running SBC is then a useful (although not completely definitive) check that our reasoning is correct.

In this example, the log likelihood did not reveal the error quickly, showing that it is not a panacea, especially in cases where the problem lies with the prior. The log prior density seems potentially useful in this case as it shows the problem as quickly as the most problematic individual parameter.
Another lesson is that SBC is useful not only for testing a full model but also for testing components of a model in isolation, akin to unit tests in software engineering. Additionally, by running SBC, we get a simulation study for free: in the specific setup described by Equation~\ref{eq:ord_simplex_model}, \texttt{min} is the most efficient in terms of effective sample size per second, followed by \texttt{gamma}. The correct version of \texttt{softmax} performs worst. The \texttt{softmax} variant also fails to converge in 6 of the 1000 simulations, while the other two are slightly more stable (convergence problems in 3 and 1 of the simulations respectively). Finally, our posterior uncertainty is large and the data do not really provide a lot of information about the parameter values. See the rendered output of the supplementary code for details.

\section{Conclusions}
\label{sec:conclusions}

\subsection{Choosing test quantities for SBC}
We have found that enriching the repertoire of test quantities used in SBC provides both qualitative and quantitative improvements to the ability of SBC to detect problems in Bayesian computation. For practical use of SBC in everyday model and algorithm development, we recommend to use by default the individual model parameters as test quantities as well as the joint likelihood of the data and potentially a small number of other quantities.

Individual parameters are recommended as they are always immediately available and are able to diagnose a large number of problems with a posterior approximation. Also, the parameters are themselves often of primary interest for inference, so it is desirable to check that their uncertainty is correctly calibrated. 

The joint likelihood is a highly useful quantity to detect the types of problems discussed in Section~\ref{sec:numerical} (especially ignoring data and incorrect correlations). In all of the cases presented in our simulations, the joint likelihood was able to detect the discrepancies and in many cases it was even able to detect them with the fewest simulations among all considered quantities. While, for some specific problems, we could find quantities that are more sensitive than the joint likelihood, none other was useful in all cases. Section~\ref{sec:data_data_dependent_quantities} provides theoretical justification for why we could expect this to hold frequently and not only in the examples we discussed. We think this generality makes the joint likelihood a good default quantity to monitor in SBC. If not using all the data correctly is a potential issue (e.g., because the code handling the data is particularly complex), then adding selected likelihoods for subsets of the data might also be sensible. 

As shown in Section~\ref{sec:real_world}, knowing where a potential problem lies can let us design more sensitive problem-specific checks (e.g., when we are not sure our prior density is correct, the log prior can be highly useful).
It also makes sense to add test quantities tailored to the specific inferential goals we have built the model for (e.g., some specific model predictions). These quantities often let us implicitly check the correctness of parameter correlations or other dependency structures and safeguard the user against problems that they care about the most. 
If correlations or other dependencies in the posterior are directly of interest, then pairwise products or differences of the model parameters can also be sensible test quantities. 

\subsection{Limitations}

Although we have shown that SBC can in principle diagnose any problem, limitations for practical use remain. For nontrivial models, adding a finite number of test quantities cannot guard against all possible ways the SBC identity may be satisfied by an incorrect posterior. However, as we check more quantities, the potential counterexamples become contrived, hard to construct, and unlikely to be the result of an inadvertent bug in model or algorithm code. At the same time, adding more test quantities increases the risk of false SBC failures simply due to the number of tests performed (if no corrections for multiple comparisons are made for the SBC checks) or it may reduce the overall power of the check (if corrections for multiple comparisons are made), so choosing test quantities carefully remains important.

This problem could potentially be alleviated by improving our understanding of the expected dependency structure of different test quantities' uniformity checks, letting us correct for multiple comparisons without loosing that much power. However, even similar test quantities can lead to in principle different SBC checks (see Section~\ref{sec:monotonic}). So any practical measure of dependency or orthogonality between test quantities would need to reflect not only existence of a difference, but also its magnitude. We leave that as future work. In practice, we have seen similarity in the degree of uniformity violation between different test quantities using the same inputs, making the need for multiple comparison correction less urgent.

Moreover, there are practical limitations imposed by the fact that we always have only limited computational resources for SBC: We can produce only a limited number of simulated datasets to fit the model on and only a limited number of posterior draws per fitted model. Both contribute to the stringency and precision of the uniformity test we can perform. The difference between continuous SBC and any practical implementation of sample SBC arises due to (a) approximating $q_{\phi,f}(x | y)$ by $Q_{\phi,f}(\lfloor xM \rfloor \,|\,y)$, and (b) using finite number of simulations to assess uniformity of $N_{\mathtt{total}}$. In both cases, the underlying difference can be understood as estimating a CDF by an empirical CDF and should therefore have similar rate of decrease with more draws. This suggests that for a given computational budget a user is likely to obtain the highest sensitivity using the same order of magnitude of simulated datasets as posterior draws per dataset. However, in practice most algorithms incur a substantial cost in a warmup phase, before any samples can be extracted. We also want to assess that our fitting algorithm has converged for each dataset, which typically requires the equivalent of at least 100 independent posterior samples (as measured by effective sample size) to do that (e.g., to get a low $\hat R$ statistic, as discussed by \citealt{improvedrhat}). It is thus hard to get a speedup by reducing the number of posterior draws. Unless we can afford to run many thousands of simulations, we are also unlikely to benefit substantially from getting more than this minimal number of draws.

Additional test quantities do not help much with precision problems---if the posterior is close to correct, the test quantities will also be close to correct. Although in some cases, some test quantities can slightly increase the sensitivity of the check by combining multiple parameters, so small imprecisions in each of the parameters can get compounded (once again the nonlinearity of the likelihood seems to be at least sometimes useful in this regard). 

\subsection{Implications for non-SBC checks}

As a contribution to the broader discussion about validation of Bayesian computation, we show that SBC and the data-averaged posterior provide different checks, despite being repeatedly conflated in the literature (see Section \ref{sec:examples_summary}). We leave a more detailed comparison of SBC and data-averaged posterior as future work, although there are some tentative arguments to believe that SBC provides stricter checks. 

SBC is not the only approach to validating Bayesian computation that relies on choosing specific test quantities---test quantities are fundamental to the methods of \citet{geweke_getting_2004}, \citet{prangle_abc}, \citet{gandy_unit_2021}, and \citet{cockayne_testing_2022}. We suspect that many of the considerations regarding their choice for SBC are applicable also in these other approaches.

\bibliographystyle{ba}
\bibliography{main.bib}% place <bib-data-file> in ./bib folder 

\begin{thebibliography}{32}
\newcommand{\enquote}[1]{``#1''}
\expandafter\ifx\csname natexlab\endcsname\relax\def\natexlab#1{#1}\fi
\expandafter\ifx\csname url\endcsname\relax
  \def\url#1{{\tt #1}}\fi
\expandafter\ifx\csname urlprefix\endcsname\relax\def\urlprefix{URL }\fi
\ifx\endbibitem\undefined \let\endbibitem\relax\fi

\bibitem[{Aguilar and B{\"u}rkner(2023)}]{aguilar2022intuitive}
Aguilar, J.~E. and B{\"u}rkner, P.-C. (2023).
\newblock \enquote{{Intuitive joint priors for Bayesian linear multilevel
  models: The R2D2M2 prior}.}
\newblock {\em Electronic Journal of Statistics\/}, 17(1): 1711 -- 1767.
\newline\urlprefix\url{https://doi.org/10.1214/23-EJS2136}
\endbibitem

\bibitem[{Carpenter et~al.(2017)Carpenter, Gelman, Hoffman, Lee, Goodrich,
  Betancourt, Brubaker, Guo, Li, and Riddell}]{stan_jss}
Carpenter, B., Gelman, A., Hoffman, M.~D., Lee, D., Goodrich, B., Betancourt,
  M., Brubaker, M., Guo, J., Li, P., and Riddell, A. (2017).
\newblock \enquote{Stan: A probabilistic programming language.}
\newblock {\em Journal of Statistical Software\/}, 76(1).
\newline\urlprefix\url{https://www.jstatsoft.org/index.php/jss/article/view/v076i01}
\endbibitem

\bibitem[{Cockayne et~al.(2022)Cockayne, Graham, Oates, Sullivan, and
  Teymur}]{cockayne_testing_2022}
Cockayne, J., Graham, M.~M., Oates, C.~J., Sullivan, T.~J., and Teymur, O.
  (2022).
\newblock \enquote{Testing Whether a Learning Procedure is Calibrated.}
\newblock {\em Journal of Machine Learning Research\/}, 23(203): 1--36.
\newline\urlprefix\url{http://jmlr.org/papers/v23/21-1065.html}
\endbibitem

\bibitem[{Cook et~al.(2006)Cook, Gelman, and Rubin}]{cook_gelman_rubin}
Cook, S.~R., Gelman, A., and Rubin, D.~B. (2006).
\newblock \enquote{Validation of software for Bayesian models using posterior
  quantiles.}
\newblock {\em Journal of Computational and Graphical Statistics\/}, 15(3):
  675--692.
\newline\urlprefix\url{https://doi.org/10.1198/106186006X136976}
\endbibitem

\bibitem[{Cusumano-Towner and Mansinghka(2017)}]{CusumanoTowner2017}
Cusumano-Towner, M.~F. and Mansinghka, V.~K. (2017).
\newblock \enquote{AIDE: An Algorithm for Measuring the Accuracy of
  Probabilistic Inference Algorithms.}
\newblock In {\em Proceedings of the 31st International Conference on Neural
  Information Processing Systems\/}, NIPS'17, 3004–3014. Red Hook, NY, USA:
  Curran Associates Inc.
\endbibitem

\bibitem[{Domke(2021)}]{domke_easy_2021}
Domke, J. (2021).
\newblock \enquote{An easy to interpret diagnostic for approximate inference:
  Symmetric divergence over simulations.}
\newline\urlprefix\url{https://arxiv.org/abs/2103.01030}
\endbibitem

\bibitem[{Gabry et~al.(2019)Gabry, Simpson, Vehtari, Betancourt, and
  Gelman}]{gabry_visualization_2019}
Gabry, J., Simpson, D., Vehtari, A., Betancourt, M., and Gelman, A. (2019).
\newblock \enquote{Visualization in Bayesian workflow.}
\newblock {\em Journal of the Royal Statistical Society: Series A\/}, 182:
  389--402.
\endbibitem

\bibitem[{Gandy and Scott(2020)}]{gandy_unit_2021}
Gandy, A. and Scott, J. (2020).
\newblock \enquote{Unit testing for MCMC and other Monte Carlo methods.}
\newline\urlprefix\url{https://arxiv.org/abs/2001.06465}
\endbibitem

\bibitem[{Gelman et~al.(2020)Gelman, Vehtari, Simpson, Margossian, Carpenter,
  Yao, Kennedy, Gabry, B\"{u}rkner, and Modr\'{a}k}]{workflow_preprint}
Gelman, A., Vehtari, A., Simpson, D., Margossian, C.~C., Carpenter, B., Yao,
  Y., Kennedy, L., Gabry, J., B\"{u}rkner, P.-C., and Modr\'{a}k, M. (2020).
\newblock \enquote{Bayesian workflow.}
\newline\urlprefix\url{https://arxiv.org/abs/2011.01808}
\endbibitem

\bibitem[{Geweke(2004)}]{geweke_getting_2004}
Geweke, J. (2004).
\newblock \enquote{Getting it right.}
\newblock {\em Journal of the American Statistical Association\/}, 99:
  799--804.
\endbibitem

\bibitem[{Grinsztajn et~al.(2021)Grinsztajn, Semenova, Margossian, and
  Riou}]{grinsztajn_bayesian_2021}
Grinsztajn, L., Semenova, E., Margossian, C.~C., and Riou, J. (2021).
\newblock \enquote{Bayesian workflow for disease transmission modeling in
  Stan.}
\newblock {\em Statistics in Medicine\/}, 40: 6209--6234.
\endbibitem

\bibitem[{Grosse et~al.(2016)Grosse, Ancha, and Roy}]{grosse2016}
Grosse, R.~B., Ancha, S., and Roy, D.~M. (2016).
\newblock \enquote{Measuring the reliability of MCMC inference with
  bidirectional Monte Carlo.}
\newblock In Lee, D., Sugiyama, M., Luxburg, U., Guyon, I., and Garnett, R.
  (eds.), {\em Advances in Neural Information Processing Systems\/}, volume~29.
  Curran Associates.
\newline\urlprefix\url{https://proceedings.neurips.cc/paper_files/paper/2016/file/0e9fa1f3e9e66792401a6972d477dcc3-Paper.pdf}
\endbibitem

\bibitem[{Kay(2021)}]{kay_residuals}
Kay, M. (2021).
\newblock \enquote{Extracting and visualizing tidy residuals from Bayesian
  models.}
\newline\urlprefix\url{http://mjskay.github.io/tidybayes/articles/tidybayes-residuals.html}
\endbibitem

\bibitem[{Kim et~al.(2022)Kim, Moon, Modrák, and Säilynoja}]{SBC_package}
Kim, S., Moon, A.~H., Modrák, M., and Säilynoja, T. (2022).
\newblock \enquote{SBC: Simulation based calibration for rstan/cmdstanr
  models.}
\newline\urlprefix\url{https://github.com/hyunjimoon/SBC/}
\endbibitem

\bibitem[{Lee et~al.(2019)Lee, Nicholls, and Ryder}]{lee_calibration_2019}
Lee, J.~E., Nicholls, G.~K., and Ryder, R.~J. (2019).
\newblock \enquote{Calibration procedures for approximate Bayesian credible
  sets.}
\newblock {\em Bayesian Analysis\/}, 14: 1245--1269.
\newline\urlprefix\url{https://doi.org/10.1214/19-BA1175}
\endbibitem

\bibitem[{Lueckmann et~al.(2021)Lueckmann, Boelts, Greenberg, Goncalves, and
  Macke}]{pmlr-v130-lueckmann21a}
Lueckmann, J.-M., Boelts, J., Greenberg, D., Goncalves, P., and Macke, J.
  (2021).
\newblock \enquote{Benchmarking simulation-based inference.}
\newblock {\em Proceedings of Machine Learning Research\/}, 130: 343--351.
\newline\urlprefix\url{https://proceedings.mlr.press/v130/lueckmann21a.html}
\endbibitem

\bibitem[{Mcleod and Simpson(2021)}]{mcleod_validating_2021}
Mcleod, J. and Simpson, F. (2021).
\newblock \enquote{Validating Gaussian process models with simulation-based
  calibration.}
\newblock In {\em 2021 {IEEE} International Conference on Artificial
  Intelligence Testing ({AITest})\/}, 101--102.
\endbibitem

\bibitem[{Prangle et~al.(2014)Prangle, Blum, Popovic, and Sisson}]{prangle_abc}
Prangle, D., Blum, M. G.~B., Popovic, G., and Sisson, S.~A. (2014).
\newblock \enquote{Diagnostic tools for approximate Bayesian computation using
  the coverage property.}
\newblock {\em Australian \& New Zealand Journal of Statistics\/}, 56:
  309--329.
\newline\urlprefix\url{https://doi.org/10.1111/anzs.12087}
\endbibitem

\bibitem[{Radev et~al.(2021)Radev, D'Alessandro, Mertens, Voss, Köthe, and
  Bürkner}]{radev2021bayesflow}
Radev, S.~T., D'Alessandro, M., Mertens, U.~K., Voss, A., Köthe, U., and
  Bürkner, P.-C. (2021).
\newblock \enquote{Amortized {Bayesian} model comparison with evidential deep
  learning.}
\newblock {\em IEEE Transactions on Neural Networks and Learning Systems\/},
  1--15.
\endbibitem

\bibitem[{Radev et~al.(2020)Radev, Mertens, Voss, Ardizzone, and
  Köthe}]{radev2020bayesflow}
Radev, S.~T., Mertens, U.~K., Voss, A., Ardizzone, L., and Köthe, U. (2020).
\newblock \enquote{BayesFlow: Learning complex stochastic models with
  invertible neural networks.}
\newblock {\em IEEE Transactions on Neural Networks and Learning Systems\/},
  33(4): 1452--1466.
\endbibitem

\bibitem[{Radev et~al.(2023)Radev, Schmitt, Pratz, Picchini, Köthe, and
  Bürkner}]{radev2023jana}
Radev, S.~T., Schmitt, M., Pratz, V., Picchini, U., Köthe, U., and Bürkner,
  P.-C. (2023).
\newblock \enquote{JANA: Jointly amortized neural approximation of complex
  Bayesian models.}
\newblock {\em Uncertainty in Artificial Intelligence (UAI) Conference
  Proceedings\/}.
\endbibitem

\bibitem[{Ramesh et~al.(2022)Ramesh, Lueckmann, Boelts, Tejero-Cantero,
  Greenberg, Goncalves, and Macke}]{ramesh_gatsbi_2022}
Ramesh, P., Lueckmann, J.-M., Boelts, J., Tejero-Cantero, {\'A}., Greenberg,
  D.~S., Goncalves, P.~J., and Macke, J.~H. (2022).
\newblock \enquote{{GATSBI}: Generative Adversarial Training for
  Simulation-Based Inference.}
\newblock In {\em International Conference on Learning Representations\/}.
\newline\urlprefix\url{https://openreview.net/forum?id=kR1hC6j48Tp}
\endbibitem

\bibitem[{Rendsburg et~al.(2022)Rendsburg, Kristiadi, Hennig, and
  Von~Luxburg}]{pmlr-v151-rendsburg22a}
Rendsburg, L., Kristiadi, A., Hennig, P., and Von~Luxburg, U. (2022).
\newblock \enquote{Discovering inductive bias with Gibbs priors: A diagnostic
  tool for approximate Bayesian inference.}
\newblock {\em Proceedings of Machine Learning Research\/}, 151: 1503--1526.
\newline\urlprefix\url{https://proceedings.mlr.press/v151/rendsburg22a.html}
\endbibitem

\bibitem[{Saad et~al.(2019)Saad, Freer, Ackerman, and
  Mansinghka}]{saad_family_2019}
Saad, F.~A., Freer, C.~E., Ackerman, N.~L., and Mansinghka, V.~K. (2019).
\newblock \enquote{A family of exact goodness-of-fit tests for high-dimensional
  discrete distributions.}
\newblock {\em Proceedings of Machine Learning Research\/}, 89: 1640--1649.
\newline\urlprefix\url{https://proceedings.mlr.press/v89/saad19a.html}
\endbibitem

\bibitem[{S\"{a}ilynoja et~al.(2022)S\"{a}ilynoja, B\"{u}rkner, and
  Vehtari}]{sailynoja_graphical_2021}
S\"{a}ilynoja, T., B\"{u}rkner, P.-C., and Vehtari, A. (2022).
\newblock \enquote{Graphical Test for Discrete Uniformity and Its Applications
  in Goodness-of-Fit Evaluation and Multiple Sample Comparison.}
\newblock {\em Statistics and Computing\/}, 32(2).
\newline\urlprefix\url{https://doi.org/10.1007/s11222-022-10090-6}
\endbibitem

\bibitem[{Schad et~al.(2022)Schad, Nicenboim, Bürkner, Betancourt, and
  Vasishth}]{schad_sbc_bf}
Schad, D.~J., Nicenboim, B., Bürkner, P.-C., Betancourt, M., and Vasishth, S.
  (2022).
\newblock \enquote{Workflow techniques for the robust use of Bayes factors.}
\newblock {\em Psychological Methods\/}.
\newline\urlprefix\url{https://doi.org/10.1037/met0000472}
\endbibitem

\bibitem[{Talts et~al.(2020)Talts, Betancourt, Simpson, Vehtari, and
  Gelman}]{talts_sbc}
Talts, S., Betancourt, M., Simpson, D., Vehtari, A., and Gelman, A. (2020).
\newblock \enquote{Validating Bayesian inference algorithms with
  simulation-based calibration.}
\newline\urlprefix\url{http://www.stat.columbia.edu/~gelman/research/unpublished/sbc.pdf}
\endbibitem

\bibitem[{Vehtari et~al.(2021)Vehtari, Gelman, Simpson, Carpenter, and
  B{\"u}rkner}]{improvedrhat}
Vehtari, A., Gelman, A., Simpson, D., Carpenter, B., and B{\"u}rkner, P.-C.
  (2021).
\newblock \enquote{Rank-normalization, folding, and localization: An improved
  $\widehat{R}$ for assessing convergence of MCMC (with discussion).}
\newblock {\em Bayesian Analysis\/}, 16(2): 667--718.
\newline\urlprefix\url{https://doi.org/10.1214/20-BA1221}
\endbibitem

\bibitem[{Yao et~al.(2018)Yao, Vehtari, Simpson, and Gelman}]{yao_yes_2018}
Yao, Y., Vehtari, A., Simpson, D., and Gelman, A. (2018).
\newblock \enquote{Yes, but did it work?: Evaluating variational inference.}
\newblock {\em Proceedings of Machine Learning Research\/}, 80: 5581--5590.
\newline\urlprefix\url{https://proceedings.mlr.press/v80/yao18a.html}
\endbibitem

\bibitem[{Yu et~al.(2021)Yu, Nott, Tran, and Klein}]{yu_assessment_2021}
Yu, X., Nott, D.~J., Tran, M.-N., and Klein, N. (2021).
\newblock \enquote{Assessment and adjustment of approximate inference
  algorithms using the law of total variance.}
\newblock {\em Journal of Computational and Graphical Statistics\/}, 30:
  977--990.
\newline\urlprefix\url{https://doi.org/10.1080/10618600.2021.1880921}
\endbibitem

\bibitem[{Zhang et~al.(2020)Zhang, Naughton, Bondell, and
  Reich}]{zhang_bayesian_2022}
Zhang, Y.~D., Naughton, B.~P., Bondell, H.~D., and Reich, B.~J. (2020).
\newblock \enquote{Bayesian regression using a prior on the model fit: The
  R2-D2 shrinkage prior.}
\newblock {\em Journal of the American Statistical Association\/}, 117:
  862--874.
\newline\urlprefix\url{https://doi.org/10.1080/01621459.2020.1825449}
\endbibitem

\bibitem[{Zhao et~al.(2021)Zhao, Dalmasso, Izbicki, and
  Lee}]{pmlr-v161-zhao21b}
Zhao, D., Dalmasso, N., Izbicki, R., and Lee, A.~B. (2021).
\newblock \enquote{Diagnostics for conditional density models and Bayesian
  inference algorithms.}
\newblock {\em Proceedings of Machine Learning Research\/}, 161: 1830--1840.
\newline\urlprefix\url{https://proceedings.mlr.press/v161/zhao21b.html}
\endbibitem

\end{thebibliography}

\setcounter{theorem}{0}
\setcounter{definition}{0}

\section*{Supplementary Material}
Appendix A contains mathematical theory and proofs, and Appendix B contains examples of simple models where we can fully characterize the space of posteriors that satisfy simulation-based calibration checking with respect to several test quantities. Code for simulations and all the figures in sections \ref{sec:numerical} and \ref{sec:real_world} can be found at \url{https://github.com/martinmodrak/sbc_test_quantities_paper}. All code output and associated commentary can also be viewed at  \url{https://martinmodrak.github.io/sbc_test_quantities_paper/}

\section*{Appendix A: Formalized theory and proofs}

We will denote the integral of $f(x)$ w.r.t.\ $x$ over domain $X$ as $\int_X \mathrm{d}x \: f(x)$. $\mathbb{I}[P]$ is the indicator function for a given predicate $P$. When a function can be understood as describing a conditional probability distribution, we will use $|$ to separate the function arguments we condition on. This is only to assist comprehension and has the same semantic meaning as using a comma.

In all cases, we assume an underlying statistical model $\pi$ which decomposes into a prior and observational model. Given a data space $Y$ and a parameter space $\Theta$, then for $y \in Y, \theta \in \Theta$ the model implies the following joint, marginal and posterior distributions:
\begin{gather*}
    \pi_\text{joint}(y, \theta) = \pi_\text{obs}(y | \theta) \pi_\text{prior}(\theta)\\
    \pi_\text{marg}\left(y \right) = \int_\Theta \mathrm{d} \theta \: \pi_{\text{obs}}(y | \theta) \pi_\text{prior}(\theta)\\
    \pi_\text{post}(\theta | y) = \frac{\pi_\text{obs}(y | \theta) \pi_\text{prior}(\theta)}{\pi_\text{marg}\left(y \right)}.
\end{gather*}
Unless noted otherwise, all definitions and proofs implicitly assume a single model $\pi$ is given.

\begin{definition}[Posterior family] Given a data space $Y$ and a parameter space $\Theta$, a \emph{posterior family} $\phi$ assigns a normalized posterior density to each possible $y \in Y$. I.e. posterior family is a function $\phi : \Theta \times Y  \rightarrow  \mathbb{R^{+}}$ such that
\begin{equation*}
\forall y: \int \mathrm{d}\theta \:\phi(\theta | y) = 1.
\end{equation*}
For each $y$, we will denote the implied distribution over $\Theta$ as $\phi_y$. 

\end{definition}

\begin{definition}[Test quantity]
     Given a data space $Y$ and a parameter space $\Theta$ a \emph{test quantity} is any measurable function $f :\Theta \times Y  \to  \mathbb{R} $.
\end{definition}

\begin{definition}[Sample rank CDF, sample Q, sample SBC]
\label{def:sample_SBC}
Given a data space $Y$, a parameter space $\Theta$, a test quantity $f$, $M \in \mathbf{N}$ and a posterior family $\phi$. For any $y \in Y$, if $\theta_1, \dots, \theta_M \sim \phi_y$ we can define the following random variables:
\begin{align*}
   N_{\phi,f,\tilde\theta,y}^{\mathtt{less}} &:= \sum_{m=1}^M \mathbb{I} \left[f(\theta_m, y) < f(\tilde \theta, y) \right] \\
   N_{\phi,f,\tilde\theta,y}^{\mathtt{equals}} &:= \sum_{m=1}^M \mathbb{I} \left[f(\theta_m, y) = f(\tilde \theta, y) \right] \\
    K_{\phi,f,\tilde\theta,y} &\sim \mathrm{uniform}\left(0,  N_{\phi,f,\tilde\theta,y}^{\mathtt{equals}}\right) \\
    N_{\phi,f,\tilde\theta,y}^\mathtt{total} &:= N_{\phi,f,\tilde\theta,y}^{\mathtt{less}} + K_{\phi,f,\tilde\theta,y}.
\end{align*}    
For a fixed $\tilde\theta$, we define the \emph{$M$-sample rank CDF} as:
\begin{equation*}
    R_{\phi,f}(i | \tilde\theta, y) := \mbox{Pr} \left( N_{\phi,f,\tilde\theta,y}^\mathtt{total} \leq i \right).
\end{equation*}
Averaging over the correct posterior, we can define the \emph{$M$-sample $Q$} as:
\begin{equation*}
  Q_{\phi,f}(i | y) := \int_\Theta \mathrm{d}\tilde\theta \: \pi_\text{post}(\tilde\theta|y)  R_{\phi,f}(i| \tilde\theta, y)
\end{equation*}
We then say that \emph{$\phi$ passes} $M$-sample SBC w.r.t.\ $f$ if $\forall i \in 0, \dots, M - 1$ we have
\begin{equation*}
\int_Y \: \mathrm{d}y \: Q_{\phi,f}(i |y) \pi_\text{marg}(y) = \frac{i+1}{M + 1}.    
\end{equation*}
\end{definition}

This definition does not match immediately with the procedure we actually use to run SBC in practice but is more convenient for further analysis. We now prove the equivalence with the SBC procedure:

\begin{theorem}[Procedural definition of sample SBC]
\label{ath:procedural_sbc}
Given a data space $Y$, a parameter space $\Theta$, a test quantity $f$, $M \in \mathbf{N}$ and a posterior family $\phi$. $\phi$ passes $M$-sample SBC w.r.t.\ $f$ if and only if given $\randvar{\tilde\theta} \sim \pi(\tilde\theta), \randvar{y} \sim \pi_\text{obs}(y | \randvar{\tilde\theta}), N^\mathtt{total} = N_{\phi,f,\randvar{\tilde\theta},\randvar{y}}^\mathtt{total}$ we have $N^\mathtt{total} \sim \mathrm{uniform}(0, M)$.    
\end{theorem}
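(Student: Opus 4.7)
The plan is to show that both sides of the equivalence collapse to the same statement about the CDF of $N^\mathtt{total}$, namely that $\mbox{Pr}(N^\mathtt{total} \leq i) = \frac{i+1}{M+1}$ for every $i \in \{0, \ldots, M-1\}$. The key observation is that the right-hand quantity in the SBC definition, $\int_Y \mathrm{d}y \: Q_{\phi,f}(i|y) \pi_\text{marg}(y)$, is literally the unconditional probability $\mbox{Pr}(N^\mathtt{total} \leq i)$ once one writes out the joint distribution from which $(\randvar{\tilde\theta}, \randvar{y})$ is drawn in the procedural description.

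First, I would note that when $\randvar{\tilde\theta} \sim \pi_\text{prior}$ and $\randvar{y} \sim \pi_\text{obs}(\cdot \mid \randvar{\tilde\theta})$, the joint density factors in two equivalent ways:
\begin{equation*}
\pi_\text{prior}(\tilde\theta)\pi_\text{obs}(y \mid \tilde\theta) = \pi_\text{joint}(y,\tilde\theta) = \pi_\text{marg}(y)\pi_\text{post}(\tilde\theta \mid y),
\end{equation*}
which is precisely the Bayes identity already invoked in the derivation of Equation (\ref{eq:sbc_joint_2}). Then, conditioning on the value of $(\randvar{\tilde\theta},\randvar{y})$, the residual randomness in $N^\mathtt{total}$ comes from the posterior draws $\theta_1,\ldots,\theta_M \sim \phi_{\randvar{y}}$ and the tie-breaking variable $K$, so by definition of $R_{\phi,f}$ we have $\mbox{Pr}(N^\mathtt{total} \leq i \mid \randvar{\tilde\theta} = \tilde\theta, \randvar{y} = y) = R_{\phi,f}(i \mid \tilde\theta, y)$.

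Second, applying the law of total probability and using the factorization above,
\begin{equation*}
\mbox{Pr}(N^\mathtt{total} \leq i) = \int_Y \mathrm{d}y \int_\Theta \mathrm{d}\tilde\theta \: \pi_\text{marg}(y)\pi_\text{post}(\tilde\theta \mid y) R_{\phi,f}(i \mid \tilde\theta, y) = \int_Y \mathrm{d}y \: \pi_\text{marg}(y) Q_{\phi,f}(i \mid y),
\end{equation*}
where the inner integral is recognized as $Q_{\phi,f}(i \mid y)$ straight from Definition~\ref{def:sample_SBC}. Hence the condition defining sample SBC is identical to $\mbox{Pr}(N^\mathtt{total} \leq i) = \frac{i+1}{M+1}$ for all $i \in \{0,\ldots,M-1\}$.

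Third, I would close the equivalence by noting that $N^\mathtt{total}$ is almost surely integer-valued in $\{0,\ldots,M\}$: $N^\mathtt{less}$ and $N^\mathtt{equals}$ are nonnegative integers with $N^\mathtt{less} + N^\mathtt{equals} \leq M$, and $K \in \{0,\ldots,N^\mathtt{equals}\}$, so $N^\mathtt{total} \leq M$. A random variable supported on $\{0,\ldots,M\}$ is $\mathrm{uniform}(0,M)$ if and only if its CDF takes the value $\frac{i+1}{M+1}$ at every $i$ in this range; the case $i = M$ is automatic ($1 = \frac{M+1}{M+1}$), so the nontrivial content is exactly the SBC condition for $i \in \{0,\ldots,M-1\}$. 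No step is genuinely hard here — the main thing to get right is the bookkeeping around the Bayes factorization and the observation that the SBC definition omits the vacuous $i = M$ case, which is what lets one pass between the indexed CDF constraint and uniformity of the full distribution.
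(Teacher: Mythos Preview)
Your proposal is correct and follows essentially the same route as the paper: both compute $\mbox{Pr}(N^\mathtt{total} \leq i)$ by integrating $R_{\phi,f}(i \mid \tilde\theta, y)$ against the joint law of $(\randvar{\tilde\theta},\randvar{y})$, swap $\pi_\text{prior}(\tilde\theta)\pi_\text{obs}(y\mid\tilde\theta)$ for $\pi_\text{marg}(y)\pi_\text{post}(\tilde\theta\mid y)$, and recognize the inner integral as $Q_{\phi,f}(i\mid y)$. Your version is slightly more explicit than the paper's in spelling out why the CDF condition on $\{0,\ldots,M-1\}$ is equivalent to uniformity (support argument and the vacuous $i=M$ case), but the argument is the same.
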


\begin{proof}
We show the equivalence of CDF for $N_\mathtt{total}$ with the formula in the definition of sample SBC. For all $i \in \{0, \dots, M\}$ we obtain
\begin{gather*}
\mbox{Pr}(N^\mathtt{total} \leq i) =  \int_\Theta \mathrm{d}\tilde\theta   \int_Y \: \mathrm{d}y \: \pi_\text{prior}(\tilde\theta)\pi_\text{obs}(y | \tilde\theta)  R_{\phi,f}(i | \tilde\theta, y) \\ =
\int_Y \: \mathrm{d}y  \int_\Theta \mathrm{d}\tilde\theta \: \pi_\text{post}(\tilde\theta|y)  \pi_\text{marg}(y) R_{\phi,f}(i| \tilde\theta, y)  = \int_Y \: \mathrm{d}y \: Q_{\phi,f}(i |y) \pi_\text{marg}(y)
\end{gather*} 
\end{proof}

However, the sample SBC is not particularly amenable to direct analysis as the choice of $M$ can matter. We will thus focus on a continuous case, which can be understood as the limit of the sample SBC as $M \to \infty$.

\begin{definition}[CDF, tie probability, quantile functions]
Given a model $\pi$, data space $Y$, a parameter space $\Theta$, a posterior family $\phi$, and a test quantity $f$, we define
\begin{itemize}
    \item fitted CDF: $C^\pi_{\phi,f}: \bar{\mathbb{R}}\times Y\to [0,1]$, \\ $C^\pi_{\phi,f}(s | y) := \int_{\Theta}\mathrm{d}\theta\,\mathbb{I}\left[f\left(\theta, y \right) \leq s\right]\phi\left(\theta| y\right)$
    \item true CDF: $C^\pi_{f}: \bar{\mathbb{R}} \times Y\to [0,1]$, \\
    $C^\pi_{f}\left( s | y\right) :=\int_{\Theta}\mathrm{d}\theta\,\mathbb{I}\left[f\left(\theta, y \right) \leq s\right]\pi_{\text{post}}(\theta | y)$
    \item    fitted quantile function: 
    $C^{\pi, -1}_{\phi,f} : [0,1] \times Y \to \bar{\mathbb{R}}$, \\ 
    $C^{\pi, -1}_{\phi,f}(x | y) := \inf\{s : x \leq C_{\phi, f}(s | y)\}$
    \item true quantile function: 
    $C^{\pi, -1}_{f} : [0,1] \times Y \to \bar{\mathbb{R}}$, \\
    $C^{\pi, -1}_{f}(x | y) := \inf\{s : x \leq C_{f}(s | y)\}$
    \item fitted tie probability: $D^\pi_{\phi,f}: \bar{\mathbb{R}}\times Y\to [0,1]$, \\
    $D^\pi_{\phi,f}(s | y) := \int_\Theta \mathrm{d}\theta \: \phi(\theta | y) \mathbb{I}\left[ f(\theta, y) = s \right]$ 
    \item true tie probability: $D^\pi_{f}: \bar{\mathbb{R}}\times Y\to [0,1]$, \\
    $D^\pi_{f}(s | y) := \int_\Theta \mathrm{d}\theta \: \pi_\text{post}(\theta | y) \mathbb{I}\left[ f(\theta, y) = s \right]$,
\end{itemize}
where $\bar{\mathbb{R}} := \mathbb{R} \cup \{-\infty, +\infty\}$ is the extended real number line. When no confusion arises, the model superscript will be omitted.
\end{definition}

\begin{definition}[Continuous rank CDF, continuous $q$, continuous SBC]
Assume a data space $Y$, a parameter space $\Theta$, a test quantity $f$, and a posterior family $\phi$. For fixed $\tilde\theta \in \Theta$ and $y \in Y$ we define the \emph{continuous rank CDF} $r_{\phi,f} : [0,1] \times \Theta \times Y \to [0, 1]$ as
\begin{equation*}
r_{\phi,f}(x | \tilde\theta, y) := 
\mbox{Pr}\left(C_{\phi,f} \left(f \left(\left.\tilde\theta, y \right) \right| y \right) - U D_{\phi,f} \left(f \left(\left.\tilde\theta, y \right) \right| y \right) \leq x\right) ,   
\end{equation*}
assuming $U$ is a random variable distributed uniformly over the $[0, 1]$ interval. If the fitted tie probability is $0$, then $r$ is a step function and the implied rank distribution is degenerate.

Averaging over the correct posterior, we can define the \emph{continuous } $q : [0,1] \times Y \to [0,1]$ as
\begin{equation*}
    q_{\phi,f}(x|y) := 
\int_\Theta \mathrm{d} \tilde\theta \: \pi_\text{post}(\tilde\theta | y) r_{\phi,f}(x | \tilde\theta, y).
\end{equation*}
We then say that \emph{$\phi$ passes} continuous SBC w.r.t.\ $f$ if
\begin{equation}
\forall x \in [0, 1]: \int_Y \mathrm{d}y \: q_{\phi,f}(x|y) \pi_\text{marg}(y)  = x. \label{eq:sbc_cdf}
\end{equation}
\end{definition}
%
%We also define $S(f)$ as the set of posterior familys $\phi$ that %satisfy continuous SBC for a given function $f$. 
%
% Maybe not really relevant\dots
% Finally, although the definition only involves deterministic $f$, this is without loss of generality - we can always add additional parameters $\theta_R$ that do not influence data, and are assumed to have correct posterior (i.e. $\phi_R(y, \{\theta, \theta_R\}) = \phi(y | \theta) \pi(\theta_R)$) to model stochasticity in $f$.
%
In both the sample and the continuous case, the presence of ties introduces analytical difficulties. We thus start by a useful lemma that lets us avoid dealing with ties for many purposes.

\begin{definition}[Ties]
Given a test quantity $f: \Theta \times Y \to \mathbb{R}$
a model $\pi$ \emph{has ties} w.r.t.\ $f$ when there exists $y \in Y, \tilde\theta \in \Theta$ such that
\begin{equation*}
D_{f}(f(\tilde\theta,y) | y) > 0.
\end{equation*}
A posterior family $\phi$ \emph{has ties} w.r.t.\ $f$ when there exists $y \in Y, \tilde\theta \in \Theta$ such that
\begin{equation*}
D_{\phi,f}(f(\tilde\theta,y) | y) > 0.
\end{equation*}
\end{definition}

\begin{lemma}[Tie removal]
\label{le:tie_removal}
    Given a model $\pi$, posterior family $\phi$, and test quantity $f: \Theta \times Y \to \mathbb{R}$, where either $\pi$ has prior ties w.r.t.\ $f$ or $\phi$ has ties w.r.t.\ $f$, we can construct a model $\pi^\prime$, with parameter space $\Theta^\prime$ and data space $Y$, posterior family $\phi^\prime$ and test quantity $f^\prime$ such that (1) $\pi^\prime$ has no  ties w.r.t.\ $f^\prime$ and $\phi^\prime$ has no ties w.r.t.\  $f^\prime$, (2) $\forall \theta \in \Theta, y \in Y: \phi(\theta | y) = \pi_\text{post}(\theta | y)$ if and only if $\forall \theta \in \Theta^\prime, y \in Y:  \phi^\prime(\theta, y) = \pi^\prime_\text{post}(\theta| y)$ and (3) both sample $Q$ and continuous $q$ are maintained. That is,
\begin{gather*}
\forall y \in Y, x \in [0, 1]: q^\pi_{\phi,f}(x | y) = q^{\pi^\prime}_{\phi,f^\prime}(x | y) \\
\forall y \in Y, i \in 0, \dots, M - 1: Q^\pi_{\phi,f}(i | y) = Q^{\pi^\prime}_{\phi,f^\prime}(i | y).    
\end{gather*}
\end{lemma}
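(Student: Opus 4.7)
The plan is to augment the parameter space with an auxiliary uniform variable that mechanizes the random tie-breaking already present in the definition of $q$ and $Q$. Concretely, set $\Theta^\prime = \Theta \times [0,1]$ and lift everything by making the new coordinate $u$ independent, uniform, and irrelevant to the data: $\pi^\prime_\text{prior}(\theta, u) = \pi_\text{prior}(\theta)$ and $\pi^\prime_\text{obs}(y\mid\theta,u) = \pi_\text{obs}(y\mid\theta)$, which yields $\pi^\prime_\text{post}((\theta,u)\mid y) = \pi_\text{post}(\theta\mid y)$; similarly take $\phi^\prime((\theta,u)\mid y) = \phi(\theta\mid y)$. Then choose the tie-breaking function with respect to the mixture $\mu_y := \tfrac12(\phi_y + \pi_{\text{post}}(\cdot\mid y))$: let $H(s\mid y)$ and $h(s\mid y)$ be the $f$-pushforward CDF and atom-size function of $\mu_y$, and define
\begin{equation*}
f^\prime((\theta, u), y) := H(f(\theta, y)\mid y) - u\, h(f(\theta, y)\mid y).
\end{equation*}
Using the mixture CDF rather than $C_{\phi,f}$ is essential so that \emph{both} atoms of $\phi_y$ and atoms of $\pi_{\text{post}}(\cdot\mid y)$ are smeared out by the $u$-factor.

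Property (2) is immediate because $\phi^\prime$ and $\pi^\prime_\text{post}$ both tensor the original distribution with $\mathrm{uniform}(0,1)$ on $u$. For property (1), note that $f^\prime$ pushes $\mu_y \otimes \mathrm{unif}$ to $\mathrm{uniform}(0,1)$ on $[0,1]$ (standard randomized PIT). Since the densities of $\phi_y$ and of $\pi_\text{post}(\cdot\mid y)$ with respect to $\mu_y$ are both bounded by $2$, the pushforwards of $\phi_y\otimes\mathrm{unif}$ and $\pi_\text{post}(\cdot\mid y)\otimes\mathrm{unif}$ under $f^\prime$ are absolutely continuous w.r.t.\ Lebesgue on $[0,1]$, hence have no atoms.

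For property (3), the sample $Q$ identity follows from a direct comparison: for any two draws $(\theta_1,u_1),(\theta_2,u_2)$, by case-splitting on whether $f(\theta_1,y) < = > f(\theta_2,y)$ and using that $H$ is non-decreasing with jumps of size $h$, one checks that almost surely $f^\prime((\theta_1,u_1),y) < f^\prime((\theta_2,u_2),y)$ iff $f(\theta_1,y) < f(\theta_2,y)$ or $[f(\theta_1,y)=f(\theta_2,y)$ and $u_1 > u_2]$. A classical calculation then shows that, for $\tilde u, u_1,\dots,u_M\sim\mathrm{unif}(0,1)$ iid, the count $\#\{m : u_m > \tilde u\}$ restricted to the tied indices is uniform on $\{0,\dots,N^{\mathtt{equals}}\}$ when $\tilde u$ is marginalized, reproducing exactly the distribution of $N^{\mathtt{less}}_{\phi,f,\tilde\theta,y}+K_{\phi,f,\tilde\theta,y}$.

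The hard part is continuous $q$-preservation. Since $\phi^\prime$ has no ties, $D_{\phi^\prime,f^\prime}\equiv 0$ and so $q_{\phi^\prime,f^\prime}(x\mid y) = F_{W_\pi}\bigl(F_{W_\phi}^{-1}(x)\bigr)$, where $W_\phi$ and $W_\pi$ are the $f^\prime$-pushforwards of $\phi_y\otimes\mathrm{unif}$ and $\pi_\text{post}(\cdot\mid y)\otimes\mathrm{unif}$ and $F_\phi,F_\pi$ are their CDFs. The target is to match this to $q_{\phi,f}(x\mid y) = \Pr\bigl(C_{\phi,f}(f(\tilde\theta,y)\mid y) - U\,D_{\phi,f}(f(\tilde\theta,y)\mid y) \leq x\bigr)$. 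The cleanest route I see is to decompose the $f$-support into (i) points that are atoms of the $\phi_y$-pushforward, (ii) points that are atoms of the $\pi_\text{post}$-pushforward only, and (iii) the common continuous part, and verify matching contributions piece by piece, repeatedly using $H = \tfrac12(C_{\phi,f}+C_f)$ and $h = \tfrac12(D_{\phi,f}+D_f)$ together with the identity $F_{W_\phi} + F_{W_\pi} = 2\,\mathrm{id}$ on $[0,1]$ inherited from $\mu_y\otimes\mathrm{unif}\mapsto\mathrm{unif}(0,1)$. The bookkeeping around the one-sided limits $H(s^-\mid y)$ at atom points is the main obstacle; everything else is direct substitution.
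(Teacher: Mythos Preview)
Your construction is correct and follows the same overall strategy as the paper: augment $\Theta$ by an independent $\mathrm{uniform}(0,1)$ coordinate, keep $\pi'$ and $\phi'$ as products with that uniform, and choose a new test quantity that uses the extra coordinate to smear out atoms. The paper's specific $f'$ is different from yours, though: instead of the randomized probability integral transform $f'=H(f)-u\,h(f)$ with respect to the mixture $\mu_y=\tfrac12(\phi_y+\pi_{\mathrm{post}}(\cdot\mid y))$, it adds explicit gaps on the real line, setting $f'=f+w+p\,\hat D$ where $\hat D=D_{\phi,f}+D_f$ and $w(s,y)=\sum_{v<s}\hat D(v\mid y)$. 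The paper's map $s\mapsto s+w(s,y)$ is strictly increasing on all of $\mathbb{R}$, so order preservation for non-tied values is immediate rather than almost-sure; on the other hand your PIT gives a clean absolute-continuity argument for the absence of ties (densities bounded by $2$ with respect to $\mu_y$), which the paper handles by inspection. Your observation that the continuous-$q$ preservation is ``the hard part'' is honest but perhaps overcautious: once you know the new ordering agrees with the old one on non-tied values and is uniformly random on tied values, the equality $q^{\pi}_{\phi,f}=q^{\pi'}_{\phi',f'}$ follows by the same one-line argument the paper uses for both $Q$ and $q$ simultaneously, without the case split on atom types you outline. In short, the route is the same, only the tie-breaking gadget differs.
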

\begin{proof}
The overall idea is that we can always smooth the implied ties in the distributions of the test quantity by introducing a gap at each possibly tied point and adding a uniformly random value.  Specifically, for each $y \in Y, s \in \mathbb{R}$ we define $\hat{D}_{\phi,f}(v | y) \in \mathbb{R}$, the set $V_{\phi,f}(y) \subset \mathbb{R}$ and number $w_{\phi,f}(s, y) \in [0,1]$ :
%
% Niko correctly pointed out that even though there is at most countable number of point masses, w may still be infinite - consider the ratio of two i.i.d. Poisson variables: V contains all rational numbers. Niko also suggested a fix as implemented below.
\begin{align*}
\hat{D}_{\phi,f}(v | y) &:= D^\pi_{\phi,f}(v | y) + D^\pi_{f}(v | y) \\
V_{\phi,f}(y) &:= \{v : \hat{D}_{\phi,f}(v | y) > 0 \} \\
w_{\phi,f}(s, y) &:= \sum_{v \in V_{\phi,f}(y), v < s} \hat{D}_{\phi,f}(v | y).
%\left|\{v \in V_{\phi,f}(y) \mid v < s\}\right|.
\end{align*}
For each $y$ there can be at most a countable number of point masses in the implied distributions and the total mass is at most $1$, so $w_{\phi,f}(s, y)$ will always be defined and smaller than $2$.

We now construct the new model, posterior, and test quantity by adding a new parameter $p$ uniformly distributed over the $[0,1]$ interval, so $\Theta^\prime := \Theta \times [0, 1]$ and for all $y \in Y, p \in [0,1]$ we set:
\begin{align*}
\theta^\prime &:= (\theta, p) \\
\pi^\prime_\text{joint}(y, \theta^\prime) &:=  \pi_\text{joint}(y, \theta) \\
\phi^\prime(\theta^\prime | y) &:= \phi(\theta | y) \\
f^\prime(\theta^\prime, y) &:= \begin{cases}
f(\theta,y) + w_{\phi,f}(f(\theta,y),y) + p \hat{D}_{\phi,f}(f(\theta,y) | y)  & \text{if } f(\theta,y) \in V_{\phi,f}(y) \\
f(\theta,y) + w_{\phi,f}(f(\theta,y),y) & \text{otherwise}.
\end{cases}    
\end{align*}
Here, $w$ provides  gaps as it increases by $\hat{D}_{\phi,f}(f(\theta,y)) > 0$ for each tied value. Those gaps are then filled uniformly randomly by adding scaled  $p$.  By inspection, there are no ties in $f^\prime$. The ordering of previously non-tied elements for both continuous and sample ranks has not changed and the order among those previously tied is uniformly random so both sample $Q$ and continuous $q$ do not change. Since $\forall y \in Y, \theta \in \Theta, p \in [0,1]: \pi^\prime_\text{post}(\theta^\prime | y) = \pi_\text{post}(\theta | y)$, the construction also directly ensures that $\phi = \pi_\text{post}$ if and only if $\phi^\prime = \pi^\prime_\text{post}$.
\end{proof}

\begin{lemma}[No ties and inverting CDFs]
\label{le:no_ties_invert}
For all models $\pi$, posterior families $\phi$, and test quantities $f: \Theta \times Y$:
\begin{enumerate}
    \item If model $\pi$ has no ties w.r.t.\ $f$ then  for all $y$ the quantile function $C^{-1}_f(s|y)$ is the inverse of $C_f(s|y)$. Additionally, $C_f(s|y)$  is a surjection on the $[0, 1]$ interval. 
    \item If $\phi$ has no ties w.r.t.\ $f$ then for all $y$ the quantile function $C^{-1}_{\phi, f}(s|y)$ is the inverse of $C_{\phi, f}(s|y)$. Additionally, $C_{\phi, f}(s|y)$  is a surjection on the $[0, 1]$ interval. 
    \item If both $\pi$ and $\phi$  have no ties w.r.t.\ $f$ then 
    $q_{\phi,f}(x|y) = C_f \left(\left.C^{-1}_{\phi,f}(x | y) \right| y \right)$  and $q_{\phi,f}(x|y)$ is a bijection from $[0, 1]$ to $[0, 1]$.
\end{enumerate}
\end{lemma}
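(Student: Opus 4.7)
Parts 1 and 2 are parallel, so the plan is to state and prove a single CDF-inversion claim and apply it to $\pi_\text{post}(\cdot|y)$ for part 1 and to $\phi_y$ for part 2. The function $C_f(\cdot|y)$ is the CDF of the pushforward of a probability measure on $\Theta$ through the measurable map $f(\cdot,y)$, so it is non-decreasing and right-continuous, with the usual limits $0$ at $-\infty$ and $1$ at $+\infty$. The no-ties hypothesis $D_f(s|y) = 0$ for all $s$ rules out atoms and hence makes $C_f(\cdot|y)$ also left-continuous, so continuous on $\bar{\mathbb{R}}$. Continuity together with the endpoint limits and the intermediate value theorem give surjectivity onto $[0,1]$, while the standard properties of the left-continuous generalized inverse $C^{-1}_f(x|y) = \inf\{s : x \leq C_f(s|y)\}$ applied to a continuous CDF yield $C_f(C^{-1}_f(x|y)|y) = x$ for every $x \in [0,1]$, which is the inverse relationship the lemma asks for.

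For part 3, I would first use no-ties on $\phi$ to collapse the randomization in $r_{\phi,f}$: since $D_{\phi,f}(f(\tilde\theta, y)|y) = 0$ for every $\tilde\theta$, the uniform variable $U$ drops out and $r_{\phi,f}(x | \tilde\theta, y) = \mathbb{I}[C_{\phi,f}(f(\tilde\theta, y)|y) \leq x]$. Continuity of $C_{\phi,f}(\cdot|y)$ together with part 2 shows that $\{s : C_{\phi,f}(s|y) \leq x\}$ is the half-line $(-\infty, C^{-1}_{\phi,f}(x|y)]$ (any flat interval of $C_{\phi,f}$ at height $x$ lies outside the support of the pushforward of $\phi_y$, so the boundary choice does not affect the rewriting). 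Averaging against $\pi_\text{post}$ then yields
\begin{equation*}
q_{\phi,f}(x|y) = \int_\Theta \mathrm{d}\tilde\theta \: \pi_\text{post}(\tilde\theta|y)\, \mathbb{I}\!\left[f(\tilde\theta, y) \leq C^{-1}_{\phi,f}(x|y)\right] = C_f\!\left(\left. C^{-1}_{\phi,f}(x|y) \right| y\right).
\end{equation*}

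For the bijection claim, the composition $q_{\phi,f}(\cdot|y) = C_f(\cdot|y) \circ C^{-1}_{\phi,f}(\cdot|y)$ chains $C^{-1}_{\phi,f}$, which is strictly increasing on $(0,1)$ because $C_{\phi,f}$ is continuous, with $C_f$, which is continuous and surjective onto $[0,1]$; continuity of the composition and the endpoint values $q_{\phi,f}(0|y) = 0$, $q_{\phi,f}(1|y) = 1$ then give surjectivity by the intermediate value theorem. I expect the injectivity half of the bijection to be the main obstacle: a flat interval of $C_f$ lying inside the range of $C^{-1}_{\phi,f}$ would make $q_{\phi,f}$ locally constant, and the no-ties assumption alone does not exclude this. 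The natural remedy is to observe that such a configuration would place $\pi_\text{post}$-mass where $\phi_y$ has none, and then either appeal to a tie-removal/support-matching argument analogous to Lemma~1 or, failing that, interpret the bijection claim modulo Lebesgue-null sets; this is the technical point I would want to nail down most carefully.
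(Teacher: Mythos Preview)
Your argument for parts 1 and 2 and for the formula in part 3 is essentially the paper's: the paper dispatches 1 and 2 in one sentence (``follow directly from the basic properties of CDFs'') and derives the formula for $q_{\phi,f}$ exactly as you do, by collapsing $r_{\phi,f}$ to an indicator once $D_{\phi,f}\equiv 0$, then rewriting $\mathbb{I}[C_{\phi,f}(f(\tilde\theta,y)|y)\le x]$ as $\mathbb{I}[f(\tilde\theta,y)\le C^{-1}_{\phi,f}(x|y)]$ and integrating against $\pi_\text{post}$. Your write-up is simply more explicit about continuity, surjectivity, and the Galois-connection step.

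On the bijection claim your instinct is correct and the paper does not help you: its proof stops after establishing the formula $q_{\phi,f}(x|y)=C_f(C^{-1}_{\phi,f}(x|y)|y)$ and never returns to argue injectivity. Your counterexample-style worry is real---take $Y$ a singleton, $f(\theta,y)=\theta$, $\pi_\text{post}$ uniform on $[0,1]$, $\phi_y$ uniform on $[0,2]$: both are atomless, yet $q_{\phi,f}(x)=\min(2x,1)$ is constant on $[1/2,1]$. So the bijection assertion as written needs an extra hypothesis (matching supports, or absolute continuity of $\pi_\text{post}(\cdot|y)$ with respect to $\phi_y$) that neither the lemma nor its proof supplies. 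Your proposed escape via Lemma~1 (tie removal) does not fix this, since that lemma smooths point masses but does not align supports; the ``modulo null sets'' reading is closer to how the paper actually uses the result downstream (e.g.\ in the change-of-variable $z=q^{-1}_{\phi,f}(x|y)$ in Theorem~3), so flagging it as an implicit regularity assumption is the honest route.
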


\begin{proof}
    Items 1 and 2 follow directly from the basic properties of CDFs. For item 3, when there are no ties, the definition of continuous rank simplifies to $r_{\phi,f}(x | \tilde\theta, y) = \mathbb{I}\left[x \geq C_{\phi,f} \left(\left. f \left(\tilde\theta, y \right) \right| y \right)  \right]$ and thus:
\begin{gather*}
q_{\phi,f}(x|y) = 
\int \mathrm{d} \tilde\theta \: \pi_\text{post}(\tilde\theta | y) \mathbb{I}\left[x \geq C_{\phi,f} \left(\left.f \left(\tilde\theta, y \right) \right| y \right)  \right] \\
= \int \mathrm{d} \tilde\theta \: \pi_\text{post}(\tilde\theta | y) \mathbb{I}\left[f \left(\tilde\theta, y \right) \leq C^{-1}_{\phi,f}(x | y)\right] =  C_f \left(\left.C^{-1}_{\phi,f}(x | y) \right| y \right),
\end{gather*}
where the second equality holds because $C^{-1}_{\phi,f}$ is the inverse of $C_{\phi,f}$ for all $y$.
\end{proof}

We now move to establishing a close correspondence between the sample SBC and continuous SBC.

\begin{theorem}[Sample SBC implies continuous SBC] 
\label{ath:sample_implies_continous}
Let $\phi$ be any posterior family over $Y$ and $\Theta$, and let $f$ be a test quantity:

\begin{enumerate}
    \item For any fixed $y \in Y$ if as the number of sample draws $M \to \infty$ we have $\forall i \in \{0, \dots, M \}: Q_{\phi,f}(i | y) \to \frac{i + 1}{M + 1}$ then $\forall x \in [0, 1]: q_{\phi,f}(x | y) = x$.
    \item If as $M \to \infty$ we have $\forall i \in \{0, \dots, M \}: \int_Y \mathrm{d}y \, Q_{\phi,f}(i | y) \pi_\text{marg}(y) \to \frac{i + 1}{M + 1}$ then $\phi$ passes continuous SBC for $f$.
\end{enumerate}

\end{theorem}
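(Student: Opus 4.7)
The plan is to show that $N^\mathtt{total}/M$ converges in distribution to the random variable appearing inside the probability that defines $r_{\phi,f}$, and then to translate convergence of the sample CDF $Q$ at index $i = \lfloor xM\rfloor$ into a pointwise identity $q_{\phi,f}(x \mid y) = x$.

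To avoid nuisance from point masses I first apply Lemma~\ref{le:tie_removal} to replace $(\pi, \phi, f)$ with a tie-free $(\pi', \phi', f')$ that preserves both sample $Q$ and continuous $q$. In the reduced setting $N^\mathtt{equals} = 0$ and $N^\mathtt{total} = N^\mathtt{less}$, and by the strong law of large numbers (SLLN) applied to the i.i.d.\ draws $\theta_1, \dots, \theta_M \sim \phi'_y$ we have almost surely $N^\mathtt{total}/M \to C_{\phi', f'}(f'(\tilde\theta, y) \mid y)$. The continuous rank CDF simplifies to $r_{\phi', f'}(x \mid \tilde\theta, y) = \mathbb{I}[C_{\phi', f'}(f'(\tilde\theta, y) \mid y) \leq x]$, so for every $\tilde\theta$ with $C_{\phi', f'}(f'(\tilde\theta, y) \mid y) \neq x$ the SLLN yields $R_{\phi', f'}(\lfloor xM \rfloor \mid \tilde\theta, y) \to r_{\phi', f'}(x \mid \tilde\theta, y)$.

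Integrating over $\tilde\theta \sim \pi'_\text{post}(\cdot \mid y)$ and invoking bounded convergence (all integrands lie in $[0,1]$) yields $Q_{\phi', f'}(\lfloor xM \rfloor \mid y) \to q_{\phi', f'}(x \mid y)$ at every continuity point $x$ of $q_{\phi', f'}(\cdot \mid y)$, because at such $x$ the exceptional set $\{\tilde\theta : C_{\phi', f'}(f'(\tilde\theta, y) \mid y) = x\}$ has $\pi'_\text{post}$-measure zero. Under the hypothesis of part~1 the left side tends to $x$, so $q_{\phi,f}(x \mid y) = q_{\phi', f'}(x \mid y) = x$ on the dense set of continuity points. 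Because $q_{\phi,f}(\cdot \mid y)$ is a right-continuous, non-decreasing function bounded by $1$, monotonicity and right-continuity extend the identity to every $x \in [0,1]$. Part~2 follows from the same chain of arguments after additionally integrating against $\pi_\text{marg}(y)$, with bounded convergence once more justifying the interchange of limit and integration.

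The main obstacle is the dependence of the continuity points of $r_{\phi,f}(\cdot \mid \tilde\theta, y)$ on $\tilde\theta$, which complicates passing from pointwise-in-$\tilde\theta$ limits to a limit of the $\tilde\theta$-integral. The Tie Removal Lemma disposes of this at the outset by turning $r$ into an indicator and making the law of $C_{\phi', f'}(f'(\tilde\theta, y) \mid y)$ under $\pi'_\text{post}$ atomless except on a countable set of $x$. Without that reduction one would instead need Slutsky's theorem to handle the random tie-breaking contribution $K/M$, together with a more delicate weak-convergence argument for the mixture distribution over $\tilde\theta$; either route ends with the same right-continuity extension on $q_{\phi,f}(\cdot \mid y)$.
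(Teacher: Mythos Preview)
Your argument is correct and follows the same skeleton as the paper's proof: invoke the tie-removal lemma, establish pointwise convergence of $R_{\phi,f}(\lfloor xM\rfloor \mid \tilde\theta, y)$ to the indicator $r_{\phi,f}(x \mid \tilde\theta, y)$ away from the single exceptional value of $x$, and pass the limit through the $\tilde\theta$- (and for part~2 the $y$-) integral by bounded convergence. Two technical choices differ. First, the paper derives the pointwise limit via a CLT approximation to the binomial CDF, whereas you use the SLLN for $N^{\mathtt{less}}/M$; your route is more elementary and fully sufficient, since only convergence to a degenerate limit is required. Second, the paper asserts convergence ``almost everywhere w.r.t.\ $x$'' and then applies dominated convergence without addressing that the exceptional $x$ depends on $\tilde\theta$; for a \emph{fixed} $x$ the set $\{\tilde\theta: C_{\phi,f}(f(\tilde\theta,y)\mid y)=x\}$ can have positive $\pi_{\text{post}}$-mass (this happens whenever $\phi_y$ has a gap in its support on which $\pi_{\text{post}}(\cdot\mid y)$ places mass). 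You handle this correctly by first restricting to continuity points of the CDF $q_{\phi,f}(\cdot\mid y)$---exactly where that set has measure zero---and then extending the identity $q_{\phi,f}(x\mid y)=x$ to all of $[0,1]$ via monotonicity and right-continuity; this step is genuinely needed and makes your write-up tighter than the paper's on this point.
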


\begin{proof}

Using Lemma~\ref{le:tie_removal} it once again suffices to prove the case with no ties. We start by showing that as $M \to \infty$ normalized sample ranks converge almost everywhere to the continuous ranks.

Assuming no ties, $R_{\phi,f}(\lfloor xM \rfloor \,|\, \tilde\theta, y) = \text{Bin}(\lfloor xM \rfloor \,|\, M, C_{\phi,f}(f(\tilde\theta, y)))$ where \\ $\text{Bin}(K | M, p)$ is the CDF of a binomial distribution with $M$ trials and success probability $p$ evaluated at $K$.  Using the central limit theorem, we have  

\begin{equation*}
\lim_{M \to \infty} \left(\text{Bin}\Big(\lfloor xM \rfloor \,\Big|, M, p\Big) - \mbox{Nor}\left(\lfloor xM \rfloor \,\left|\, Mp,\sqrt{Mp(1-p)}\right.\right)\right) = 0,
\end{equation*}
where $\mbox{Nor}(y | \mu, \sigma)$ is the CDF of a normal distribution with mean $\mu$ and standard deviation $\sigma$ evaluated at $y$. We can now inspect the limiting behaviour of the $z$-score:

\begin{gather*}
\lim_{M \to \infty}\frac{\lfloor xM \rfloor - Mp}{\sqrt{Mp(1-p)}} =
\lim_{M \to \infty}\frac{\lfloor xM \rfloor -xM}{\sqrt{Mp(1-p)}} + \lim_{M \to \infty}\frac{xM- Mp}{\sqrt{Mp(1-p)}} \\
= 0 + \lim_{M \to \infty}\frac{(x - p)\sqrt{M}}{\sqrt{p(1-p)}} = 
\begin{cases}
  +\infty & \text{for } x > p \\
  0  & \text{for } x = p \\
  -\infty & \text{for } x < p.
\end{cases}
\end{gather*}
Thus,
\begin{gather*}
\lim_{M \to \infty}R_{\phi,f}(\lfloor xM \rfloor \,|\, \tilde\theta, y) = \left.\begin{cases}
  1 & \text{for } x > C_{\phi,f}(f(\tilde\theta, y)) \\
  \frac{1}{2}  & \text{for } x = C_{\phi,f}(f(\tilde\theta, y)) \\
  0 & \text{for } x < C_{\phi,f}(f(\tilde\theta, y))
\end{cases}\right. \\
= \begin{cases}
r_{\phi,f}(x | \tilde\theta, y) & \text{for } x \neq C_{\phi,f}(f(\tilde\theta, y))
\\
\frac{1}{2} & \text{for } x = C_{\phi,f}(f(\tilde\theta, y)).
\end{cases}
\end{gather*}

We have thus established pointwise convergence of $R_{\phi,f}(\lfloor xM \rfloor \,|\, \tilde\theta, y)$ to $r_{\phi,f}(x | \tilde\theta, y)$  almost everywhere w.r.t.\ $x$, because we assume there are no ties. This means that we can satisfy the conditions of the dominated convergence theorem.

We can now prove both claims in an analogous way. We start with  claim 2, which is more complex. For all $x \in [0, 1]$:
\begin{align}
\lim_{M \to \infty}\int_Y \: & \mathrm{d}y \: Q_{\phi,f}(\lfloor xM \rfloor \,|\,y) \pi_\text{marg}(y) \notag \notag \\
&= \lim_{M \to \infty}\int_Y \: \mathrm{d}y \: \int_\Theta \mathrm{d}\tilde\theta \: \pi_\text{post}(\tilde\theta|y)  R_{\phi,f}(\lfloor xM \rfloor\,|\, \tilde\theta , y) \pi_\text{marg}(y) \notag \\
&= \int_Y \: \mathrm{d}y \: \int_\Theta \mathrm{d}\tilde\theta \: \pi_\text{post}(\tilde\theta|y)  r_{\phi,f}(x| \tilde\theta | y) \pi_\text{marg}(y) \notag \\ 
&= \int_Y \mathrm{d}y \: q_{\phi,f}(x|y) \pi_\text{marg}(y).
\label{eq:sample_sbc_limit_lh}    
\end{align}
Equation (\ref{eq:sample_sbc_limit_lh}) holds regardless of the actual rank distributions. We now use the assumption that as $M \to \infty$ the data-averaged rank distribution becomes uniform, which implies $\forall x \in [0, 1]$: 
\begin{equation}
\lim_{M \to \infty}\int_Y \: \mathrm{d}y \: Q_{\phi,f}(\lfloor xM \rfloor \,|\,y) \pi_\text{marg}(y)  = 
\lim_{M \to \infty} \frac{\lfloor xM \rfloor+1}{M + 1} = x.
\label{eq:sample_sbc_limit_rh}
\end{equation}
Combining (\ref{eq:sample_sbc_limit_lh}) and (\ref{eq:sample_sbc_limit_rh}), we get that $\int_Y \mathrm{d}y \: q_{\phi,f}(x|y) \pi_\text{marg}(y) = x$ and therefore $\phi$ passes continuous SBC w.r.t.\ $f$. The reasoning for claim 1 is analogous, only omitting the integration over $Y$.
\end{proof}

\begin{theorem}[Continuous SBC implies sample SBC] 
\label{ath:continuous_implies_sample}
Let $\phi$ be any posterior family over $Y$, $\Theta$ then for all $M \in \mathbf{N}$ and any test quantity $f$:
\begin{enumerate}
    \item For any $y \in Y$, if $\forall x \in [0,1]: q_{\phi,f}(x|y) = x$ then $\forall i \in \{0, \dots, M - 1 \}: Q_{\phi,f}(i | y) = \frac{i + 1}{M + 1}$.
    \item If $\phi$ passes continuous SBC w.r.t.\ $f$, then $\phi$ passes $M$-sample SBC w.r.t.\ $f$. 
\end{enumerate}

\end{theorem}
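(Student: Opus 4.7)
The plan is to first invoke Lemma~\ref{le:tie_removal} to reduce to the case where neither $\pi$ nor $\phi$ has ties with respect to $f$. The lemma gives us $\pi', \phi', f'$ that (i) agree with the original in both $Q$ and $q$, and (ii) have no ties, so it suffices to prove each claim in the no-ties setting. Dropping primes, we may thus assume throughout that $D_{\phi,f}(\cdot|y) \equiv 0$ and $D_f(\cdot|y) \equiv 0$ for every $y \in Y$.

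With ties eliminated we have $N^{\mathtt{equals}}_{\phi,f,\tilde\theta,y} = 0$ almost surely, so $N^{\mathtt{total}}_{\phi,f,\tilde\theta,y} = N^{\mathtt{less}}_{\phi,f,\tilde\theta,y}$. Conditional on $(\tilde\theta, y)$, the $M$ indicators $\mathbb{I}[f(\theta_m,y) < f(\tilde\theta, y)]$ are i.i.d.\ Bernoulli with success probability $p := C_{\phi,f}(f(\tilde\theta,y) | y)$, using that the no-ties assumption makes $\mbox{Pr}(f(\theta_m,y) < f(\tilde\theta,y)) = \mbox{Pr}(f(\theta_m,y) \leq f(\tilde\theta,y)) = C_{\phi,f}(f(\tilde\theta,y)|y)$. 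Hence the sample rank CDF is a Binomial CDF,
\begin{equation*}
R_{\phi,f}(i \mid \tilde\theta, y) = \sum_{j=0}^{i} \binom{M}{j} p^{j} (1-p)^{M-j}, \qquad p = C_{\phi,f}(f(\tilde\theta,y) | y).
\end{equation*}
Meanwhile, in the no-ties case Lemma~\ref{le:no_ties_invert} gives $q_{\phi,f}(x|y) = C_f(C^{-1}_{\phi,f}(x|y)|y)$, which is exactly the CDF (at $x$) of the random variable $p = C_{\phi,f}(f(\tilde\theta,y)|y)$ when $\tilde\theta \sim \pi_\text{post}(\cdot|y)$. Thus the hypothesis of claim~1, $q_{\phi,f}(x|y) = x$, is equivalent to saying that $p$ is uniformly distributed on $[0,1]$ under $\tilde\theta \sim \pi_\text{post}(\cdot|y)$; and the hypothesis of claim~2, continuous SBC, is equivalent to saying $p$ is uniform on $[0,1]$ under the joint law $\pi_\text{marg}(y)\pi_\text{post}(\tilde\theta|y)$ on $(\tilde\theta, y)$.

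Both claims then reduce to the standard Beta-integral identity: for $U \sim \mathrm{uniform}(0,1)$,
\begin{equation*}
\mathbb{E}\bigl[\mbox{Bin}(i \mid M, U)\bigr] = \sum_{j=0}^{i}\binom{M}{j}\int_{0}^{1} u^{j}(1-u)^{M-j}\,\mathrm{d}u = \sum_{j=0}^{i}\binom{M}{j}\frac{j!(M-j)!}{(M+1)!} = \frac{i+1}{M+1}.
\end{equation*}
For claim~1 this gives $Q_{\phi,f}(i|y) = \mathbb{E}_{\tilde\theta}[R_{\phi,f}(i|\tilde\theta,y)] = (i+1)/(M+1)$; for claim~2, integrating additionally against $\pi_\text{marg}(y)$ and swapping the order of integration (by Fubini, all integrands being bounded by $1$) yields the same conclusion.

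I expect the only delicate point to be making rigorous the passage from $q_{\phi,f}(x|y) = x$ to ``$p$ is uniformly distributed,'' since $C^{-1}_{\phi,f}$ is a quantile (generalized) inverse and one must argue that $C_{\phi,f}(f(\tilde\theta,y)|y)$ is uniform when $f(\tilde\theta,y)$ has CDF $C_f(\cdot|y) = C_{\phi,f}(\cdot|y)$ composed with the identity --- but this is immediate from Lemma~\ref{le:no_ties_invert} together with the probability-integral-transform argument in the absence of ties. Everything else is routine bookkeeping.
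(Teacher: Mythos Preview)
Your proposal is correct and follows essentially the same approach as the paper: both reduce to the no-ties case via Lemma~\ref{le:tie_removal}, recognize the Binomial structure of the rank conditional on $p = C_{\phi,f}(f(\tilde\theta,y)\mid y)$, and conclude via the Beta-integral identity. Your framing is slightly more probabilistic---you identify $q_{\phi,f}(\cdot\mid y)$ directly as the CDF of $p$ and read off uniformity of $p$ from the hypothesis---whereas the paper arrives at the same Beta integral through an explicit change of variables $z = q_{\phi,f}^{-1}(x\mid y)$ with $\partial q/\partial z = 1$; but these are two presentations of the same computation.
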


\begin{proof}

If either $\phi$ or $\pi$ has ties w.r.t.\ $f$, we can use Lemma~\ref{le:tie_removal} to construct a model and posterior family with no ties but the same continuous $q$ and sample $Q$; therefore it suffices to prove the statement when there are no ties.

When there are no ties, then given $y$, ${q_{\phi,f}(x|y)}$ is a bijection and we can thus also define $q^{-1}$. Additionally, $\randvar{\tilde\theta} \sim \pi_\text{prior}(\tilde\theta)$ and $\theta_1, \dots \theta_M \sim \phi_y$ are all conditionally independent given $y$. 

This allows us to determine the probability of the first $r$ draws being smaller than the prior sample $\randvar{\tilde\theta} \sim \pi(\theta)$ and the remaining $M - r$ draws being larger conditional on a specific $y$:
\begin{gather*}
\hspace{-1.3in}
\mbox{Pr} \left( \left(f(\theta_1, y) < f(\randvar{\tilde\theta}, y)\right) \land \ldots \land \left(f(\theta_r, y) < f(\randvar{\tilde\theta}, y)\right) \right.\\
  \left.\left. \land \left(f(\theta_{r + 1}, y) \geq f(\randvar{\tilde\theta}, y)\right) \land \ldots \land \left(f(\theta_{M}, y) \geq f(\randvar{\tilde\theta}, y)\right) \right| y \right) \\ =
\int_{0}^{1}\mathrm{d}\tilde x \int_{\mathbf{x} \in [0, 1]^M}\mathrm{d}\mathbf{x}\: 
  \prod_{i = 1}^{r} \mathbb{I} \left[C^{-1}_{\phi,f}(x_i|y) < C_f^{-1}(\tilde x | y)\right]
  \prod_{i = r + 1}^{M} \mathbb{I}\left[C^{-1}_{\phi,f}(x_i | y) \geq C_f^{-1}(\tilde x | y)\right] \\ =
\int_{0}^{1}\mathrm{d}\tilde x \int_{\mathbf{x} \in [0, 1]^M}\mathrm{d}\mathbf{x}\: 
  \prod_{i = 1}^{r} \mathbb{I}\left[x_i < C_{\phi,f}\left( \left. C_f^{-1}(\tilde x | y) \right|  y\right)\right]
  \prod_{i = r + 1}^{M} \mathbb{I}\left[x_i \geq C_{\phi,f}\left( \left. C_f^{-1}(\tilde x|y) \right|y \right)\right] \\ 
= \int_{0}^{1}\mathrm{d}\tilde x \int_{0 < x_1, \ldots, x_k < q^{-1}_{\phi,f}(\tilde x | y)\leq x_{r+1} \ldots x_L < 1}^{}\mathrm{d}x_1\: 1 \\
 = \int_{0}^{1}\mathrm{d}x \: \left( q^{-1}_{\phi,f}(x | y)\right)^r \left(1 -  q^{-1}_{\phi,f}(x | y) \right)^{M-r}.
\end{gather*}
Since $\theta_1, \dots , \theta_L$ are independent given $y$, we can easily extend to all possible orderings and substitute $z = q_{\phi,f}^{-1}(x|y)$:
\begin{gather*}
\forall r \in {0, 1, \dots , M}: \mbox{Pr} \left(\left.\sum_{i=1}^M \mathbb{I} \left[f(\theta_i, y) < f\left(\randvar{\tilde\theta}, y\right)\right] = r \right| y \right)	\notag\\
 = \int_{0}^{1}\mathrm{d}x\,\binom{M}{r}\left(q_{\phi,f}^{-1}(x|y)\right)^{r}\left(1-q_{\phi,f}^{-1}(x|y)\right)^{M-r} \\
  =	\int_{0}^{1}\mathrm{d}z\,\binom{M}{r}z^{r}\left(1-z\right)^{M-r} \frac{\partial}{\partial z} q_{\phi,f}(z | y) \notag
\end{gather*}
From the precondition in claim 1, $q_{\phi,f}(x|y) = x$ and thus $\frac{\partial}{\partial z} q_{\phi,f}(z | y) = 1$. Then $\int_{0}^{1}\mathrm{d}z \, z^{r}\left(1-z\right)^{M-r} = B(r + 1, M - r + 1)$ is the beta integral. Thus,
\begin{gather*}
\forall r \in {0, 1, \dots , M}: \mbox{Pr} \left(\left.\sum_{i=1}^M \mathbb{I} \left[f(\theta_i, y) < f(\randvar{\tilde\theta}, y)\right] = r \right| y \right)	\\ =
    \binom{M}{r} B(r + 1, M - r + 1) = \frac{1}{M + 1}.
\end{gather*}
Therefore,
\begin{equation*}
    Q_{\phi,f}(j | y) = \sum_{r = 0}^{j} \mbox{Pr} \left(\left.\sum_{i=1}^M \mathbb{I} \left[f(\theta_i, y) < f(\randvar{\tilde\theta}, y)\right] = r \right| y \right) = \frac{j + 1}{M + 1},
\end{equation*}
which proves claim 1. For claim 2 we investigate the unconditional probability,
\begin{gather*}
\forall r \in {0, 1, \dots , M}: \mbox{Pr} \left(\sum_{i=1}^M \mathbb{I} \left[f(\theta_i, y) < f(\randvar{\tilde\theta}, y)\right] = r\right)	\notag\\
= \int_{Y}\mathrm{d}y\: \pi_\text{marg}\left(y\right) \mbox{Pr} \left(\left.\sum_{i=1}^M \mathbb{I}\left[f(\theta_i, y) < f(\randvar{\tilde\theta}, y)\right] = r\right | y \right) \\ 
  =	\int_{Y}\mathrm{d}y \: \pi_\text{marg}\left(y\right)   \int_{0}^{1}\mathrm{d}z\,\binom{M}{r}z^{r}\left(1-z\right)^{M-r} \frac{\partial}{\partial z} q_{\phi,f}(z | y) \notag \\
 =\int_{0}^{1}\mathrm{d}z\binom{M}{r}z^{r}\left(1-z\right)^{M-r}\frac{\partial}{\partial z}\int_{Y}\mathrm{d}y\,\pi_\text{marg}\left(y\right)q_{\phi,f}\left(z | y\right).
\end{gather*}
Since claim 2 assumes $\phi$ satisfies SBC w.r.t.\ $f$, we have $\frac{\partial}{\partial z}\int_{Y}\mathrm{d}y\,\pi_\text{marg}\left(y\right)q_{\phi,f}\left(z | y\right) = 1$ and thus we can proceed as in claim 1:
\begin{gather*}
\int_Y \mathrm{d}y \: Q_{\phi,f}(j | y) \pi_\text{marg}(y) = 
\sum_{r = 0}^j \mbox{Pr} \left(\sum_{i=1}^M \mathbb{I} \left[f(\theta_i, y) < f(\randvar{\tilde\theta}, y)\right] = r\right) \\
   = \sum_{r = 0}^j  \frac{1}{M + 1} = \frac{j + 1}{M + 1}.
\end{gather*}
\end{proof}

With those foundations ready we now focus on proving statements about continuous SBC with the understanding that thanks to Theorems~\ref{ath:procedural_sbc}-\ref{ath:continuous_implies_sample} they also shed light on SBC as is actually implemented in software.

\begin{theorem}[Correct posterior and $q$] 
\label{ath:SBC_correct}
For any $y \in Y$, if $\forall \theta \in \Theta: \phi(\theta | y) = \pi_\text{post}(\theta | y)$ then for any test quantity $f$ we have $\forall x \in [0, 1] : q_{\phi, f}(x | y) = x$.
\end{theorem}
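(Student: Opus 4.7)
The plan is to first dispatch the case where neither $\pi$ nor $\phi$ has ties with respect to $f$, and then reduce the general case to this via Lemma~\ref{le:tie_removal}. In the no-ties case, the hypothesis $\phi(\theta \mid y) = \pi_\text{post}(\theta \mid y)$ for all $\theta$ immediately gives $C_{\phi,f}(s \mid y) = C_f(s \mid y)$ for every $s \in \bar{\mathbb{R}}$, and hence also $C^{-1}_{\phi,f}(x \mid y) = C^{-1}_f(x \mid y)$ for every $x \in [0,1]$. By item~3 of Lemma~\ref{le:no_ties_invert}, $q_{\phi,f}(x \mid y) = C_f\bigl(C^{-1}_{\phi,f}(x \mid y) \bigm| y\bigr) = C_f\bigl(C^{-1}_f(x \mid y) \bigm| y\bigr)$, and by item~1 of the same lemma $C^{-1}_f(\cdot \mid y)$ is a genuine inverse of $C_f(\cdot \mid y)$ as a surjection onto $[0,1]$, so this composition equals $x$, as required.

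For the general case in which ties may be present, I apply Lemma~\ref{le:tie_removal} to produce a tie-free model $\pi'$, posterior family $\phi'$, and test quantity $f'$. The lemma guarantees that continuous $q$ is preserved pointwise in $y$, so it suffices to prove $q^{\pi'}_{\phi',f'}(x \mid y) = x$ and then transfer back. Inspecting the construction in the proof of Lemma~\ref{le:tie_removal}, one has $\pi'_\text{joint}(y,\theta') = \pi_\text{joint}(y,\theta)$ and $\phi'(\theta' \mid y) = \phi(\theta \mid y)$, so the identity $\phi(\theta \mid y) = \pi_\text{post}(\theta \mid y)$ at the fixed $y$ transfers pointwise to $\phi'(\theta' \mid y) = \pi'_\text{post}(\theta' \mid y)$ at the same $y$. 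The no-ties argument from the previous paragraph then yields $q^{\pi'}_{\phi',f'}(x \mid y) = x$, and the preservation clause of Lemma~\ref{le:tie_removal} completes the proof.

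The main subtlety to watch for is that Lemma~\ref{le:tie_removal} states its equivalence of posterior families and true posteriors as a global statement over all $(\theta, y)$, while Theorem~\ref{ath:SBC_correct} hypothesizes the identity only at a single $y$. This is harmless because the tie-removal construction is defined pointwise in $y$, so the transfer between $\phi = \pi_\text{post}$ and $\phi' = \pi'_\text{post}$ goes through at each fixed $y$ individually. Beyond this small observation, the argument is essentially a direct application of the two supporting lemmas and involves no new calculations.
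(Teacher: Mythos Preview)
Your proof is correct and follows essentially the same approach as the paper: reduce to the no-ties case via Lemma~\ref{le:tie_removal}, then use Lemma~\ref{le:no_ties_invert} to write $q_{\phi,f}(x\mid y) = C_f\bigl(C^{-1}_{\phi,f}(x\mid y)\bigm| y\bigr) = C_f\bigl(C^{-1}_f(x\mid y)\bigm| y\bigr) = x$. Your explicit treatment of the pointwise-in-$y$ aspect of the tie-removal construction is a welcome clarification; the paper's version is terser and in fact contains a spurious integral over $Y$ in the displayed equation (harmless, since the integrand is already $x$), so your write-up is arguably cleaner.
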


\begin{proof}
When $\pi$ has ties w.r.t.\ $f$, we can use Lemma~\ref{le:tie_removal} to construct a model with no ties with the same continuous $q$, preserving correctness. So we only need to prove the case where there are no ties.

Without ties, we have $\forall s \in \bar{\mathbb{R}}, y \in Y: C_{\phi,f}(s | y) = C_f(s | y)$ and $\forall x \in [0,1], y \in Y: C_f \left(C^{-1}_f( x | y)\right) = x$ (Lemma~\ref{le:no_ties_invert}) and thus
\begin{gather*}
    q_{\phi,f}(x | y) = \int_Y \mathrm{d}y \: C_{f}\left(\left.C^{-1}_{\phi,f}(x | y) \right| y \right) \pi_\text{marg}(y) = 
    \int_Y \mathrm{d}y \: x \pi_\text{marg}(y) \\
    = x \int_Y \mathrm{d}y \: \pi_\text{marg}(y) = x.
\end{gather*}
\end{proof}

\begin{corollary*}
The correct posterior passes continuous SBC. Combining the result with Theorem~\ref{ath:continuous_implies_sample}, the correct posterior will produce uniform $Q$ and pass $M$-sample SBC for any $M$ and $f$.
\end{corollary*}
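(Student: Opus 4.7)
The plan is to chain together the two preceding theorems, since the corollary is essentially a direct transitivity argument rather than a new result. First I would apply Theorem~\ref{ath:SBC_correct} pointwise in $y$: under the hypothesis $\phi = \pi_\text{post}$, that theorem gives $q_{\phi,f}(x \mid y) = x$ for every $x \in [0,1]$ and every $y \in Y$. To promote this to continuous SBC I would integrate against $\pi_\text{marg}(y)$, using only that $\pi_\text{marg}$ is a normalized density on $Y$:
\[
\int_Y \mathrm{d}y \: q_{\phi,f}(x \mid y) \pi_\text{marg}(y) \;=\; x \int_Y \mathrm{d}y \: \pi_\text{marg}(y) \;=\; x.
\]
This establishes the first half of the corollary, that the correct posterior passes continuous SBC with respect to every test quantity $f$.

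Second, I would invoke Theorem~\ref{ath:continuous_implies_sample}. Part 1, applied pointwise in $y$ to the equality $q_{\phi,f}(x \mid y) = x$, yields the uniform sample CDF $Q_{\phi,f}(i \mid y) = \frac{i+1}{M+1}$ for every $M$, every $y$, and every $i \in \{0, \ldots, M-1\}$; part 2, applied to the data-averaged statement just derived, delivers that $\phi$ passes $M$-sample SBC for every $M$. If one wants the assertion in terms of the procedural random variable $N^\mathtt{total}$, Theorem~\ref{ath:procedural_sbc} provides the final equivalence $N^\mathtt{total} \sim \mathrm{uniform}(0, M)$.

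There is no genuine obstacle at this step; all analytical difficulty was absorbed upstream. In particular, Theorem~\ref{ath:SBC_correct} already handled the subtlety of ties via the reduction in Lemma~\ref{le:tie_removal}, and Theorem~\ref{ath:continuous_implies_sample} handled the binomial/beta combinatorics that relate continuous ranks to finite-sample ranks. The only thing requiring verification at the corollary step is that the pointwise identity $q_{\phi,f}(x \mid y) = x$ survives marginalization over $y$, which is immediate from the normalization of $\pi_\text{marg}$. I would therefore keep the proof to two or three lines, essentially a chain of citations to the preceding results.
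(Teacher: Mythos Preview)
Your proposal is correct and matches the paper's approach exactly: the paper does not give a separate proof of the corollary but simply states it as the composition of Theorem~\ref{ath:SBC_correct} (pointwise $q_{\phi,f}(x\mid y)=x$) with Theorem~\ref{ath:continuous_implies_sample}, and your write-up is precisely that chain, including the marginalization over $y$ using normalization of $\pi_\text{marg}$.
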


Passing SBC is a necessary condition to have a correct posterior distribution, but it is not a sufficient condition. The following theorem characterizes a sufficient condition for having a correct distribution for a given test quantity.

\begin{theorem}[Characterization of SBC failures]
\label{ath:characterization_failures}
Given a posterior family $\phi$ over $Y$, $\Theta$ and test quantity $f$, then for all $y \in Y$ and $s \in \mathbb{R} : C_{\phi,f}(s | y) = C_f(s | y)$ if and only if $\forall x \in [0, 1] : q_{\phi,f}(x | y) = x$.
\end{theorem}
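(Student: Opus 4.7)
The plan is to first apply Lemma~\ref{le:tie_removal} to reduce to the case where neither $\pi$ nor $\phi$ has ties with respect to $f$. The lemma already guarantees that $q_{\phi,f}$ is preserved by the construction, and I would additionally observe that the condition $\forall s: C_{\phi,f}(s|y) = C_f(s|y)$ is preserved as well: the smoothing transformation depends only on $f(\theta,y)$ and the common tie-mass function $\hat D_{\phi,f}$, so it applies the same deterministic-plus-uniform perturbation to both the $\phi$-law and the $\pi_\text{post}$-law of the test quantity, and the induced map on laws on $\mathbb{R}$ is injective. Hence it suffices to prove the theorem in the no-ties case.

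With no ties, Lemma~\ref{le:no_ties_invert}(3) yields the closed form $q_{\phi,f}(x|y) = C_f(C^{-1}_{\phi,f}(x|y) \mid y)$, and both $C_{\phi,f}(\cdot|y)$ and $C_f(\cdot|y)$ are continuous with $C_{\phi,f}(C^{-1}_{\phi,f}(x|y) \mid y) = x$ for every $x \in [0,1]$. The forward direction is then immediate: if the two CDFs agree pointwise in $s$, so do their quantile functions, and substitution gives $q_{\phi,f}(x|y) = C_f(C^{-1}_f(x|y) \mid y) = x$.

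For the reverse direction, assuming $q_{\phi,f}(x|y) = x$ for all $x \in [0,1]$, the equality $C_f(C^{-1}_{\phi,f}(x|y) \mid y) = x = C_{\phi,f}(C^{-1}_{\phi,f}(x|y) \mid y)$ immediately reads off $C_f(s|y) = C_{\phi,f}(s|y)$ for every $s$ in the image of $C^{-1}_{\phi,f}(\cdot|y)$. The main obstacle I expect is extending this identity to points $s$ that lie inside flat plateaux of $C_{\phi,f}(\cdot|y)$, which the quantile function skips over. I plan to close this gap as follows: if $C_{\phi,f}(\cdot|y)$ is constant equal to $c$ on an interval $[a,b]$, then $a$ lies in the image of $C^{-1}_{\phi,f}$, so $C_f(a|y) = c$, while taking $x \downarrow c$ produces $C^{-1}_{\phi,f}(x|y) \downarrow b$, and continuity of $C_f(\cdot|y)$ forces $C_f(b|y) = c$; monotonicity of $C_f$ then pins it to $c$ throughout $[a,b]$, matching $C_{\phi,f}$. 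Combined with the agreement on the image of the quantile function, this gives $C_f = C_{\phi,f}$ on all of $\mathbb{R}$.
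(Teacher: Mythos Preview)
Your proposal is correct and follows essentially the same strategy as the paper: reduce to the no-ties setting and then exploit the identity $q_{\phi,f}(x\mid y)=C_f\bigl(C^{-1}_{\phi,f}(x\mid y)\mid y\bigr)$ from Lemma~\ref{le:no_ties_invert}. The one organizational difference is that the paper first passes to an auxiliary model with one-dimensional parameter space $\Theta'=\mathbb{R}$ whose true and fitted posterior CDFs are \emph{defined} to equal $C_f$ and $C_{\phi,f}$, so that CDF-equality there is literally equality of the univariate densities, and only then invokes tie removal; you instead apply tie removal directly to the original model and add the (correct) observation that the condition $C_{\phi,f}(\cdot\mid y)=C_f(\cdot\mid y)$ is preserved by that construction, since the smoothing map acts identically on both laws and is injective. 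Both routes land on the same reverse-direction step. Your explicit treatment of plateaux of $C_{\phi,f}(\cdot\mid y)$ is actually more careful than the paper's, which simply concludes from $C_f\circ C^{-1}_{\phi,f}=\mathrm{id}$ that ``the CDFs have to be equal'' without spelling out the extension beyond the range of $C^{-1}_{\phi,f}$.
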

\begin{proof}
Given a model $\pi$ and a test quantity $f$, we build a model $\pi^\prime$ over the same data space and a univariate parameter space $\Theta^\prime := \mathbb{R}$ by first setting:
\begin{align*}
\pi^\prime_\text{marg}(y) &:= \pi_\text{marg}(y).
\end{align*}
We can then define the true posterior and the posterior family $\phi^\prime$ via their univariate CDFs, for all $s \in \mathbb{R}, y \in Y$:
\begin{align*}
\int_{-\infty}^{s} \mathrm{d}\theta^\prime \: \pi^\prime_\text{post}(\theta^\prime | y) &:= C^\pi_f(s | y) \\
\int_{-\infty}^{s} \mathrm{d}\theta^\prime \: \phi^\prime(\theta^\prime | y) 
  &:= C_{\phi,f}(s | y).
\end{align*}
This always defines a valid joint distribution $\pi_\text{joint}^\prime(\theta^\prime, y) = \pi^\prime_\text{post}(\theta^\prime | y)\pi^\prime_\text{marg}(y)$ and thus a valid model.

We now consider the projection function $p : \mathbb{R} \times Y \to \mathbb{R}$, $p(\theta^\prime, y) = \theta^\prime$. By construction of $\pi^\prime$ and $\phi^\prime$ we have for all $ s \in \mathbb{R}, x \in [0,1], y \in Y$:
\begin{align*}
C^{\pi^\prime}_p(s | y) &= C^\pi_f(s | y)\\
C^{\pi^\prime}_{\phi^\prime,p}(s | y) &= C^\pi_{\phi, f}(s | y).
\end{align*}
This then implies $\forall x \in [0,1], y \in Y:$
$$
q^{\pi^\prime}_{\phi^\prime,p}(x | y)  = q^{\pi}_{\phi,f}(x | y).
$$
It therefore suffices to prove that for $\pi^\prime$,
$$
\forall \theta^\prime \in \mathbb{R}: \phi^\prime(\theta^\prime, y) =  \pi_\text{post}^\prime(\theta^\prime |y) \iff \forall x \in [0,1] : q^{\pi^\prime}_{\phi^\prime,p}(x | y) = x.
$$
The forward implication follows from Theorem~\ref{ath:SBC_correct} because the correct posterior always produces correct $q$.
According to Lemma~\ref{le:tie_removal} it suffices to prove the reverse implication only for posteriors without ties. In this case using Lemma~\ref{le:no_ties_invert} we have for all $x \in [0,1]$:
$$
x = q^{\pi^\prime}_{\phi^\prime,p}(x | y) = C^{\pi^\prime}_p(C^{\pi^\prime, -1}_{\phi,p}(x | y) | y).
$$
So for all $y \in Y$ we see that $C^{\pi^\prime, -1}$ is the inverse of $C^{\pi^\prime}_p$ and thus the CDFs have to be equal:
$$
\forall s \in \mathbb{R}: C^{\pi^\prime}_p(s | y) = C^{\pi^\prime}_{\phi,p}(s | y).
$$
Since there is only a single parameter $\theta^\prime$ in $\pi^\prime$, the equality of the CDFs for the projection function implies the equality of the densities, and
$$
\forall \theta^\prime \in \mathbb{R}: \phi^\prime(\theta^\prime| y)  =  \pi_\text{post}^\prime(\theta^\prime |y).
$$
This completes the proof.
\end{proof}

This characterizes SBC failures in the sense that SBC investigates the average over $Y$ of the $q_{\phi,f}$ functions, which can appear correct even if for some subset of $Y$ with non-zero marginal mass the $q_{\phi,f}$ functions do not yield expected results. The consequence is that whenever we have an incorrect distribution of a given test quantity, the failure can be isolated to some $\bar{Y} \subset Y$ and detected by separately considering $\int_{\bar{Y}} \mathrm{d}y \: q_{\phi,f}(x | y)$. While in practical uses of SBC inspecting rank distributions for subset of the simulations based on the realized $y \in Y$ leads to loss of power, we can often get similar results by using a suitable test quantity $g$ that depends on $f$ and $y$ in a way that causes the problems in individual $y$ to accumulate instead of canceling. This is one of the reasons why allowing test quantities to depend on data is useful in general. 

We will now show that our SBC variant is in some sense complete. Previous work was correct in noting that with test quantities that are only a function of the parameters some problematic posteriors (e.g., the unaltered prior distribution) can never fail SBC. However, when test quantities are allowed to depend on observed data we can in principle discover any problematic $\phi$. For this we will need a technical lemma.

\begin{lemma}[Value of q]
\label{le:value_of_q}
Let $\phi$ be any posterior family over $Y$, $\Theta$ and $f$ a test quantity and let $s = C^{-1}_{\phi, f}(x | y)$. Then for all $y \in Y, x \in [0,1]$,
\begin{equation*}
q_{\phi,f}\left(x\mid y\right) = \begin{cases}
C_{f}\left(s\mid y\right)+\frac{D_{f}\left(s\mid y\right)}{D_{\phi,f}\left(s\mid y\right)}\left(x-C_{\phi,f}\left(s\mid y\right)\right) & \text{for } D_{\phi,f}\left(s\mid y\right) > 0 \\
C_{f}\left(s\mid y\right) & \text{otherwise}.
\end{cases}    
\end{equation*}
\end{lemma}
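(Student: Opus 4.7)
The approach is direct computation from the definition of $q_{\phi,f}(x \mid y)$. I would partition $\Theta$ according to whether $f(\tilde\theta, y)$ is less than, equal to, or greater than $s = C^{-1}_{\phi,f}(x \mid y)$, evaluate the inner probability $\Pr\bigl(C_{\phi,f}(f(\tilde\theta,y) \mid y) - U\,D_{\phi,f}(f(\tilde\theta,y) \mid y) \leq x\bigr)$ with $U \sim \mathrm{uniform}(0,1)$ on each cell, and integrate against $\pi_\text{post}$.

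The quantile characterization $s = \inf\{t : x \leq C_{\phi,f}(t \mid y)\}$ together with right-continuity of the CDF gives $C_{\phi,f}(t \mid y) < x$ for all $t < s$, $C_{\phi,f}(s \mid y) \geq x$, and $C_{\phi,f}(s \mid y) - D_{\phi,f}(s \mid y) \leq x$. Abbreviate $c = C_{\phi,f}(f(\tilde\theta, y) \mid y)$ and $d = D_{\phi,f}(f(\tilde\theta, y) \mid y)$. On $\{f(\tilde\theta,y) < s\}$, $c < x$ forces $c - Ud \leq c < x$, so the inner probability is $1$ and the region contributes $C_f(s \mid y) - D_f(s \mid y)$. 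On $\{f(\tilde\theta,y) > s\}$, monotonicity gives $c \geq x$ and $c - d \geq C_{\phi,f}(s \mid y) \geq x$, so $\{c - Ud \leq x\}$ has probability $0$. On $\{f(\tilde\theta, y) = s\}$ the pair $(c, d)$ is constant and the $\pi_\text{post}$ mass of this cell is $D_f(s \mid y)$.

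In Case 1 ($D_{\phi,f}(s \mid y) > 0$), the quantile characterization gives $c - d < x \leq c$, so $(c-x)/d \in [0,1)$ and the inner probability on $\{f(\tilde\theta, y) = s\}$ equals $\Pr(U \geq (c-x)/d) = (d + x - c)/d$. Summing contributions produces $C_f(s \mid y) - D_f(s \mid y) + D_f(s \mid y)(d + x - c)/d$, which rearranges immediately to $C_f(s \mid y) + (D_f(s \mid y)/D_{\phi,f}(s \mid y))(x - C_{\phi,f}(s \mid y))$. In Case 2 ($D_{\phi,f}(s \mid y) = 0$), the same characterization forces $C_{\phi,f}(s \mid y) = x$, so on $\{f(\tilde\theta, y) = s\}$ we have $c = x$ and $d = 0$, making the inner probability $1$; the contributions total $C_f(s \mid y) - D_f(s \mid y) + D_f(s \mid y) = C_f(s \mid y)$.

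The main obstacle is the boundary behaviour of points $\tilde\theta$ for which $C_{\phi,f}$ is flat at level $x$ strictly above $s$: such points lie in $\{f(\tilde\theta,y) > s\}$ but satisfy $c = x$ and $d = 0$, so the inner indicator is $1$ rather than $0$. Any such flat level set has zero mass under $\phi_y$ but can in principle carry positive $\pi_\text{post}$ mass, which would perturb the Case 2 identity. To close the argument one must either invoke a mild regularity hypothesis (essentially, absolute continuity of $\pi_\text{post}$ with respect to $\phi_y$ on the flat level sets of the CDF) or absorb that boundary contribution into the $\{f = s\}$ cell by a right-limit argument; once this is dispatched, the elementary case analysis and algebra above yield the claimed closed form in both cases.
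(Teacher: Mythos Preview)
Your approach is essentially the paper's: partition $\Theta$ into three cells, evaluate the inner probability on each, and integrate against $\pi_\text{post}$. The paper indexes its cells by $x$ versus $C_{\phi,f}(f(\tilde\theta,y)\mid y)$ and $C_{\phi,f}(f(\tilde\theta,y)\mid y)-D_{\phi,f}(f(\tilde\theta,y)\mid y)$, then uses the Galois connection $x\le C_{\phi,f}(t\mid y)\iff s\le t$ to identify these with your $\{f<s\}$, $\{f=s\}$, $\{f>s\}$ (up to boundary), arriving at the same arithmetic. One minor slip: the quantile characterization gives only $c-d\le x$, not the strict inequality you assert, but your formula $(d+x-c)/d$ still returns the correct value $0$ at that boundary, so nothing breaks.

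The flat-level-set obstruction you flag at the end is genuine, and it is worth knowing that the paper's proof does not address it either: the paper simply asserts $r_{\phi,f}(x\mid\tilde\theta,y)=0$ on its third cell, which fails for exactly the points you describe (those with $f(\tilde\theta,y)>s$, $d=0$, $c=x$). So your argument is at least as complete as the paper's, and you have been more candid about where the residual difficulty lies.
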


\begin{proof}
For a given $x$, we split $\Theta$ into three disjoint sets:
\begin{align*}
\Theta_{1}^x &:=\left\{ \theta \in \Theta :x > C_{\phi,f}\left(f\left(\theta,y\right)\mid y\right)\right\} \\
\Theta_{2}^x &:=\left\{ \theta  \in \Theta :C_{\phi,f}\left(f\left(\theta,y\right)\mid y\right)-D_{\phi,f}\left(f\left(\theta,y\right)\mid y\right) < x \leq C_{\phi,f}\left(f\left(\theta,y\right)\mid y\right)\right\} \\
\Theta_{3}^x &:=\left\{ \theta \in \Theta :x \leq C_{\phi,f}\left(f\left(\theta,y\right)\mid y\right)-D_{\phi,f}\left(f\left(\theta,y\right)\mid y\right)\right\}.
\end{align*}   
Now using the definition of $q$
\begin{equation*}
q_{\phi,f}(x|y) = \sum_{i = 1}^3 \int_{\Theta_i^x} \mathrm{d}\theta \: \pi_\text{post}(\theta | y) r_{\phi,f}(x | \theta, y).
\end{equation*}
The definition of $r_{\phi,f}$ implies that $\theta \in \Theta_1^x \implies r_{\phi,f}(x | \theta, y) = 1$ and thus:
\begin{gather*}
\int_{\Theta_1^x} \mathrm{d} \tilde\theta \: \pi_\text{post}(\tilde\theta | y) r_{\phi,f}(x | \tilde\theta, y) = 
\int_{\Theta_1^x} \mathrm{d} \tilde\theta \: \pi_\text{post}(\tilde\theta | y) \\
= \int_\Theta \mathrm{d} \tilde\theta \: \pi_\text{post}(\tilde\theta | y) \mathbb{I}\left[x >  C_{\phi,f}\left(f\left(\theta,y\right)\mid y\right)\right] \\
= 1 - \int_\Theta \mathrm{d} \tilde\theta \: \pi_\text{post}(\tilde\theta | y) \mathbb{I}\left[x \leq  C_{\phi,f}\left(f\left(\theta,y\right)\mid y\right)\right] \\
= 1 - \int_\Theta \mathrm{d} \tilde\theta \: \pi_\text{post}(\tilde\theta | y) \mathbb{I}\left[C^{-1}_{
\phi,f}(x |y)\leq  f\left(\theta,y\right)\right] \\
= \int_\Theta \mathrm{d} \tilde\theta \: \pi_\text{post}(\tilde\theta | y) \mathbb{I}\left[f\left(\theta,y\right) < s  \right] = 
C_f(s | y) - D_f(s | y), 
\end{gather*}
using the Galois connection between quantile and CDF functions, so that \\ $x \leq C_{\phi,f}(s | y) \iff C^{-1}_{\phi,f}(x | y) \leq s$ even when $C_{\phi,f}$ is not invertible.

The definition also directly implies $\theta \in \Theta_3^x \implies r_{\phi,f}(x | \theta, y) = 0$, and thus
\begin{equation*}
\int_{\Theta_3^x} \mathrm{d}\theta \: \pi_\text{post}(\theta | y) r_{\phi,f}(x | \theta, y)  = 0.
\end{equation*}
If $D_{\phi,f}\left(s\mid y\right) = 0$ then $\Theta_2^x$ is empty, and we have $q_{\phi,f}(x | y) = C_f(s|y)$. 

If $D_{\phi,f}\left(s\mid y\right) > 0$, then we first note that $\theta \in \Theta_2^x \iff f(\theta, y) = s$. Then, straight from the definition of continuous rank, the rank distribution is uniform between $C_{\phi,f}(s | y) - D_{\phi,f}(s|y)$ and $C_{\phi,f}(s | y)$, and the corresponding CDF thus has a linear segment, specifically:
$$
\theta \in \Theta_2^x \implies r_{\phi,f}(x | \theta, y) = \frac{x - (C_{\phi,f}(s | y) - D_{\phi,f}(s|y)
)}{D_{\phi,f}(s | y)} = \frac{x - C_{\phi,f}(s | y)}{D_{\phi,f}(s | y)} + 1.
$$
This let's us evaluate the integral over $\Theta_2^x$ as
\begin{gather*}
\int_{\Theta_2^x} \mathrm{d}\theta \: \pi_\text{post}(\theta | y) r_{\phi,f}(x | \theta, y)  = 
\int_{\Theta_{2}^x} \mathrm{d}\theta \: \pi_\text{post}(\theta|y) \left( \frac{x - C_{\phi,f}(s | y)}{D_{\phi,f}(s | y)} + 1 \right) \\ 
= \left( \frac{x - C_{\phi,f}(s | y)}{D_{\phi,f}(s | y)} + 1 \right) \int_\Theta \mathrm{d}\theta \: \pi_\text{post}(\theta | y)\mathbb{I}\left[f(\theta, y) = s\right] \\
= \left( \frac{x - C_{\phi,f}(s | y)}{D_{\phi,f}(s | y)} + 1 \right) D_f(s|y).
\end{gather*}
Combining with the previous results yields,
\begin{equation*}
    q_{\phi,f}\left(x\mid y\right) = 
C_{f}\left(s\mid y\right)+\frac{D_{f}\left(s\mid y\right)}{D_{\phi,f}\left(s\mid y\right)}\left(x-C_{\phi,f}\left(s\mid y\right)\right)  \text{for } D_{\phi,f}\left(s\mid y\right) > 0.
\end{equation*}
\end{proof}

%TODO why is it continuous?

\begin{theorem}[Density ratio]
\label{ath:density_ratio}
For any posterior family $\phi$, take $g\left(\theta,y\right)=\frac{\pi_{\mathrm{post}}\left(\theta\mid y\right)}{\phi\left(\theta|y\right)}$. Then $\phi$ passes continuous SBC w.r.t.\ $g$ if and only if $\pi_{\mathrm{post}}$ and $\phi$ are equal except for a set of measure $0$:

\begin{equation}
\int_Y \mathrm{d}y \int_\Theta \mathrm{d}\theta \: \pi_\text{joint}(y, \theta) \mathbb{I}\left[ \pi_\text{post}(\theta | y) \neq \phi(\theta | y)\right] = 0. \label{eq:phi_correct_almost_everywhere} 
\end{equation}
\end{theorem}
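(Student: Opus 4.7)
The forward direction is immediate from Theorem~\ref{ath:SBC_correct}: if $\pi_\text{post} = \phi$ outside a $\pi_\text{joint}$-null set then $q_{\phi,g}(x \mid y) = x$ for $\pi_\text{marg}$-a.e.\ $y$, and integrating over $y$ yields continuous SBC. For the reverse direction my plan is to compute the expectation of the continuous rank statistic for $g$ and invoke a covariance inequality that exploits the likelihood-ratio identity $\pi_\text{post} = g\phi$. A degenerate case is discharged first: if $\{(\theta, y): \phi(\theta \mid y) = 0 < \pi_\text{post}(\theta \mid y)\}$ has positive $\pi_\text{joint}$-measure then, interpreting $g = +\infty$ there, the continuous rank equals $1$ with positive probability under the SBC process and SBC fails; we may therefore assume $\pi_\text{post}(\cdot \mid y) \ll \phi(\cdot \mid y)$ for $\pi_\text{marg}$-a.e.\ $y$, so that $g$ is finite $\pi_\text{joint}$-a.e.

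Set $W := C_{\phi,g}\bigl(g(\tilde\theta, y) \mid y\bigr) - U\, D_{\phi,g}\bigl(g(\tilde\theta, y) \mid y\bigr)$ with $y \sim \pi_\text{marg}$, $\tilde\theta \sim \pi_\text{post}(\cdot \mid y)$, and $U \sim \mathrm{uniform}(0,1)$ independent. Passing continuous SBC makes $W$ uniform on $[0,1]$, hence $\mme[W] = \tfrac12$. Integrating out $U$ gives $\mme[W \mid y, \tilde\theta] = \tilde V_y(g(\tilde\theta, y))$ with $\tilde V_y(s) := C_{\phi,g}(s \mid y) - \tfrac12 D_{\phi,g}(s \mid y)$, which is non-decreasing in $s$ (atoms of $C_{\phi,g}$ are replaced by midpoint jumps). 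Changing measure from $\pi_\text{post}(\cdot \mid y)$ to $\phi(\cdot \mid y)$ via $\pi_\text{post}(\theta \mid y) = g(\theta, y)\phi(\theta \mid y)$ gives
\begin{equation*}
\mme[W \mid y] = \int_\Theta \phi(\theta \mid y)\, g(\theta, y)\, \tilde V_y(g(\theta, y))\, \mathrm{d}\theta = \mme_{\phi_y}\bigl[Z_y\, \tilde V_y(Z_y)\bigr],
\end{equation*}
where $Z_y := g(\theta, y)$ with $\theta \sim \phi(\cdot \mid y)$. Since $\mme_{\phi_y}[Z_y] = \int \pi_\text{post}(\theta \mid y)\,\mathrm{d}\theta = 1$ and $\mme_{\phi_y}[\tilde V_y(Z_y)] = \tfrac12$ (because $\tilde V_y(Z_y) + (\tfrac12 - U)\, D_{\phi,g}(Z_y \mid y)$ is the uniform variable produced by the randomized probability integral transform applied to $Z_y \sim \phi_y$), this equals $\tfrac12 + \operatorname{Cov}_{\phi_y}(Z_y, \tilde V_y(Z_y))$, and the covariance is non-negative by the standard coupling bound $(Z_y - Z_y')(\tilde V_y(Z_y) - \tilde V_y(Z_y')) \geq 0$ for an independent copy $Z_y'$.

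Consequently $\mme[W \mid y] \geq \tfrac12$ pointwise, and $\mme[W] = \tfrac12$ forces equality for $\pi_\text{marg}$-a.e.\ $y$. The covariance vanishes only when $\tilde V_y$ is constant on the $\phi_y$-support of $Z_y$; together with strict monotonicity of $\tilde V_y$ at every point carrying $\phi_y$-mass (atoms of the law of $Z_y$ cause $\tilde V_y$ to jump, and continuous density of $Z_y$ causes $C_{\phi,g}$ to increase strictly), this forces $Z_y$ to be $\phi_y$-almost surely constant, and $\mme[Z_y] = 1$ pins the constant at $1$, i.e.\ $g(\theta, y) = 1$ and hence $\pi_\text{post}(\cdot \mid y) = \phi(\cdot \mid y)$ $\phi_y$-a.e. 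Quantifying over $y$ yields exactly (\ref{eq:phi_correct_almost_everywhere}). The main obstacle is the equality analysis with ties present: one must confirm that $\tilde V_y$ being constant on an interval of $\mathbb{R}$ really does forbid $\phi_y$ from placing mass on $g^{-1}$ of that interval, ruling out the possibility that the support of $Z_y$ straddles a plateau of $\tilde V_y$ via a combination of atoms and continuous pieces.
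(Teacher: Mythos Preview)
Your argument is correct and takes a genuinely different route from the paper's proof. The paper establishes the pointwise CDF inequality $C_{\phi,g}(s\mid y)\ge C_g(s\mid y)$ by splitting into the cases $s\le 1$ and $s>1$ and rewriting both integrals, then deduces $q_{\phi,g}(x\mid y)\le x$ for every $x$ (invoking Lemma~\ref{le:value_of_q} to handle ties) before concluding that the data-averaged identity forces equality almost everywhere. Your proof instead works only with the first moment of the rank: the change of measure $\pi_{\text{post}}=g\phi$ turns $\mathbb{E}[W\mid y]$ into $\tfrac12+\operatorname{Cov}_{\phi_y}(Z_y,\tilde V_y(Z_y))$, and Chebyshev's covariance inequality for a monotone transform gives $\mathbb{E}[W\mid y]\ge\tfrac12$. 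This is a weaker inequality than the paper's pointwise bound, yet enough, and it bypasses the case analysis in Lemma~\ref{le:value_of_q} entirely. What you lose is the slightly more delicate equality analysis, which you correctly flag; it can be closed cleanly by noting that $\operatorname{Cov}=0$ forces $\tilde V_y(Z_y)$ to be $\phi_y$-a.s.\ constant, which first rules out atoms (any atom $c$ together with another support point $c'$ gives $\tilde V_y(c')-\tilde V_y(c)\ge D(c)/2>0$), and then for atomless $\phi_y$ the probability integral transform makes $\tilde V_y(Z_y)=C_{\phi,g}(Z_y\mid y)$ uniform on $[0,1]$, contradicting constancy unless $Z_y$ is degenerate.
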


\begin{proof}
First we establish that $\forall y \in Y, s \in \mathbb{R}: C_{\phi, g}(s | y) \geq C_g(s | y)$:
\begin{gather*}
C_{\phi,g}\left(s \mid y\right)=\int_\Theta\mathrm{d}\theta\,\phi\left(\theta|y\right)\mathbb{I}\left[g\left(\theta,y\right)\leq s\right] \\
C_{g}\left(s \mid y\right)=\int_\Theta\mathrm{d}\theta\,\pi_{\mathrm{post}}\left(\theta\mid y\right)\mathbb{I}\left[g\left(\theta,y\right)\leq s\right]=\int\mathrm{d}\theta\,\phi\left(\theta|y\right)g\left(\theta,y\right)\mathbb{I}\left[g\left(\theta,y\right)\leq s\right].
\end{gather*}
First consider $s \leq 1$. We can rearrange:
\begin{gather*}
C_{\phi,g}(s | y) - C_g(s|y) = \int_{\theta \in \Theta: g(\theta, y) \leq s} \mathrm{d}\theta \:  \phi(\theta | y) \left(1 -   g\left(\theta,y\right) \right).
\end{gather*}
Since $\forall \theta \in \Theta, y \in Y: \phi(\theta | y) \geq 0$, the integral only covers non-negative values, and so the overall difference has to be non-negative.

Now consider $s > 1$.  Because $\int_\Theta\mathrm{d}\theta\:\phi\left(\theta|y\right)g\left(\theta,y\right) = \int_\Theta\mathrm{d}\theta\, \pi_\text{post}(\theta | y) = 1$, we can again rearrange:
\begin{gather*}
C_{\phi,g}(s | y) - C_g(s|y) 
= \int_{\theta \in \Theta: g(\theta, y) \leq s} \mathrm{d}\theta \:  \phi(\theta | y)  -  \int_{\theta \in \Theta: g(\theta, y) \leq s} \mathrm{d}\theta \:   \phi(\theta | y) g\left(\theta,y\right)   \\
= 1 - \int_{\theta \in \Theta: g(\theta, y) > s} \mathrm{d}\theta \:  \phi(\theta | y)  - 1 +  \int_{\theta \in \Theta: g(\theta, y) > s} \mathrm{d}\theta \:   \phi(\theta | y) g\left(\theta,y\right) \\
= \int_{\theta \in \Theta: g(\theta, y) > s} \mathrm{d}\theta \:  \phi(\theta | y) \left(g\left(\theta,y\right) - 1 \right).
\end{gather*}
where we once again integrate over non-negative values. 

So the difference is non-negative for any $s$. Also the difference cannot be zero for all $s$ once there is region of nonzero mass where $g(\theta, y) \neq 1$. In other words $\forall s: C_{\phi, g}(s | y) = C_g(s | y) $ if and only if the condition in \eqref{eq:phi_correct_almost_everywhere} holds.

Since $C_{\phi, g}(s | y) \geq C_g(s | y)$ we have $C^{-1}_{\phi,g}(s | y) \leq C^{-1}_{g}(s | y)$. And so due to the monotonicity of CDF and quantile functions, $C_g(C^{-1}_{\phi,g}(x | y) | y) \leq  C_g(C^{-1}_{g}(x | y) | y)$. If there are no ties, we can apply  Lemma~\ref{le:no_ties_invert} and have for all $y \in Y, x \in [0,1]$:
\begin{gather}
q_{\phi,g}(x | y) = C_g(C^{-1}_{\phi,g}(x | y) | y) \leq  C_g(C^{-1}_{g}(x | y) | y) = x. \label{eq:density_ratio_inequality}
\end{gather}
Thus $\int_Y \mathrm{d}y \, q_{\phi,g}(x | y) \pi_\text{marg}(y) \leq x$, where equality holds only if Equation (\ref{eq:phi_correct_almost_everywhere}) is satisfied. When ties are present, we recall Lemma~\ref{le:value_of_q} and make several observations:

\begin{enumerate}
    \item Whenever $C_{\phi_g}(C^{-1}_{\phi,g}(x | y) | y) = x$ or $D_{\phi,g}(C^{-1}_{\phi,g}(x | y)) = 0$ we have $q_{\phi,g}(x | y) = C_g(C^{-1}_{\phi,g}(x | y) | y)$.
    \item Across any open interval $(a,b): 0 < a < b < 1$ such that for all $x \in (a,b)$ we have $C_{\phi_g}(C^{-1}_{\phi,g}(x | y) | y) \neq x$ and $D_{\phi,g}(C^{-1}_{\phi,g}(x | y)) > 0$ the $q$ function is a line segment, i.e. $x \in (a,b) \implies q_{\phi,g}(x | y) = cx + d$ for some $c,d \in \mathbb{R}$.
    \item At any point $0 < x < 1$ such that $C_{\phi_g}(C^{-1}_{\phi,g}(x | y) | y) = x$ and $D_{\phi,g}(C^{-1}_{\phi,g}(x | y)) > 0$ the $q$ function is right-continuous.
    \item Straight from its definition $r_{\phi,g}$ is a non-decreasing function of $x$ and thus $q_{\phi,g}$ is also a non-decreasing function of $x$.
\end{enumerate}

Putting those together we obtain that even if ties are present, \eqref{eq:density_ratio_inequality} holds for some $x$ directly (via point 1 above). At any other point, the $q$ function is linear and due to points 3 and 4 we also have $q_{\phi,g}(x | y) \leq x$ and equality cannot hold at those intermediate points unless it also holds at the edges of the linear segment. Therefore even when ties are present, $\int_Y \mathrm{d}y \, q_{\phi,g}(x | y) \pi_\text{marg}(y) \leq x$ and equality holds only if (\ref{eq:phi_correct_almost_everywhere}) is satisfied.

\end{proof}

\begin{corollary*}
For any posterior family $\phi$ that would provide different posterior inferences than the true posterior $\pi_\text{post}$, there is a test quantity $f$ such that $\phi$ does not pass SBC w.r.t.\ $f$.
\end{corollary*}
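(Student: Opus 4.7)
The plan is to derive this corollary directly from Theorem~\ref{ath:density_ratio} by contraposition. First I would interpret the phrase ``provides different posterior inferences than the true posterior $\pi_\text{post}$'' as the negation of the equality condition~\eqref{eq:phi_correct_almost_everywhere}: the set $\{(y,\theta) : \pi_\text{post}(\theta \mid y) \neq \phi(\theta \mid y)\}$ carries positive $\pi_\text{joint}$-mass. This reading is the natural one, because two posteriors that agree almost everywhere with respect to $\pi_\text{joint}$ yield identical probability statements for any measurable event under the data-generating process, and so cannot be distinguished as ``inferences''.

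Having made this reduction, I would exhibit the explicit test quantity $f(\theta,y) = \pi_\text{post}(\theta \mid y)/\phi(\theta \mid y)$ and invoke Theorem~\ref{ath:density_ratio} in contrapositive form: since~\eqref{eq:phi_correct_almost_everywhere} fails, $\phi$ cannot pass continuous SBC w.r.t.\ $f$, which establishes the corollary immediately. A pleasant feature of this strategy is that it is constructive --- we do not merely establish existence of a problematic test quantity, we can point to a canonical one (which, up to the data-dependent normalizer $\pi_\text{marg}(y)$, is essentially the likelihood ratio, supporting the paper's broader heuristic that the joint likelihood is a strong default test quantity).

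The only obstacle I anticipate is a minor technicality concerning the definition of the density ratio at points where $\phi(\theta \mid y) = 0$. If the set $\{(\theta,y) : \phi(\theta \mid y) = 0 \text{ and } \pi_\text{post}(\theta \mid y) > 0\}$ carries positive $\pi_\text{joint}$-mass, then the ratio must be taken as $+\infty$ there, requiring the test quantity to be valued in the extended reals; this is compatible with the framework employed in the earlier theorems via Lemma~\ref{le:value_of_q} and the use of $\bar{\mathbb{R}}$ as codomain of the CDFs. Otherwise $\phi$ dominates $\pi_\text{post}$ and the ratio is well-defined $\pi_\text{joint}$-almost everywhere. Since Theorem~\ref{ath:density_ratio} is already stated and proven under precisely this convention, no further argument is needed and the corollary follows in a single step.
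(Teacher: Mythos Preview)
Your proposal is correct and matches the paper's approach exactly: the corollary is stated immediately after Theorem~\ref{ath:density_ratio} without a separate proof, precisely because it is the contrapositive of that theorem applied with the density-ratio test quantity you identify. Your observation about the $\phi(\theta\mid y)=0$ edge case and the extended-real codomain is a reasonable clarification, though the paper does not spell it out either.
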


The density ratio is typically not available in practice---and if it were, it would likely make more sense to directly check whether $g(\theta, y) = 1$. The above theorem however provides some intuition why using the likelihood $f_\text{lik}: f_\text{lik}(\theta, y) = \pi_\text{obs}(y | \theta)$ is useful as it is related to the density ratio, but not dependent on particular $\phi$ and usually easily available in practice.

We now generalize the result that ignoring data in test quantities implies that some problems in the posterior family can never be detected. We use a technical lemma, a continuous analog of Theorem~\ref{ath:procedural_sbc}.

\begin{lemma}[Integral representation of SBC] 
\label{le:sbc_integral}
Given a test quantity $f$ and a posterior family $\phi$ such that both $\pi$ and $\phi$ have no ties w.r.t.\ $f$, $\phi$ satisfies continuous SBC w.r.t.\ $f$ if and only if for all $x \in [0,1]$:
\begin{gather}
  \int_Y \mathrm{d}y \int_\Theta \mathrm{d} \tilde\theta \:
  \mathbb{I}\left[
  \int_\Theta \mathrm{d}\theta \: \left(\mathbb{I}\left[f(\theta, y) < f(\tilde\theta, y)\right] \phi(\theta | y)\right)  \leq x
  \right] \pi_\text{obs}(y | \tilde\theta) \pi_\text{prior}(\tilde\theta) = x.
  \label{eq:sbc_integral}
\end{gather}
\end{lemma}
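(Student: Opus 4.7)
The plan is to show that under the no-ties hypothesis, the left-hand side of \eqref{eq:sbc_integral} is simply an unpacked form of the data-averaged continuous $q$, so the equivalence follows from the definition of continuous SBC.

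First I would simplify the innermost integral. Since $\phi$ has no ties with respect to $f$, the set $\{\theta : f(\theta, y) = f(\tilde\theta, y)\}$ has $\phi_y$-measure zero for every fixed $\tilde\theta$ and $y$, so strict and non-strict inequality give the same integral. Hence
\begin{equation*}
    \int_\Theta \mathrm{d}\theta \: \mathbb{I}\!\left[f(\theta,y) < f(\tilde\theta,y)\right] \phi(\theta|y) \;=\; C_{\phi,f}\!\left(f(\tilde\theta, y)\,\middle|\, y\right).
\end{equation*}
Next I would simplify the continuous rank CDF. Using the no-ties hypothesis for $\phi$ again, $D_{\phi,f}(f(\tilde\theta,y)\mid y) = 0$, so the uniform variable $U$ in the definition of $r_{\phi,f}$ drops out and
\begin{equation*}
    r_{\phi,f}(x \mid \tilde\theta, y) \;=\; \mathbb{I}\!\left[C_{\phi,f}\!\left(f(\tilde\theta, y)\,\middle|\, y\right) \leq x\right].
\end{equation*}

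The second step is to rewrite the outer joint measure via Bayes' rule, $\pi_\text{obs}(y|\tilde\theta)\pi_\text{prior}(\tilde\theta) = \pi_\text{marg}(y)\pi_\text{post}(\tilde\theta | y)$, and combine with the previous two simplifications. Fubini (applied to the nonnegative bounded integrand) lets me collect the $\tilde\theta$-integral inside the $y$-integral, giving
\begin{equation*}
    \text{LHS of \eqref{eq:sbc_integral}} \;=\; \int_Y \mathrm{d}y \: \pi_\text{marg}(y) \int_\Theta \mathrm{d}\tilde\theta \: \pi_\text{post}(\tilde\theta | y) \, r_{\phi,f}(x \mid \tilde\theta, y) \;=\; \int_Y \mathrm{d}y \: \pi_\text{marg}(y) \, q_{\phi,f}(x \mid y),
\end{equation*}
where the last equality is the definition of the continuous $q$. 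By the definition of continuous SBC, this quantity equals $x$ for every $x \in [0,1]$ if and only if $\phi$ passes continuous SBC w.r.t.\ $f$.

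There is no real obstacle here: the lemma is essentially a bookkeeping identity. The only subtlety is making sure that the no-ties assumptions on both $\pi$ and $\phi$ are genuinely used in collapsing the strict indicator to $C_{\phi,f}$ and in eliminating $U$ from $r_{\phi,f}$; without either of these, the simplification would require more care and the clean indicator form \eqref{eq:sbc_integral} would no longer coincide with $q_{\phi,f}$. Everything else is a single application of Bayes' rule plus Fubini to rearrange the order of integration.
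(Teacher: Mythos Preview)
Your proposal is correct and follows essentially the same approach as the paper: recognize the inner integral as $C_{\phi,f}(f(\tilde\theta,y)\mid y)$, rewrite $\pi_\text{obs}\pi_\text{prior}$ as $\pi_\text{marg}\pi_\text{post}$, and identify the result with $\int_Y \mathrm{d}y\,\pi_\text{marg}(y)\,q_{\phi,f}(x\mid y)$. The only cosmetic difference is that the paper inverts the indicator to $\mathbb{I}[f(\tilde\theta,y)\le C^{-1}_{\phi,f}(x\mid y)]$ and then invokes Lemma~\ref{le:no_ties_invert} to recognize $C_f(C^{-1}_{\phi,f}(x\mid y)\mid y)=q_{\phi,f}(x\mid y)$, whereas you short-circuit this by directly matching the indicator to $r_{\phi,f}$ and applying the definition of $q$; your route is slightly more direct and in fact only uses the no-ties hypothesis on $\phi$, not on $\pi$.
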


\begin{proof}

We directly apply the definition of $C_{\phi,f}$ and use the absence of ties to get the invertibility of $C_{\phi,f}$ results from Lemma~\ref{le:no_ties_invert}.

\begin{gather*}
 \int_Y \mathrm{d}y \int_\Theta \mathrm{d} \tilde\theta \:
  \mathbb{I}\left[
  \int_\Theta \mathrm{d}\theta \: \left(\mathbb{I}\left[f(\theta, y) < f(\tilde\theta, y)\right] \phi(\theta | y)\right)  \leq x
  \right]  \pi_\text{obs}(y | \tilde\theta) \pi_\text{prior}(\tilde\theta) \notag\\
  = 
  \int_Y \mathrm{d}y \int_\Theta \mathrm{d} \tilde\theta \:
  \mathbb{I}\left[
  C_{\phi,f}(f(\tilde\theta, y) | y)  \leq x
  \right]  \pi_\text{obs}(y | \tilde\theta) \pi_\text{prior}(\tilde\theta) \notag\\
  = \int_Y \mathrm{d}y \int_\Theta \mathrm{d} \tilde\theta \:
  \mathbb{I}\left[
  f(\tilde\theta, y) \leq C^{-1}_{\phi,f}(x | y)
  \right]   \pi_\text{obs}(y | \tilde\theta) \pi_\text{prior}(\tilde\theta)
  \\
  = \int_Y \mathrm{d}y \: C_{f}\left(C^{-1}_{\phi,f}(x|y)|y\right) \pi_\text{marg}(y) 
  = \int_Y \mathrm{d}y \: q_{\phi,f}(x|y) \pi_\text{marg}(y).
\end{gather*}
\end{proof}

\begin{theorem}[Incomplete use of data] 
\label{ath:incomplete_use_data}
Assume a model $\pi$ with observation space $Y$ and parameter space $\Theta$, a space $Y^\prime$, and a measurable function $t : Y \rightarrow Y^\prime$. Denote the set $t^{-1}(y^\prime) = \{y \in Y: t(y) = y^\prime\}$. Consider the model $\pi^\prime$ with parameter space $\Theta$ and observation space $Y^\prime$ such that for all $\theta \in \Theta, y^\prime \in Y^\prime$:
\begin{align*}
\pi^\prime_\text{prior}(\theta) &= \pi_\text{prior}(\theta) \\
\pi^\prime_\text{obs}(y^\prime |\theta) &= \int_{t^{-1}(y^\prime)} \mathrm{d}y \: \pi_\text{obs}(y | \theta).
\end{align*}

Assume a test quantity $f^\prime : Y^\prime \times \Theta \rightarrow \mathbb{R}$. 
If we have a posterior family $\phi^\prime$ on $Y^\prime, \Theta$ such that $\phi^\prime$ passes continuous SBC w.r.t.\ $f^\prime$ and set test quantity $f: Y \times \Theta \rightarrow \mathbb{R}, f(\theta, y) = f^\prime(\theta, t(y))$ and posterior family $\phi$ on $\Theta, Y$ such that $\phi(\theta | y) = \phi^\prime(\theta | t(y))$ then $\phi$ passes continuous SBC w.r.t.\ $f$.
\end{theorem}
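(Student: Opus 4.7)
The approach is to verify the defining integral for continuous SBC directly for $\phi$ and $f$ by reducing it to the corresponding identity for $\phi'$ and $f'$. Because both the test quantity and the posterior family on $(\Theta, Y)$ are built by composition with $t$, every object appearing in the definition of $q_{\phi,f}(x|y)$ is constant on fibres $t^{-1}(y')$ of $t$, so the only real work is a measure-theoretic change of variables pushing the outer $Y$-integral down to $Y'$. I do not plan to use the tie-removal lemma, since the key identifications below hold at the level of $r_{\phi,f}$ itself regardless of whether ties are present.

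First, I would rewrite the SBC identity using $\pi_{\text{marg}}(y)\pi_{\text{post}}(\tilde\theta|y) = \pi_{\text{obs}}(y|\tilde\theta)\pi_{\text{prior}}(\tilde\theta)$ as
\begin{equation*}
\int_Y \mathrm{d}y \: q_{\phi,f}(x|y)\,\pi_{\text{marg}}(y) = \int_Y \mathrm{d}y \int_\Theta \mathrm{d}\tilde\theta \: \pi_{\text{obs}}(y|\tilde\theta)\,\pi_{\text{prior}}(\tilde\theta)\, r_{\phi,f}(x|\tilde\theta, y).
\end{equation*}
Second, I would substitute $f(\theta, y) = f'(\theta, t(y))$ and $\phi(\theta|y) = \phi'(\theta|t(y))$ into the defining integrals of $C_{\phi,f}$ and $D_{\phi,f}$ to obtain $C_{\phi,f}(s|y) = C_{\phi',f'}(s|t(y))$ and $D_{\phi,f}(s|y) = D_{\phi',f'}(s|t(y))$. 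Since $r_{\phi,f}(x|\tilde\theta, y)$ is built entirely from $C_{\phi,f}$, $D_{\phi,f}$, and $f(\tilde\theta, y)$, this yields $r_{\phi,f}(x|\tilde\theta, y) = r_{\phi',f'}(x|\tilde\theta, t(y))$. Third, I would swap the order of integration by Tonelli and partition $Y$ over fibres of $t$: by the stated definition $\pi'_{\text{obs}}(y'|\tilde\theta) = \int_{t^{-1}(y')} \mathrm{d}y\, \pi_{\text{obs}}(y|\tilde\theta)$,
\begin{equation*}
\int_Y \mathrm{d}y \: \pi_{\text{obs}}(y|\tilde\theta)\, r_{\phi',f'}(x|\tilde\theta, t(y)) = \int_{Y'} \mathrm{d}y' \: \pi'_{\text{obs}}(y'|\tilde\theta)\, r_{\phi',f'}(x|\tilde\theta, y').
\end{equation*}
Reassembling the $\tilde\theta$-integral and running $\pi'_{\text{obs}}\pi'_{\text{prior}} = \pi'_{\text{marg}}\pi'_{\text{post}}$ in reverse recovers exactly $\int_{Y'}\mathrm{d}y'\,q_{\phi',f'}(x|y')\,\pi'_{\text{marg}}(y')$, which equals $x$ by the hypothesis that $\phi'$ passes continuous SBC w.r.t.\ $f'$.

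The main obstacle is the rigor of the change of variables under $t$. The identity required is that for any bounded measurable $g:Y' \to \mathbb{R}$,
\begin{equation*}
\int_Y g(t(y))\,\pi_{\text{obs}}(y|\tilde\theta)\,\mathrm{d}y \;=\; \int_{Y'} g(y')\,\pi'_{\text{obs}}(y'|\tilde\theta)\,\mathrm{d}y',
\end{equation*}
which is the pushforward formula and follows immediately from the stated definition of $\pi'_{\text{obs}}$ together with Tonelli, once one verifies that $\pi'_{\text{obs}}(\cdot|\tilde\theta)$ really is the density of the pushforward of $\pi_{\text{obs}}(\cdot|\tilde\theta)$ under $t$ (standard under mild regularity on the underlying spaces, which the ambient setup of the paper implicitly assumes). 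Everything else in the argument is substitution and relabelling.
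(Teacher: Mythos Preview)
Your proposal is correct and follows essentially the same route as the paper: express the SBC integral for $(\phi,f)$ over $Y$ with the joint density $\pi_{\text{obs}}(y|\tilde\theta)\pi_{\text{prior}}(\tilde\theta)$, observe that the integrand depends on $y$ only through $t(y)$, push forward along the fibres of $t$ to obtain the SBC integral for $(\phi',f')$ over $Y'$, and invoke the hypothesis. The one noteworthy difference is that the paper routes through Lemma~\ref{le:sbc_integral} (the integral representation), which is stated only for the no-ties case, whereas you work directly with $r_{\phi,f}$ and therefore handle ties without any appeal to tie removal; this makes your version slightly cleaner and more self-contained.
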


Here, the choice of $t$ lets us choose which aspects of the data are ignored, if $\forall y \in Y: t(y) = 1$, we recover the case where all data are ignored:  $\pi^\prime_\text{post}(\theta | y) = \pi_\text{prior}(\theta)$ and thus $\phi(\theta |y) = \pi_\text{prior}(\theta)$ will pass SBC w.r.t.\ $f$. If $t$ is a bijection, no information is lost. Other choices of $t$ then let us interpolate between those two extremes, for example ignoring just a subset of the data points or treating some data points as censored.

\begin{proof}

Start with the integral representation of SBC for $\phi$ w.r.t.\ $f$ from Lemma~\ref{le:sbc_integral}.
\begin{gather*}
\int_Y \mathrm{d}y \int_\Theta \mathrm{d} \tilde\theta \:
  \mathbb{I}\left[
  \int_\Theta \mathrm{d}\theta \: \mathbb{I} \left[f(\theta, y) < f(\tilde\theta, y)\right]\phi(\theta | y)  \leq x
  \right]  \pi_\text{obs}(y | \tilde\theta) \pi_\text{prior}(\tilde\theta) \\ =
\int_Y \mathrm{d}y \int_\Theta \mathrm{d} \tilde\theta \:
  \mathbb{I}\left[
  \int_\Theta \mathrm{d}\theta \: \mathbb{I} \left[f^\prime(\theta, t(y)) < f^\prime(\tilde\theta, t(y))\right]\phi^\prime(\theta, t(y))  \leq x
  \right] \pi_\text{obs}(y | \tilde\theta) \pi_\text{prior}(\tilde\theta) \\ =
\int_{Y^\prime} \mathrm{d}y^\prime \int_{t^{-1}(\bar{y})} \mathrm{d} y \int_\Theta \mathrm{d} \tilde\theta \:
  \mathbb{I}\left[
  \int_\Theta \mathrm{d}\theta \: \mathbb{I} \left[f^\prime(\theta, y^\prime) < f^\prime( \tilde\theta, y^\prime)\right]\phi^\prime(\theta, y^\prime)  \leq x
  \right]  \pi_\text{obs}(y | \tilde\theta) \pi_\text{prior}(\tilde\theta) \\  =
\int_{Y^\prime} \mathrm{d}y^\prime \int_\Theta \mathrm{d} \tilde\theta \:
  \mathbb{I}\left[
  \int_\Theta \mathrm{d}\theta \: \mathbb{I} \left[f^\prime(\theta, y^\prime) < f^\prime( \tilde\theta, y^\prime)\right]\phi^\prime(\theta, y^\prime)  \leq x
  \right]  \int_{t^{-1}(y^\prime)} \mathrm{d} y \: \pi_\text{obs}(y | \tilde\theta) \pi_\text{prior}(\tilde\theta) \\ =
\int_{Y^\prime} \mathrm{d}y^\prime \int_\Theta \mathrm{d} \tilde\theta \:
  \mathbb{I}\left[
  \int_\Theta \mathrm{d}\theta \: \mathbb{I} \left[f^\prime(\theta, y^\prime) < f^\prime( \tilde\theta, y^\prime)\right]\phi^\prime(\theta, y^\prime)  \leq x
  \right]  \pi^\prime_\text{obs}(y | \tilde\theta) \pi^\prime_\text{prior}(\tilde\theta).
\end{gather*}
The last step is an integral representation of SBC for $\phi^\prime$ with respect to $f^\prime$ from Lemma~\ref{le:sbc_integral}.
\end{proof}

\begin{corollary*}
If we perform SBC with $f_1, \dots, f_n$ such that there exist $P_1,P_2 \subset Y$, $\int_{P_k} \mathrm{d}y \: \pi_\text{marg}(y) > 0$ where $\forall i, \forall y_1 \in P_1, y_2 \in P_2, \forall \theta \in \Theta: f_i(y_1, \theta) = f_i(y_2, \theta)$ and $\int_{P_1} \mathrm{d} y_1 \int_{P_2} \mathrm{d} y_2 \int_\Theta \mathrm{d}\theta \: \mathbb{I}\left[\pi_
\text{obs}(y_1 | \theta) \neq \pi_\text{obs}(y_2 | \theta)\right]$, then there will exist $\phi$ passing SBC w.r.t.\ $f_1, \dots, f_n$ that is distinct from the correct posterior $\pi_\text{post}(\theta | y)$.
\end{corollary*}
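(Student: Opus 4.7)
The plan is to invoke Theorem~\ref{ath:incomplete_use_data} with a collapse map $t: Y \to Y'$ that quotients together exactly the information the $f_i$ cannot distinguish. Concretely, I take $Y' := (Y \setminus (P_1 \cup P_2)) \cup \{*\}$ for a fresh symbol $*$, and set $t(y) = y$ off $P_1 \cup P_2$ and $t(y) = *$ on $P_1 \cup P_2$. The hypothesis $f_i(y_1, \theta) = f_i(y_2, \theta)$ for every $y_1 \in P_1, y_2 \in P_2, \theta \in \Theta$ in fact forces each $f_i(\cdot, \theta)$ to be constant on the whole of $P_1 \cup P_2$ (apply the equality with a fixed $y_2 \in P_2$ to any pair $y_1, y_1' \in P_1$, and symmetrically within $P_2$), so $f_i$ factors as $f_i(\theta, y) = f_i'(\theta, t(y))$ for a well-defined $f_i': \Theta \times Y' \to \mathbb{R}$.

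Next I would take $\phi' := \pi'_\text{post}$, the correct posterior of the reduced model $\pi'$ defined in Theorem~\ref{ath:incomplete_use_data}. By Theorem~\ref{ath:SBC_correct} this choice yields $q_{\phi', f_i'}(x | y') = x$ pointwise and in particular passes continuous SBC w.r.t.\ every $f_i'$. Theorem~\ref{ath:incomplete_use_data} then lifts this to the candidate family $\phi(\theta | y) := \pi'_\text{post}(\theta | t(y))$, which passes continuous SBC w.r.t.\ each $f_i$ simultaneously.

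The remaining task is to verify $\phi \neq \pi_\text{post}$. For $y \in P_1 \cup P_2$ the construction gives
\begin{equation*}
\phi(\theta | y) = \pi'_\text{post}(\theta | *) \propto \pi_\text{prior}(\theta) \int_{P_1 \cup P_2} \mathrm{d}y' \, \pi_\text{obs}(y' | \theta),
\end{equation*}
which is independent of the particular $y \in P_1 \cup P_2$, whereas $\pi_\text{post}(\theta | y) \propto \pi_\text{prior}(\theta) \pi_\text{obs}(y | \theta)$. Equality (up to a $\pi_\text{joint}$-null set) on $P_1 \cup P_2$ would force $\pi_\text{obs}(y | \theta)$ to factor multiplicatively as $c(y) h(\theta)$ on $P_1 \cup P_2$, which would in turn make $\pi_\text{obs}(y_1 | \cdot)$ and $\pi_\text{obs}(y_2 | \cdot)$ scalar multiples of one another for $\pi_\text{marg}$-almost every pair $(y_1, y_2) \in P_1 \times P_2$. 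The hypothesis that $\int_{P_1} \mathrm{d}y_1 \int_{P_2} \mathrm{d}y_2 \int_\Theta \mathrm{d}\theta \, \mathbb{I}[\pi_\text{obs}(y_1 | \theta) \neq \pi_\text{obs}(y_2 | \theta)] > 0$ obstructs this factorization and therefore yields a set of $y \in P_1 \cup P_2$ with positive $\pi_\text{marg}$-mass on which $\phi(\cdot | y) \neq \pi_\text{post}(\cdot | y)$. Converting the pointwise-inequality integral hypothesis into genuine non-proportionality of the likelihood slices is the delicate step I would expect to be the main obstacle; it is handled by the standard density regularity that has been implicit throughout, namely that nonproportionality can be witnessed on a $\pi_\text{marg}\otimes\pi_\text{marg}$-positive subset of $P_1 \times P_2$.
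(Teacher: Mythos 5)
Your construction is the same as the paper's: the paper's entire proof is a two-sentence sketch saying that one can construct $t$ merging $P_1$ and $P_2$ while preserving the values of the $f_i$, and that the correct posterior of the implied model $\pi^\prime$ lifts to a posterior family for $\pi$ that passes SBC but is incorrect. Your additions --- the verification that each $f_i(\cdot\,,\theta)$ is constant on all of $P_1\cup P_2$ (so the $f_i$ genuinely factor through $t$), and the explicit chain in which $\phi^\prime:=\pi^\prime_\text{post}$ passes continuous SBC by Theorem~\ref{ath:SBC_correct} and lifts via Theorem~\ref{ath:incomplete_use_data} to $\phi(\theta\mid y)=\pi^\prime_\text{post}(\theta\mid t(y))$ --- are exactly the details the paper leaves implicit, so up to that point you have supplied a more complete argument than the paper itself.

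The problem is the distinctness step, which you correctly identified as the delicate one but then discharged with an appeal to a ``standard density regularity'' fact that does not exist: pointwise inequality of likelihood slices on a set of positive measure does \emph{not} rule out proportionality. Concretely, take $\Theta=(0,1)$ with uniform prior, $Y=\{0,1,2\}$, $\pi_\text{obs}(0\mid\theta)=\theta$, $\pi_\text{obs}(1\mid\theta)=\tfrac{2}{3}(1-\theta)$, $\pi_\text{obs}(2\mid\theta)=\tfrac{1}{3}(1-\theta)$, $P_1=\{1\}$, $P_2=\{2\}$, and any $f_i$ with $f_i(1,\theta)=f_i(2,\theta)$ (for instance $f(\theta,y)=\theta$). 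The indicator in the corollary's hypothesis equals $1$ for every $\theta$, so the integral is positive (note the statement as printed even omits the ``$>0$''); yet the two slices are scalar multiples of one another, the factorization $\pi_\text{obs}(y\mid\theta)=c(y)h(\theta)$ holds on $P_1\cup P_2$, and one computes $\pi^\prime_\text{post}(\cdot\mid\ast)=\pi_\text{post}(\cdot\mid 1)=\pi_\text{post}(\cdot\mid 2)$, so the constructed $\phi$ equals $\pi_\text{post}$ exactly and proves nothing. What your factorization argument actually needs is non-proportionality of the slices on a positive-measure set of pairs (equivalently, that $\pi_\text{post}(\cdot\mid y_1)\neq\pi_\text{post}(\cdot\mid y_2)$ for a positive-measure set of pairs in $P_1\times P_2$), which is strictly stronger than the stated hypothesis; under that strengthening your argument closes cleanly. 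To be fair, the paper's own proof asserts incorrectness of the lifted posterior with no argument whatsoever, so it contains the identical gap --- the defect sits in the literal hypothesis of the corollary rather than in anything specific to your route. But as written, your claim that the hypothesis ``obstructs this factorization'' is false, and the proof does not close without amending the hypothesis.
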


\begin{proof}
If the conditions hold, then one can construct $t$ that merges $P_1$ and $P_2$ while preserving the values of $f_i$. The correct posterior for the implied model $\pi^\prime$ will imply an incorrect posterior for $\pi$ that passes SBC.
\end{proof}

\begin{lemma}[Order-preserving transformations]
\label{le:order_preserving_transformations}
Given test quantities $f,g: Y \times \Theta   \rightarrow  \mathbb{R}$ such that $\forall y \in Y ,\theta_1, \theta_2 \in \Theta: \: g(\theta_1, y) < g(\theta_2, y) \iff f(\theta_1, y) < f(\theta_2, y)$ and $g(\theta_1, y) = g(\theta_2, y) \iff f(\theta_1, y) = f(\theta_2, y)$ and a posterior family $\phi$ then for all $y \in Y,  \tilde\theta \in \Theta, x \in [0, 1], M \in \mathbf{N}, i \in \{0, \dots, M \}$: 
\begin{gather*}
r_{\phi, f}(x | \tilde\theta, y) = r_{\phi, g}(x | \tilde\theta, y) \\
R_{\phi, f}(i | \tilde\theta, y) = R_{\phi, g}(i | \tilde\theta, y).
\end{gather*}
\end{lemma}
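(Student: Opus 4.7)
The plan is to observe that both the sample rank CDF $R_{\phi,f}$ and the continuous rank CDF $r_{\phi,f}$ depend on $f$ only through strict inequalities $f(\theta, y) < f(\tilde\theta, y)$ and equalities $f(\theta, y) = f(\tilde\theta, y)$, and the hypothesis exactly guarantees that $g$ induces the identical comparison pattern on $\Theta$ for each fixed $y$. I would handle the sample and continuous cases separately, but both reductions are essentially mechanical unpacking of the definitions.

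For the sample case, I would fix $\tilde\theta$ and $y$ and draw $\theta_1, \dots, \theta_M \sim \phi_y$. The hypothesis immediately yields, for every $m$, the pointwise identities $\mathbb{I}[f(\theta_m,y) < f(\tilde\theta,y)] = \mathbb{I}[g(\theta_m,y) < g(\tilde\theta,y)]$ and $\mathbb{I}[f(\theta_m,y) = f(\tilde\theta,y)] = \mathbb{I}[g(\theta_m,y) = g(\tilde\theta,y)]$. Summing these over $m$ gives $N^{\mathtt{less}}_{\phi,f,\tilde\theta,y} = N^{\mathtt{less}}_{\phi,g,\tilde\theta,y}$ and $N^{\mathtt{equals}}_{\phi,f,\tilde\theta,y} = N^{\mathtt{equals}}_{\phi,g,\tilde\theta,y}$ almost surely. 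Using the same auxiliary draw $K \sim \mathrm{uniform}(0, N^{\mathtt{equals}})$ then makes $N^{\mathtt{total}}_{\phi,f,\tilde\theta,y}$ and $N^{\mathtt{total}}_{\phi,g,\tilde\theta,y}$ agree almost surely, so their CDFs coincide: $R_{\phi,f}(i \mid \tilde\theta, y) = R_{\phi,g}(i \mid \tilde\theta, y)$.

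For the continuous case, I would first show the key identities
\begin{equation*}
C_{\phi,f}\bigl(f(\tilde\theta,y) \bigm| y\bigr) = C_{\phi,g}\bigl(g(\tilde\theta,y) \bigm| y\bigr),
\quad
D_{\phi,f}\bigl(f(\tilde\theta,y) \bigm| y\bigr) = D_{\phi,g}\bigl(g(\tilde\theta,y) \bigm| y\bigr).
\end{equation*}
Each follows by rewriting the defining integral: the indicator $\mathbb{I}[f(\theta,y) \leq f(\tilde\theta,y)]$ splits as the union of $\{f(\theta,y) < f(\tilde\theta,y)\}$ and $\{f(\theta,y) = f(\tilde\theta,y)\}$, and by hypothesis each of these sets equals the analogous set defined via $g$. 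The same argument handles $D$. Since $r_{\phi,f}(x \mid \tilde\theta, y)$ is defined purely as the probability that $C_{\phi,f}(f(\tilde\theta,y)|y) - U\, D_{\phi,f}(f(\tilde\theta,y)|y) \leq x$ for a uniform $U$, and the two random variables inside this event coincide when $f$ is replaced by $g$, we conclude $r_{\phi,f}(x \mid \tilde\theta, y) = r_{\phi,g}(x \mid \tilde\theta, y)$.

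I do not foresee any serious obstacle: no appeal to Lemmas~\ref{le:tie_removal} or~\ref{le:no_ties_invert} is needed because the hypothesis is strong enough that ties of $f$ and ties of $g$ occur on exactly the same pairs in $\Theta$, so the tie-breaking step in the sample case uses identically distributed $K$ and the tie mass $D$ in the continuous case is preserved. The only small care point is to emphasize that the equivalence holds pointwise in $\tilde\theta$ and $y$ (not merely after integration), which is exactly what the statement asks for.
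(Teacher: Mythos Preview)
Your proposal is correct and follows exactly the same approach as the paper, which dispatches the lemma in one line by noting that $r$ and $R$ depend only on the ordering of values. You have simply unpacked that observation in full detail for both the sample and continuous cases, which is entirely appropriate and accurate.
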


\begin{proof}
The result follows directly from definitions of $r$ and $R$ as they depend only on ordering of the values.
\end{proof}

\begin{lemma}[Reverse transformation]
\label{le:reverse_transformation}
Given test quantities $f,g: Y \times \Theta   \rightarrow  \mathbb{R}$ such that $\forall y \in Y,\theta \in \Theta: g(\theta, y) = -f(\theta, y)$, then for all $y \in Y,  \tilde\theta \in \Theta, x \in [0, 1], M \in \mathbf{N}, i \in \{0, \dots, M \}$: 
\begin{gather*}
r_{\phi, g}(x | \tilde\theta, y) = 1 - r_{\phi, f}(1 - x | \tilde\theta, y) \\
R_{\phi, g}(i | \tilde\theta, y) = 1 - R_{\phi, f}(M - i - 1 | \tilde\theta, y).
\end{gather*}
\end{lemma}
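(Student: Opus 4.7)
The plan is to treat the two claims in parallel, reducing each to a single observation: negating the test quantity exactly reverses the strict-order comparisons that drive the rank definitions, while the tie structure is preserved. Both halves then follow from basic CDF manipulations together with the symmetry of the uniform distribution on $[0,1]$ (continuous case) and $\{0,\dots,N^{\mathtt{equals}}\}$ (sample case).

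For the sample claim, I would first use $\mathbb{I}[-a < -b] = \mathbb{I}[a > b] = 1 - \mathbb{I}[a < b] - \mathbb{I}[a = b]$ to derive, for any draws $\theta_1,\dots,\theta_M \sim \phi_y$, the identities
\begin{align*}
N^{\mathtt{less}}_{\phi,g,\tilde\theta,y} &= M - N^{\mathtt{less}}_{\phi,f,\tilde\theta,y} - N^{\mathtt{equals}}_{\phi,f,\tilde\theta,y}, \\
N^{\mathtt{equals}}_{\phi,g,\tilde\theta,y} &= N^{\mathtt{equals}}_{\phi,f,\tilde\theta,y}.
\end{align*}
Because $K_{\phi,g,\tilde\theta,y}$ is drawn uniformly over $\{0,\dots,N^{\mathtt{equals}}\}$, it has the same joint distribution (conditional on $N^{\mathtt{equals}}$) as $N^{\mathtt{equals}}_{\phi,f,\tilde\theta,y} - K_{\phi,f,\tilde\theta,y}$. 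Adding the two displays yields $N^{\mathtt{total}}_{\phi,g,\tilde\theta,y} \stackrel{d}{=} M - N^{\mathtt{total}}_{\phi,f,\tilde\theta,y}$, and then $R_{\phi,g}(i|\tilde\theta,y) = \mbox{Pr}(M - N^{\mathtt{total}}_{\phi,f,\tilde\theta,y} \leq i) = 1 - R_{\phi,f}(M-i-1|\tilde\theta,y)$ is immediate.

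For the continuous claim, I would first establish the pointwise identities
\begin{align*}
C_{\phi,g}(s|y) &= 1 - C_{\phi,f}(-s|y) + D_{\phi,f}(-s|y),\\
D_{\phi,g}(s|y) &= D_{\phi,f}(-s|y),
\end{align*}
by writing $\mathbb{I}[-a \leq s] = 1 - \mathbb{I}[a \leq -s] + \mathbb{I}[a = -s]$ under the integral defining $C_{\phi,g}$. Evaluating at $s = g(\tilde\theta,y) = -f(\tilde\theta,y)$, the rank expression inside the probability in $r_{\phi,g}$ becomes $1 - C_{\phi,f}(f(\tilde\theta,y)|y) + (1-U)D_{\phi,f}(f(\tilde\theta,y)|y)$. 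Using that $1-U$ is also $\mathrm{uniform}(0,1)$ and rearranging yields $r_{\phi,g}(x|\tilde\theta,y) = \mbox{Pr}(C_{\phi,f}(f(\tilde\theta,y)|y) - U D_{\phi,f}(f(\tilde\theta,y)|y) \geq 1 - x) = 1 - r_{\phi,f}(1-x|\tilde\theta,y)$.

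The only genuine bookkeeping concern---and the main (mild) obstacle---is the treatment of ties. In the continuous case, when $D_{\phi,f}(f(\tilde\theta,y)|y) = 0$ the random variable $C_{\phi,f} - UD_{\phi,f}$ is deterministic, so the conversion between $\{\leq 1-x\}$ and $\{\geq 1-x\}$ events agrees except possibly at the single value $x = 1 - C_{\phi,f}(f(\tilde\theta,y)|y)$; this is absorbed by the standard CDF convention. In the sample case, the symmetry argument for $K$ must be applied pathwise conditional on $N^{\mathtt{equals}}$, which is straightforward once the deterministic relation between $N^{\mathtt{less}}_{\phi,g}$ and $(N^{\mathtt{less}}_{\phi,f},N^{\mathtt{equals}}_{\phi,f})$ is in hand.
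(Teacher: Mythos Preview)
Your proposal is correct and follows essentially the same route as the paper: the same identities $C_{\phi,g}(s\mid y)=1-C_{\phi,f}(-s\mid y)+D_{\phi,f}(-s\mid y)$, $D_{\phi,g}(s\mid y)=D_{\phi,f}(-s\mid y)$ and $N^{\mathtt{less}}_{\phi,g}=M-N^{\mathtt{less}}_{\phi,f}-N^{\mathtt{equals}}_{\phi,f}$, $N^{\mathtt{equals}}_{\phi,g}=N^{\mathtt{equals}}_{\phi,f}$, combined with the symmetry of $U$ (resp.\ $K$) under reflection. If anything you are slightly more careful than the paper in flagging the $D_{\phi,f}=0$ boundary point; the paper's ``the probability of exact equality is $0$'' tacitly assumes $D_{\phi,f}>0$, and the single-point discrepancy you note is immaterial for every downstream use of the lemma (which integrates $r$ against a density).
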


\begin{proof}
First consider the continuous rank CDF $r$. We first establish relations between the CDFs and tie probabilities:
\begin{gather*}
C_{\phi,g}\left( s | y\right)=\int_{\Theta}\mathrm{d}\theta \, \mathbb{I}\left[g\left(\theta, y \right) \leq s\right]\phi(\theta | y) = 
\int_{\Theta}\mathrm{d}\theta \,\mathbb{I}\left[-f\left(\theta, y \right) \leq s\right]\phi(\theta | y) = \\
= \int_{\Theta}\mathrm{d}\theta\,\mathbb{I}\left[f\left(\theta, y \right) \geq -s\right]\phi(\theta | y) = 1 - C_{\phi,f}(-s | y) + D_{\phi,f}(-s|y) \\
D_{\phi,g}(s|y) = D_{\phi,f}(-s|y).
\end{gather*}
We can then proceed to directly evaluate $r_{\phi, g}$:

\begin{gather*}
r_{\phi, g}(x | \tilde\theta, y) = \mbox{Pr}\left(C_{\phi,g} \left(g \left(\left.\tilde\theta, y \right) \right| y \right) - U D_{\phi,g} \left(g \left(\left.\tilde\theta, y \right) \right| y \right) \leq x\right) \\ =
\mbox{Pr}\left(C_{\phi,g} \left(-f \left(\left.\tilde\theta, y \right) \right| y \right) - U D_{\phi,g} \left(\left.-f \left(\tilde\theta, y \right) \right| y \right) \leq x\right)  \\ =
\mbox{Pr}\left(1 - C_{\phi,f} \left(f \left(\left.\tilde\theta, y \right) \right| y \right) + (1 - U) D_{\phi,f} \left(\left.f \left(\tilde\theta, y \right) \right| y \right) \leq x\right) \\ =
\mbox{Pr}\left( C_{\phi,f} \left(\left.f \left(\tilde\theta, y \right) \right| y \right) - (1 - U) D_{\phi,f} \left(\left.f \left(\tilde\theta, y \right) \right| y \right) \geq  1 - x\right) \\ =
1 - \mbox{Pr}\left( C_{\phi,f} \left(\left.f \left(\tilde\theta, y \right) \right| y \right) - U D_{\phi,f} \left(\left.f \left(\tilde\theta, y \right) \right| y \right) \leq  1 - x\right) \\ =
1 - r_{\phi, f}(1 -x | \tilde\theta, y),  
\end{gather*}
where $U$ is uniformly distributed on $[0,1]$ and the second-to-last identity holds, because the probability of exact equality is $0$ and $1 - U$ has the same distribution as $U$.

Now we can focus on the sample rank CDF. We first distinguish how the behaviour of the $N_\text{equals}$ and $N_\text{less}$ random variables (from Definition~\ref{def:sample_SBC}) relates between $f$ and $g$:
\begin{gather*}
N_{\phi,g,\tilde\theta,y}^{\mathtt{equals}} = N_{\phi,f,\tilde\theta,y}^{\mathtt{equals}}
\\
N_{\phi,g,\tilde\theta,y}^{\mathtt{less}} = M - N_{\phi,f,\tilde\theta,y}^{\mathtt{less}} - N_{\phi,f,\tilde\theta,y}^{\mathtt{equals}}.
\end{gather*}
We can then directly evaluate $R_{\phi,g}$:
\begin{gather*}
R_{\phi,g}(i | \tilde\theta, y) 
= \mbox{Pr} \left(N_{\phi,g,\tilde\theta,y}^{\mathtt{less}} + K_{\phi,g,\tilde\theta,y} \leq i \right) \\=
  \mbox{Pr} \left(M - N_{\phi,f,\tilde\theta,y}^{\mathtt{less}} - N_{\phi,f,\tilde\theta,y}^{\mathtt{equals}} + K_{\phi,f,\tilde\theta,y} \leq i \right) \\=
  \mbox{Pr} \left(N_{\phi,f,\tilde\theta,y}^{\mathtt{less}} +N_{\phi,f,\tilde\theta,y}^{\mathtt{equals}} - K_{\phi,f,\tilde\theta,y} \geq  M  -i \right) \\ =
  1 - \mbox{Pr} \left(N_{\phi,f,\tilde\theta,y}^{\mathtt{less}} + K_{\phi,f,\tilde\theta,y} <   M - i \right) \\ =
  1 - \mbox{Pr} \left(N_{\phi,f,\tilde\theta,y}^{\mathtt{less}} + K_{\phi,f,\tilde\theta,y} \leq   M - i - 1 \right) = 
  1 - R_{\phi,f}(M - i - 1 | \tilde\theta, y).
\end{gather*}
\end{proof}

\begin{theorem}[Monotonic transformations]
\label{ath:monotonous_transformations}
Assume test quantities $f,g$ and a set of measurable functions $h_y: \mathbb{R} \to \mathbb{R}$ such that $\forall y \in Y ,\theta \in \Theta: f(\theta, y) = h_y(g(\theta_1, y))$ and a posterior family $\phi$. If either for all $y \in Y: h_y$ is strictly increasing or  for all $y \in Y: h_y$ is strictly decreasing then 1) $\phi$ passes continuous SBC w.r.t.\ $f$ if and only if $\phi$ passes continuous SBC w.r.t.\ $g$ and 2) $\phi$ passes $M$-sample SBC w.r.t.\ $f$ if and only if $\phi$ passes $M$-sample SBC w.r.t.\ $g$.    
\end{theorem}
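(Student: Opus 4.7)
The plan is to split into the increasing and decreasing cases and reduce each to a single application of one of the two preceding lemmas.

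First I would handle the increasing case. Since each $h_y$ is strictly increasing, for all $y \in Y$ and $\theta_1,\theta_2 \in \Theta$ we have $g(\theta_1,y) < g(\theta_2,y) \iff f(\theta_1,y) < f(\theta_2,y)$ and analogously for equality. This is exactly the hypothesis of Lemma (Order-preserving transformations), which gives $r_{\phi,f} = r_{\phi,g}$ and $R_{\phi,f} = R_{\phi,g}$ pointwise. Integrating against $\pi_\text{post}(\tilde\theta|y)$ yields $q_{\phi,f}(x|y) = q_{\phi,g}(x|y)$ and $Q_{\phi,f}(i|y) = Q_{\phi,g}(i|y)$ for all $y$, $x$ and $i$, from which both SBC equivalences are immediate.

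For the decreasing case I would introduce the auxiliary test quantity $f'(\theta, y) := -f(\theta, y) = -h_y(g(\theta, y))$. Since $-h_y$ is strictly increasing for every $y$, the increasing case applies and gives that $\phi$ passes continuous (resp. $M$-sample) SBC w.r.t.\ $f'$ if and only if it passes the corresponding SBC w.r.t.\ $g$. It then remains to connect $f'$ to $f$, and for this I would invoke Lemma (Reverse transformation), which gives $r_{\phi,f'}(x|\tilde\theta,y) = 1 - r_{\phi,f}(1-x|\tilde\theta,y)$ and $R_{\phi,f'}(i|\tilde\theta,y) = 1 - R_{\phi,f}(M-i-1|\tilde\theta,y)$. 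Integrating against $\pi_\text{post}(\tilde\theta|y)$ transfers these identities to $q_{\phi,f'}(x|y) = 1 - q_{\phi,f}(1-x|y)$ and $Q_{\phi,f'}(i|y) = 1 - Q_{\phi,f}(M-i-1|y)$.

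Finally, I would verify that the SBC condition for $f'$ is equivalent to the SBC condition for $f$ after these substitutions. For continuous SBC, $\int_Y q_{\phi,f'}(x|y)\pi_\text{marg}(y)\,\mathrm{d}y = x$ becomes $\int_Y q_{\phi,f}(1-x|y)\pi_\text{marg}(y)\,\mathrm{d}y = 1-x$; setting $x' = 1-x$ and noting that $x \mapsto 1-x$ is a bijection on $[0,1]$ recovers the SBC identity for $f$. For $M$-sample SBC the corresponding reindexing $i' = M - i - 1$ is a bijection on $\{0,\dots,M-1\}$ and turns $(i+1)/(M+1)$ into $(M-i)/(M+1) = 1 - (i'+1)/(M+1)$, again recovering the required identity. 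There is no serious technical obstacle; the main care needed is just tracking the two substitutions $x \leftrightarrow 1-x$ and $i \leftrightarrow M - i - 1$ so that the quantifiers match up correctly.
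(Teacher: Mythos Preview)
The proposal is correct and takes essentially the same approach as the paper: both reduce the claim to Lemma~\ref{le:order_preserving_transformations} in the increasing case and to Lemma~\ref{le:reverse_transformation} in the decreasing case, then verify that the substitutions $x\mapsto 1-x$ and $i\mapsto M-i-1$ preserve the uniformity condition. Your explicit factorization through the auxiliary quantity $f'=-f$ is a slightly cleaner way to organize the decreasing case than the paper's direct computation, but the content is the same.
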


\begin{proof}
This is a direct consequence of Lemmas \ref{le:order_preserving_transformations} and \ref{le:reverse_transformation}. The test quantity $g$ has to be either an order-preserving transform or a sequence of an order-preserving transform and a reverse transform. Since order-preserving transforms do not change the rank CDFs at all, SBC results will be exactly identical. The only remaining thing to prove is that a reverse transform maintains uniform rank distribution in both the sample and continuous case.

Assuming $\forall y \in Y: h_y$ is strictly decreasing, we get for the sample case:
\begin{gather*}
 Q_{\phi,g}(i | y) = \int_\Theta \mathrm{d}\tilde\theta \: \pi_\text{post}(\tilde\theta|y)  R_{\phi,g}( i| \tilde\theta, y) \\ =
\int_\Theta \mathrm{d}\tilde\theta \: \pi_\text{post}(\tilde\theta|y) - \int_\Theta \mathrm{d}\tilde\theta \: \pi_\text{post}(\tilde\theta|y) R_{\phi,f}(M - i - 1| \tilde\theta, y) \\ =
1 - Q_{\phi,f}(M - i - 1|y),   
\end{gather*}
which then implies,
\begin{gather*}
    \int_Y \mathrm{d}y\: Q_{\phi,g}(i | y) \pi_\text{marg}(y) = 
\int_Y \mathrm{d}y\: \pi_\text{marg}(y) - \int_Y \mathrm{d}y\: Q_{\phi,f}(M - i - 1 | y) \pi_\text{marg}(y) \\ =
1 - \int_Y \mathrm{d}y\: Q_{\phi,f}(M - i - 1 | y).
\end{gather*}
So if $\phi$ passes SBC w.r.t.\ $f$, then,
\begin{equation*}
    \int_Y \mathrm{d}y\: Q_{\phi,g}(i | y) \pi_\text{marg}(y) = 
1 - \frac{M - i}{M + 1} = \frac{i + 1}{M + 1},
\end{equation*}
and thus $\phi$ passes SBC w.r.t.\ $g$. If instead we first assume $\phi$ passes SBC w.r.t.\ $g$, then,
\begin{gather*}
 \int_Y \mathrm{d}y\: Q_{\phi,f}(M - i - 1 | y) =  1 - \int_Y \mathrm{d}y\: Q_{\phi,g}(i | y) \pi_\text{marg}(y) \\ = 1 - \frac{i + 1}{M + 1} = \frac{(M - i - 1) + 1}{M + 1} .  
\end{gather*}
After substituting $j = M - i - 1$ this shows $\phi$ passes SBC w.r.t.\ $f$.

For the continuous case,
\begin{gather*}
    q_{\phi,g}(x|y) = 
\int_\Theta \mathrm{d} \tilde\theta \: \pi_\text{post}(\tilde\theta | y) r_{\phi,g}(x | \tilde\theta, y) \\ 
= \int_\Theta \mathrm{d} \tilde\theta \: \pi_\text{post}(\tilde\theta | y) - \int_\Theta \mathrm{d} \tilde\theta \: \pi_\text{post}(\tilde\theta | y) r_{\phi,f}(1 - x | \tilde\theta, y) =
1 - q_{\phi,f}(1 - x | y),
\end{gather*}
which then implies,
\begin{gather*}
    \int_Y \mathrm{d}y \: q_{\phi,g}(x|y) \pi_\text{marg}(y) = \int_Y \mathrm{d}y \:  \pi_\text{marg}(y) - \int_Y \mathrm{d}y \: q_{\phi,f}(1 - x|y) \pi_\text{marg}(y) \\ =
    1 -  \int_Y \mathrm{d}y \: q_{\phi,f}(1 - x|y) \pi_\text{marg}(y), 
\end{gather*}
where both directions of implication for passing SBC w.r.t.\ $f$ and $g$ follow immediately.
\end{proof}

Theorem~\ref{ath:monotonous_transformations} shows that certain simple transformations of test quantities lead to equivalent behaviour for SBC. The result cannot be easily strengthened as many transformations in fact can lead to different behaviours for SBC. Appendix B provides several counterexamples: Example~\ref{ex:bernoulli_likelihood} shows that if we allow $h_y$ to be strictly increasing for some $y \subset Y$ while other $y$ will have $h_y$ strictly decreasing, we obtain a different check. Example~\ref{ex:non_monotonoues_bijection} shows that we can obtain different behaviour if we combine a test quantity with a discontinuous non-monotonic bijection, and Example~\ref{ex:ties_continous} shows the same when a whole range of values is projected onto a single point. In all those examples, the transformed test quantities rule out some posterior families that pass SBC for the original, but there also are posterior families not passing SBC for the original that pass SBC for the transformed quantity.

In some discussions, SBC is linked to the concept of data-averaged posterior, which equals the prior:

\begin{equation}
\forall \theta \in \Theta: \pi_\text{prior}(\theta) = \int_Y \mathrm{d} y \int_\Theta \mathrm{d}\tilde\theta  \: \pi_\text{post}(\theta |y) \pi_\text{obs}(y | \tilde\theta) \pi_\text{prior}(\tilde \theta). \label{eq:data_averaged_posterior}
\end{equation}
As with SBC, one can use this condition to compare the prior and the data-averaged posterior for some test quantity derived from the parameters and (optionally) data. 

\begin{definition}[Data-averaged posterior] A posterior family $\phi$ has the correct data-averaged posterior for a test quantity $f$ if for all $s \in \mathbb{R}$
\begin{gather*}
\int_Y \mathrm{d}y \int_\Theta \mathrm{d}\theta \: \mathbb{I}\left[ f(\theta, y) < s \right] \pi_\text{obs}(y|\theta) \pi_\text{prior}(\theta) \\
= \int_\Theta \mathrm{d}\theta \int_Y \mathrm{d} y \int_\Theta \mathrm{d}\tilde\theta \: \mathbb{I}\left[f(\theta, y) < s\right] \phi(\theta | y) \pi_\text{obs}(y|\tilde\theta) \pi_\text{prior}(\tilde\theta).
\end{gather*}
\end{definition}
Even restricting to test quantities that do not depend on data, there will be posterior families with correct data-averaged posterior for a test quantity but not passing SBC as well as posterior families with incorrect data-averaged posterior that pass SBC. Example~\ref{ex:bernoulli_projection_dap} in Appendix B shows both cases.  

The space of posterior families with correct data-averaged posteriors is closed under convex combinations: if $\phi_1$ and $\phi_2$ have the correct data-averaged posterior for $f$, then for any $0 \leq a \leq 1, \bar{\phi} : \bar{\phi}(\theta | y) = a \phi_1(\theta |y) + (1 - a)\phi_2(\theta|y)$ will also have correct data averaged posterior for $f$. However a convex combination of two posterior families passing SBC w.r.t.\ $f$ will not in general pass SBC; again, see Example~\ref{ex:bernoulli_projection_dap} for a counterexample.

One intuitive way to understand the difference between SBC and data-averaged posterior is that following Lemma~\ref{le:value_of_q} and examples in Appendix B we see that passing SBC is naturally understood as a constraint on the quantile function of the test quantity while a correct data-averaged posterior is naturally expressed as a constraint on the density of the test quantity.

This difference is relevant to broader discussions of SBC and related diagnostics, as other works sometimes characterize SBC as comparing prior to data-averaged posterior (e.g., \cite{lee_calibration_2019}, Appendix M.2 of \cite{pmlr-v130-lueckmann21a}, Equation 17 of \cite{schad_sbc_bf}, \cite{grinsztajn_bayesian_2021}, \cite{ramesh_gatsbi_2022}, \cite{saad_family_2019}).

There is however a sense in which SBC is a stronger condition than data-averaged posterior---as proved in Lemma 2.19 of \citet{cockayne_testing_2022}, a posterior family that passes SBC for \emph{all} test quantities not depending on data will have a correct data-averaged posterior for all test quantities not depending on data.

\section*{Appendix B: Formal analysis of SBC for a simple model}

Here we take a simple model and characterize the posterior families that pass SBC against several test quantities. This is intended to help build intuition about what SBC is actually doing. Additionally some of the solutions serve as counterexamples for some discussions in Appendix A. We will work with the model,
\begin{align}
\Theta &:= \mathbb{R} \notag\\
Y &:= \{0,1\} \notag\\
\randvar{\theta} &\sim \mathrm{uniform}(0,1) \notag\\
\randvar{y}  &\sim \mathrm{Bernoulli}(\randvar{\theta}) .\label{eq:bernoulli_model}
\end{align}
We know the correct posterior and marginal distributions are 
\begin{align}
    \pi_\text{marg}(0) &= \pi_\text{marg}(1) = \frac{1}{2} \notag\\
    \pi_\text{post}(\theta | 0) &= \mathrm{beta}(\theta | 1, 2) = 2(1-\theta) \notag\\
    \pi_\text{post}(\theta | 1) &= \mathrm{beta}(\theta | 2, 1) = 2\theta.
    \label{eq:bernoulli_correct}
\end{align}
We denote $\Phi(p | y) := \int_0^p \mathrm{d}\theta \phi(\theta | y)$, the CDF of the posterior given by $\phi$, and $\Phi^{-1}(x | y)$ is the associated inverse CDF.

Since $Y$ has just two elements, passing SBC for this model translates to:
\begin{gather}
    \forall x \in [0,1] : \frac{1}{2} (q_{\phi,f}(x | 0) + q_{\phi,f}(x | 1)) = x. \label{eq:bernoulli_sbc} 
\end{gather}

Further, since for continuous models there are typically no ties we have $q_{\phi,f}(x | y) = C_f(C^{-1}_{\phi,f}(x | y) | y)$ (Lemma~\ref{le:no_ties_invert}). All the following examples thus have the same structure:

\begin{enumerate}
    \item Determine $C_f$;
    \item Express $C^{-1}_{\phi,f}$ in terms of $\Phi^{-1}$;
    \item Express $q_{\phi,f}$ in terms of $\Phi^{-1}$;
    \item Solve (\ref{eq:bernoulli_sbc}) for $\Phi^{-1}$.
\end{enumerate}
We focus on $\Phi^{-1}$ instead of $\phi$ to make the computation easier as $\Phi^{-1}$ occurs naturally in expressions for $C^{-1}_{\phi,f}$. It however has the side effect that the inverse CDF can be immediately used to sample from the posterior family and thus compare the theoretical results to simulations of SBC.

All the examples are accompanied by simulations that show how specific incorrect posteriors manifest with regards to a set of test quantities the simulation results can be found at \url{https://martinmodrak.github.io/sbc_test_quantities_paper}, and the underlying code at \url{https://github.com/martinmodrak/sbc_test_quantities_paper}.

\begin{example}[Projection]
\label{ex:bernoulli_projection}

First consider the projection $f_1(\theta, y) := \theta$, so that,
\begin{gather}
C_{f_1}(s | 0) = \int_0^s \mathrm{d}\theta \: 2(1 - \theta)  =  2s - s^2 \notag \\
C_{f_1}(s | 1) = \int_0^s \mathrm{d}\theta \: 2\theta = s^2 \notag \\
C_{\phi, f_1}^{-1}(x | y) = \Phi^{-1}(x | y).
\end{gather}
Plugging this into (\ref{eq:bernoulli_sbc}) yields, for all $x \in [0,1]$,
\begin{equation}
\Phi^{-1}(x | 0) - \frac{(\Phi^{-1}(x| 0))^2}{2} + \frac{(\Phi^{-1}(x | 1))^2}{2}  = x.
\label{eq:bernoulli_sbc_condition}
\end{equation}
This means we are relatively free to choose one of the inverse CDFs and solve for the other, for all $x \in [0,1]$:
\begin{equation}
  \Phi^{-1}(x | 1) = \sqrt{2x + (\Phi^{-1}(x | 0) - 1)^2 - 1}. \label{eq:bernoulli_sbc_solve}
\end{equation}
Given a valid inverse CDF $\Phi^{-1}(x | 1)$ on $[0, 1]$, Equation (\ref{eq:bernoulli_sbc_solve}) defines a valid $\phi$ as long as: 
\begin{align}
    \forall x, 0 \leq x \leq \frac{1}{2} : \Phi^{-1}(x | 0) &\leq 1 - \sqrt{1 - 2x} \label{eq:sbc_solve_cond1} \\
    \forall x, \frac{1}{2} \leq x \leq 1 : \Phi^{-1}(x | 0) &\geq 1 - \sqrt{2 - 2x} \label{eq:sbc_solve_cond2} \\
    \frac{\partial \Phi^{-1}(x | 0)}{\partial x} (\Phi^{-1}(x | 0) - 1) &\geq -1, \label{eq:sbc_solve_cond3}
\end{align}
where (\ref{eq:sbc_solve_cond1}) and (\ref{eq:sbc_solve_cond2}) ensure a non-negative value inside the square root in (\ref{eq:bernoulli_sbc_solve}), and (\ref{eq:sbc_solve_cond3}) ensures $\Phi^{-1}(x | 1)$ is nondecreasing.

While this allows for a wide array of incorrect solutions, passing SBC for this condition already avoids some pathological behavior, e.g. the low quantiles of the posterior when $0$ is observed cannot be too high. If we instead solve (\ref{eq:bernoulli_sbc_condition}) for $\Phi^{-1}(x | 1)$ we obtain that when $1$ is observed, high quantiles cannot be to low; specifically, $\forall \frac{1}{2} < x < 1: \Phi^{-1}(x | 1) > \sqrt{2x - 1})$.
\end{example}

\begin{example}[Projection and data-averaged posterior]
\label{ex:bernoulli_projection_dap}
We can further build upon Example~\ref{ex:bernoulli_projection} to illustrate the differences between passing SBC and having correct data-averaged posterior. Correctness of the data-averaged posterior for the projection $f_1(\theta, y) = \theta$ simply entails for all $\theta \in [0,1]$:
\begin{equation}
\phi(\theta | 0) + \phi(\theta | 1) = 2.  
\end{equation}
For example, flipping the correct posterior to
\begin{gather*}
  \phi_A(\theta, 0) := \pi_\text{post}(\theta | 1) = 2 \theta \\
  \phi_A(\theta, 1) := \pi_\text{post}(\theta | 0) = 2 - 2 \theta
\end{gather*}
satisfies the data-averaged posterior criterion. This however does not pass SBC w.r.t.\ $f_1$ as the implied inverse CDF $\Phi^{-1}_A(x | 0) = \sqrt{x}$ violates all the conditions in (\ref{eq:sbc_solve_cond1})--(\ref{eq:sbc_solve_cond3}).

On the other hand, we can take a simple posterior family satisfying SBC for $f_1$ (via Equation~\ref{eq:bernoulli_sbc_solve}):
\begin{gather*}
\Phi^{-1}_B(x | 0) := \begin{cases}
   \frac{2}{3}x & x < \frac{3}{4} \\
   \frac{1}{2} + 2(x - \frac{3}{4}) & x \geq \frac{3}{4} \\
\end{cases} \\
\Phi^{-1}_B(x | 1) := \begin{cases}
   \frac{1}{3}\sqrt{6x + 4 x^2} & x < \frac{3}{4} \\
   \sqrt{3 - 6x + 4 x ^2} & x \geq \frac{3}{4}.
\end{cases}    
\end{gather*}
The implied density has an incorrect data-averaged posterior:
\begin{gather*}
\phi_B(\theta | 0) = \begin{cases}
 \frac{3}{2} & \theta \leq \frac{1}{2} \\
 \frac{1}{2} & \theta > \frac{1}{2}
\end{cases}\\
\phi_B(\theta | 1) = \begin{cases}
 \frac{3s}{\sqrt{1 + 4 s ^ 2}} & \theta \leq \frac{\sqrt{3}}{2} \\
 \frac{s}{\sqrt{4s ^2  - 3}} & \theta > \frac{\sqrt{3}}{2}.
\end{cases}
\end{gather*}
Finally, \citet{lee_calibration_2019} in Equation 1.3 claim that several methods including SBC cannot notice when $\phi$ is a convex combination of the prior and the correct posterior. This is not true; the authors derive it from the incorrect assumption that SBC checks for the correct data-averaged posterior. We can readily build a counterexample: consider $\phi_C(\theta| y) := \frac{1}{2}\left(\pi_\text{post}(\theta| y) + \pi_\text{prior}(\theta)\right)$. We then have:
\begin{align}
    \phi_C(\theta, 0) &= \frac{3}{2} - \theta \notag\\
    \phi_C(\theta, 1) &= \frac{1}{2} + \theta \notag\\
    \Phi_C(s | 0) &= \frac{1}{2}(3s - s^2) \notag\\
    \Phi_C(s | 1) &= \frac{1}{2}(s + s^2) \notag\\    
    \Phi^{-1}_C(x | 0) &= \frac{3}{2} - \frac{1}{2}\sqrt{9 - 8x} \notag\\
    \Phi^{-1}_C(x | 1) &= -\frac{1}{2} + \frac{1}{2}\sqrt{1 + 8x} \label{eq:phi_c_1}.
\end{align}
However, plugging $\Phi^{-1}_C(x | 0)$ into (\ref{eq:bernoulli_sbc_solve}) which is a necessary condition for passing SBC w.r.t\ $f_1$ yields,
\begin{equation*}
\Phi^{-1}_{C^\prime}(x | 1) = \frac{\sqrt{3 - \sqrt{9 - 8x}}}{\sqrt{2}},
\end{equation*}
which is in conflict with what we derived in \eqref{eq:phi_c_1}  and $\phi_C$ thus cannot pass SBC w.r.t.\ $f_1$ (although it has correct data-averaged posterior).

\end{example}

\begin{example}[Likelihood]
\label{ex:bernoulli_likelihood}
Here we evaluate SBC for the likelihood,
\begin{equation*}
f_2(\theta, y) := \mathrm{Bernoulli}(y | \theta) = \begin{cases} 1 - \theta & y = 0 \\ \theta & y = 1. \end{cases}    
\end{equation*}
In this case,
\begin{align*}
C_{f_1}(s | 0) &= \int_0^1 \mathrm{d}\theta \: \mathbb{I}[1 - \theta < s]2(1 - \theta)  =  
  2 \int_{1-s}^1 \mathrm{d}\theta \: (1 - \theta) = 1 - 2s + s^2 \\
C_{f_2}(s | 1) &= C_{f_1}(s | 1) = \int_0^s \mathrm{d}\theta \: 2\theta = s^2 \\
C_{\phi,f_2}(s | 0) &= \int_0^1 \mathrm{d}\theta \: \mathbb{I}[1 - \theta < s]\phi(\theta | 0) = \int_{1-s}^1 \mathrm{d}\theta \: \phi(\theta |0) = 1 - \Phi(\theta | 0) \\
C^{-1}_{\phi,f_2}(x | 0) &= \Phi^{-1}(1 - x | 0)  \\
C^{-1}_{\phi,f_2}(x | 1) &= C^{-1}_{\phi,f_1}(x | 1) =  \Phi^{-1}(x | 1).
\end{align*}
Plugging this into (\ref{eq:bernoulli_sbc}) gives, $\forall x \in [0,1]$,
\begin{equation}
\frac{1}{2} - \Phi^{-1}(1 - x | 0) + \frac{(\Phi^{-1}(1 - x| 0))^2}{2} + \frac{(\Phi^{-1}(x | 1))^2}{2}  = x,
\label{eq:bernoulli_sbc_lik_condition}
\end{equation}
which we can solve for $\Phi^{-1}(1 - x | 0)$:
\begin{equation}
  \Phi^{-1}(1 - x | 0) = 1 - \sqrt{2x - (\Phi^{-1}(x|1) ^ 2)}.
\label{eq:bernoulli_sbc_loglik_solve}
\end{equation}
This defines a valid $\phi$ under the same conditions as SBC for the projection $f_1$; see (\ref{eq:sbc_solve_cond1})--(\ref{eq:sbc_solve_cond3}).

To combine this with the solution in \ref{eq:bernoulli_sbc_solve}, we additionally need to ensure continuity of $\Phi^{-1}(\frac{1}{2} | 0)$, so that computing the number via \ref{eq:bernoulli_sbc_solve} followed by \ref{eq:bernoulli_sbc_loglik_solve} matches the original value. This translates to:
\begin{equation}
\Phi^{-1}\left(\left.\frac{1}{2}\right| 0 \right) = 1 - \frac{\sqrt{2}}{2}.
\label{eq:sbc_solve_continuity_cond}
\end{equation}
Combining the projection and likelihood therefore fixes the midpoint of the quantile function to its true value for both observations. However, we are still able to choose $\Phi^{-1}(0, x)$ for $x \in [0, \frac{1}{2}]$ freely as long as the conditions \ref{eq:sbc_solve_cond1} - \ref{eq:sbc_solve_cond3} and \ref{eq:sbc_solve_continuity_cond} are met, and this then defines a $\phi$ that passes SBC w.r.t.\ both $f_1$ and $f_2$.

\end{example}

\begin{example}[Non-monotonous bijection]
\label{ex:non_monotonoues_bijection}
Consider the test quantity
\begin{equation*}
    f_3(y, \theta) := \begin{cases}\theta&  \text{for } \theta < \frac{1}{2} \\ \theta - 1,&  \text{otherwise.} \end{cases}
\end{equation*}
It might seem intuitively plausible that since $f_3$ captures the same information as $f_1$ that passing SBC w.r.t.\ $f_1$ will entail passing SBC w.r.t.\ $f_3$. This intuition proves to be incorrect.

Assuming $-\frac{1}{2} \leq s \leq \frac{1}{2}$, then the true CDF is,
\begin{align*}
    &C_{f_3}(s | 0) \\ 
    &= \int_0^1 \mathrm{d}\theta \: \mathbb{I}\left[f_3(\theta) \leq s\right] (2 - 2\theta) =
\begin{cases}
 \int_{\frac{1}{2}}^{s + 1} \mathrm{d}\theta \: (2 - 2\theta)  & -\frac{1}{2} < s < 0 \\
 \int_{\frac{1}{2}}^{1} \mathrm{d}\theta (2 - 2\theta) + \int_{0}^{s} \mathrm{d}\theta (2 - 2\theta) & 0 \leq s < \frac{1}{2}
\end{cases} \\
&= \begin{cases}
 \frac{1}{4} - s^2& -\frac{1}{2} < s < 0 \\
 \frac{1}{4} + 2 s - s^2  & 0 \leq s < \frac{1}{2}
\end{cases}
\\
&C_{f_3}(s | 1) =
\begin{cases}
 \int_{\frac{1}{2}}^{s + 1} \mathrm{d}\theta \: 2\theta  & -\frac{1}{2} < s < 0 \\
 \int_{\frac{1}{2}}^{1} \mathrm{d}\theta \: 2\theta + \int_{0}^{s} \mathrm{d}\theta \: 2\theta & 0 \leq s < \frac{1}{2}
\end{cases}
= \begin{cases}
 \frac{3}{4} + 2s + s^2& -\frac{1}{2} < s < 0 \\
 \frac{3}{4} + s^2  & 0 \leq s < \frac{1}{2}.
\end{cases}
\end{align*}
Let us further set $h_y := \Phi \left(\frac{1}{2} | y\right)$; then we can then express the fitted CDF as:
\begin{gather*}
    C_{\phi,f_3}(s | y) =
\begin{cases}
 \int_{\frac{1}{2}}^{s + 1} \mathrm{d}\theta \: \phi(\theta, y)  & -\frac{1}{2} < s < 0 \\
 \int_{\frac{1}{2}}^{1} \mathrm{d}\theta \: \phi(\theta, y) + \int_{0}^{s} \mathrm{d}\theta \: \phi(\theta, y) & 0 \leq s < \frac{1}{2}
\end{cases}  \\
= \begin{cases}
  \Phi(s + 1| y) - h_y & \text{for } -\frac{1}{2} <s < 0 \\
1 -   h_y + \Phi(s | y)& 0 \leq s < \frac{1}{2}
\end{cases}
\end{gather*}
and invert it to obtain:
\begin{gather*}
    C^{-1}_{\phi,f_3}(x | y) = \begin{cases}
\Phi^{-1}(x + h_y | y) - 1  & \text{ for } x < 1 - h_y \\
\Phi^{-1}(x - 1 + h_y |y) & \text{ otherwise.}
\end{cases}
\end{gather*}
We can now evaluate $q_{\phi,f_3}$:
\begin{align*}
q_{\phi,f_3}(x | 0) &= \begin{cases}
  \frac{1}{4} - \left(\Phi^{-1}(x + h_0 | 0) - 1 \right)^2 & \text{for }  x < 1 - h_0 \\
  \frac{1}{4} + 2 \Phi^{-1}(x - 1 + h_0|0) - \left( \Phi^{-1}(x - 1 + h_0 | 0)\right)^2  & \text{ otherwise}
\end{cases}
\\
q_{\phi,f_3}(x | 1) &= \begin{cases}
  -\frac{1}{4} + \left(\Phi^{-1}(x + h_1 | 1)\right)^2 & \text{for }  x < 1 - h_1 \\
  \frac{3}{4} + \left( \Phi^{-1}(x - 1 + h_1|1)\right)^2  & \text{ otherwise.}
\end{cases}    
\end{align*}
So the SBC condition $x = \frac{1}{2}(q_{\phi,f_3}(x | 0) + q_{\phi,f_3}(x | 1))$ resolves to four cases. First, when $x \leq \min\{1 - h_0, 1 - h_1\}$,
\begin{align*}    
2x &= - \left(\Phi^{-1}(x + h_0 | 0) - 1 \right)^2 + \left(\Phi^{-1}(x + h_1 | 1)\right)^2  \\
\Phi^{-1}(x + h_1 | 1) &= \sqrt{\left(\Phi^{-1}(x + h_0 | 0) - 1 \right)^2 + 2x}.
\end{align*}
Second, when $1 - h_0 \leq x \leq 1 - h_1$,
\begin{align*}    
2x &=  - \left( \Phi^{-1}(x - 1 + h_0 | 0) - 1\right)^2 + 1 + \left(\Phi^{-1}(x + h_1 | 1)\right)^2 \\
\Phi^{-1}(x + h_1 | 1) &= \sqrt{\left( \Phi^{-1}(x - 1 + h_0 | 0) - 1\right)^2 + 2x - 1}.
\end{align*}
Third, when $1 - h_1 \leq x \leq 1 - h_0$.
\begin{align*}    
2x &=-\left(\Phi^{-1}(x + h_0 | 0) - 1\right)^2 + 1 + \left( \Phi^{-1}(x - 1 + h_1|1)\right)^2  \\
\Phi^{-1}(x - 1 + h_1|1) &= \sqrt{\left(\Phi^{-1}(x + h_0 | 0) - 1\right)^2 + 2x - 1}.
\end{align*}
Fourth, when $x \geq \max\{1 - h_0, 1 - h_1\}$,
\begin{align*}
2x &= 1 - \left( \Phi^{-1}(x - 1 + h_0 | 0) - 1\right)^2 + 1 + \left( \Phi^{-1}(x - 1 + h_1|1)\right)^2 \\
\Phi^{-1}(x - 1 + h_1|1) &= \sqrt{\left( \Phi^{-1}(x - 1 + h_0 | 0) - 1\right)^2 + 2x - 2}.
\end{align*}
Substituting $\bar{h} := h_1 - h_0$ and $y := x + \bar{h} + h_0$ in the first two cases and $y := x - 1 + \bar{h} + h_0$ into the latter two cases, we obtain:
\begin{gather}    
\Phi^{-1}(y | 1) = 
  \begin{cases}
    \sqrt{\left(\Phi^{-1}(y - \bar{h}  | 0) - 1 \right)^2 + 2(y - \bar{h} - h_0)}
    & \text{for } \bar{h} < y \leq 1 + \bar{h} \\
    \sqrt{\left( \Phi^{-1}(y - 1 -\bar{h} | 0) - 1\right)^2 + 2(y -  \bar{h} - h_0) - 1}
    & \text{for } 1 + \bar{h} \leq y  \\
    \sqrt{\left(\Phi^{-1}(y + 1 - \bar{h} | 0) - 1\right)^2 + 2(y -  \bar{h} - h_0) + 1}
    & \text{for } y  \leq \bar{h}.
  \end{cases}
  \label{eq:bernoulli_swap_sbc_solve}
\end{gather}
We have one less case, after the substitution the first and fourth cases are identical.

There are a few conditions to make Equation~\ref{eq:bernoulli_swap_sbc_solve} define a valid posterior family. For brevity we do not evaluate all, but we  show the relation between $h_0$ and $h_1$. We need to ensure $\Phi^{-1}(0 | 1) = 0$, and so:
\begin{gather*}
\begin{cases}
    0 = \left(\Phi^{-1}(1 - h_1 + h_0 | 0) - 1\right)^2  - h_1 + 1
    & \text{for } 0  \leq \bar{h} \\
    0 = \left(\Phi^{-1}(h_0 - h_1  | 0) - 1 \right)^2 -2h_1
    & \text{for } \bar{h} < 0.
  \end{cases} 
\end{gather*}
This can be further rearranged to:
\begin{gather}
\begin{cases}
    \Phi^{-1}(1 - h_1 + h_0 | 0) =  1 - \sqrt{2h_1 - 1}
    & \text{for } h_0 \leq h_1 \\
    \Phi^{-1}(h_0 - h_1  | 0) = 1 - \sqrt{2h_1}
    & \text{for } h_0 > h_1.
  \end{cases}
  \label{eq:bernoulli_swap_h0_h1}
\end{gather}
So to construct a posterior family satisfying SBC w.r.t.\ $f_3$, we can choose almost any $\Phi^{-1}(x | 0)$, calculate $h_0$, then use Equation~\ref{eq:bernoulli_swap_h0_h1} to solve for $h_1$ and finally compute $\Phi^{-1}(x | 1)$ via (\ref{eq:bernoulli_swap_sbc_solve}).

Now we can show that this is indeed a different condition than the one derived in Example~\ref{ex:bernoulli_projection}. Let us take $\phi_B$ from Example~\ref{ex:bernoulli_projection_dap}---as discussed there, it passes SBC w.r.t.\ $f_1$: 

\begin{gather*}
\Phi^{-1}_B(x | 0) = \begin{cases}
   \frac{2}{3}x & x < \frac{3}{4} \\
   \frac{1}{2} + 2(x - \frac{3}{4}) & x \geq \frac{3}{4} \\
\end{cases} \\
\Phi^{-1}_B(x | 1) = \begin{cases}
   \frac{1}{3}\sqrt{6x + 4 x^2} & x < \frac{3}{4} \\
   \sqrt{3 - 6x + 4 x ^2} & x \geq \frac{3}{4}.
\end{cases}    
\end{gather*}
From $\Phi_B$,we obtain $h_0 = \frac{3}{4} > \frac{3}{4}(\sqrt{2} - 1) = h_1$, while 
\begin{gather*}
\Phi^{-1}_B(h_0 - h_1 | 0) =  \Phi^{-1}_B \left(\frac{3}{4}(2 - \sqrt{2}) | 0\right) = \frac{1}{2}(2 - \sqrt{2}) = 1 - \frac{1}{\sqrt{2}} \\
1 - \sqrt{2h_1} = 1 - \sqrt{\frac{3}{2}(\sqrt{2} - 1)} = 1 - \frac{\sqrt{3(\sqrt{2} - 1)}}{\sqrt{2}},
\end{gather*}
so the condition in (\ref{eq:bernoulli_swap_h0_h1}) does not hold and $\phi_B$ cannot pass SBC w.r.t.\ $f_3$.
\end{example}

\begin{example}[Ties - continuous]
\label{ex:ties_continous}
We obtain a yet different check when we deliberately choose to ignore some information. Take:
\begin{equation*}
    f_4(\theta, y) := \begin{cases}
\theta & \theta < \frac{1}{2} \\
\frac{1}{2} & \theta \geq \frac{1}{2}.
\end{cases}
\end{equation*}
We now have ties, and so we must evaluate the tie probabilities:
\begin{align*}
    C_{f_4}(s | 0) &= \begin{cases}
   \int_0^s \mathrm{d}\theta \: 2(1 - \theta)  =  2s - s^2  & s < \frac{1}{2}\\
   1 & s \geq \frac{1}{2}
\end{cases}\\
D_{f_4}(s | 0) &= \begin{cases}
    0 & s \neq \frac{1}{2}\\
   \int_{\frac{1}{2}}^1 \mathrm{d}\theta \: 2(1 - \theta)  =  \frac{1}{4} & s = \frac{1}{2}
\end{cases}\\
C_{f_4}(s | 1) &= \begin{cases}
  \int_0^s \mathrm{d}\theta \: 2\theta = s^2 & s < \frac{1}{2}\\
  1 & s \geq \frac{1}{2}
\end{cases} \\
D_{f_4}(s | 1) &= \begin{cases}
    0 & s \neq \frac{1}{2}\\
   \int_{\frac{1}{2}}^1 \mathrm{d}\theta \: 2\theta  =  \frac{3}{4} & s = \frac{1}{2}.
\end{cases}
\end{align*}
Reusing the notation $h_y := \Phi \left(\frac{1}{2}|y \right)$ from previous example, we get:
\begin{align*}
    C_{\phi, f_4}(s | y) &= \begin{cases}
   \Phi(s | y) & s < \frac{1}{2} \\
   1 & s \geq \frac{1}{2}
\end{cases}\\
    C^{-1}_{\phi, f_4}(x | y) &= \begin{cases}
   \Phi^{-1}(x | y) & x < h_y \\
   \frac{1}{2} & x \geq h_y
\end{cases}\\
D_{\phi, f_4}(s | y) &= \begin{cases}
  \int_{\frac{1}{2}}^1 \mathrm{d}\theta \: \phi (\theta | y) = 1 - h_y & s = \frac{1}{2} \\
  0 & s \neq \frac{1}{2}.
\end{cases}
\end{align*}
We can now use Lemma~\ref{le:value_of_q}:
\begin{gather*}   
q_{\phi,f_4}(x | y) = \begin{cases}
  C_{f_4}(\Phi^{-1}(x | y) | y)  & x < h_y \\
  C_{f_4}\left(\frac{1}{2}| y \right) + \frac{D_{f_4}\left(\frac{1}{2}| y \right)}{D_{\phi,f_4}\left(\frac{1}{2}| y \right)} \left(x - C_{\phi,f_4}\left(\frac{1}{2}|y \right) \right)& x \geq h_y
\end{cases} \\
= \begin{cases}
  C_{f_4}(\Phi^{-1}(x | y) | y)  & x < h_y \\
  1 + \frac{D_{f_4}\left(\frac{1}{2}| y \right)}{1 - h_y} \left(x - 1 \right)& x \geq h_y
\end{cases}\\
q_{\phi,f_4}(x | 0) = \begin{cases}
1 - \left(\Phi^{-1}(x | 0) - 1\right)^2 & x < h_0 \\
1 + \frac{x - 1}{4(1 - h_0)} & x \geq h_0
\end{cases} \\
q_{\phi,f_4}(x | 1) = \begin{cases}
\left(\Phi^{-1}(x | 1)\right)^2 & x < h_1 \\
1 + \frac{3(x - 1)}{4(1 - h_1)} & x \geq h_1.
\end{cases}
\end{gather*}
And the SBC condition becomes:
\begin{equation}
2x = \begin{cases}
  - \left(\Phi^{-1}(x | 0) - 1\right)^2 + \left(\Phi^{-1}(x | 1)\right)^2 + 1 & x< \min\{h_0, h_1\} \\
  - \left(\Phi^{-1}(x | 0) - 1\right)^2 + 2 + \frac{3(x - 1)}{4(1 - h_1)}& h_1 \leq x < h_0 \\
 1 + \frac{x - 1}{4(1 - h_0)} + \left(\Phi^{-1}(x | 1)\right)^2 & h_0 \leq x < h_1 \\
 2 + \frac{x - 1}{4(1 - h_0)} + \frac{3(x - 1)}{4(1 - h_1)} & \max\{h_0, h_1\} \leq x.
\end{cases}
\label{eq:bernoulli_clamp_sbc}
\end{equation}
If $\max\{h_0, h_1\} = 1$, then SBC fails as values of $f_4(\theta, y) = \frac{1}{2}$ are never generated in the posterior while appearing in the prior. If not, then the last branch implies,
\begin{equation}
h_1 = \frac{5 h_0 - 4}{8 h_0 - 7}, h_0 < \frac{4}{5}.
\label{eq:bernoulli_clamp_h1}    
\end{equation}
Substituting this into the second branch in \ref{eq:bernoulli_clamp_sbc}, we obtain a set of constraints on $\Phi^{-1}$:
\begin{align}
x< \min\{h_0, h_1\} & \implies \Phi^{-1}(x | 1) = \sqrt{2x + (\Phi^{-1}(x | 0) - 1)^2 - 1} \label{eq:bernoulli_clamp_cond_phiboth} \\
h_1 \leq x < h_0 & \implies \Phi^{-1}(x | 0) = 1 - \frac{1}{2}\sqrt{\frac{x - 1}{h_0 - 1}} \label{eq:bernoulli_clamp_cond_phi0}\\
h_0 \leq x < h_1 & \implies \Phi^{-1}(x | 1) = \frac{1}{2}\sqrt{\frac{3 - 4h_0 + x(8 h_0 - 7)}{h_0 - 1}} \label{eq:bernoulli_clamp_cond_phi1}
\end{align}
Take $\Phi_D^{-1}(x | 0) := x^2, \Phi_D^{-1}(x | 1) := \sqrt{2x - 2x^2 + x^4}$. This will pass SBC w.r.t.\ $f_1$, as it is derived directly via (\ref{eq:bernoulli_sbc_solve}). However, it will not pass SBC w.r.t.\ $f_4$ for multiple reasons. Probably the easiest to see is that in this case we have $h_0 = \sqrt{\frac{1}{2}}$ and $h_1 \approx 0.15$, so over a large range we would need $\Phi_D^{-1}(x | 0)$ proportional to square root of $x$ due to \eqref{eq:bernoulli_clamp_cond_phi0}. Additionally, condition  \eqref{eq:bernoulli_clamp_h1} entails $h_0 = \frac{1}{2} \implies h_1 = \frac{1}{2}$ and is therefore also violated. 

Passing SBC for $f_4$ restricts the functional form of a potentially big segment of the inverse CDFs via \eqref{eq:bernoulli_clamp_cond_phi0} or \eqref{eq:bernoulli_clamp_cond_phi1}, which makes it strict in this area while providing no constraints on the distribution of values above $\frac{1}{2}$.

Constructing a posterior family satisfying SBC for $f_4$ can then proceed as follows: Pick $0 < h_0 < \frac{4}{5}$, calculate $h_1$ via \eqref{eq:bernoulli_clamp_h1}. For $x > h_y$ the values of $\Phi(x | y)$ can be arbitrary (as long as they define valid quantile functions). We can also freely choose $\Phi^{-1}(x | 0)$ for $x < \min\{h_0, h_1\}$, but we need to ensure that (a) $\Phi^{-1}(x | 1)$ is a valid quantile function---the conditions derived for $f_1$ in (\ref{eq:sbc_solve_cond1})--(\ref{eq:sbc_solve_cond3}) restricted to $x < \min\{h_0, h_1\}$ are sufficient---and (b) $\Phi^{-1}(h_y | y) = \frac{1}{2}$. The latter condition implies that  if $h_0 < h_1$ then $\Phi(h_0 | 1) < \frac{1}{2}$ which we can combine with \eqref{eq:bernoulli_clamp_cond_phiboth} to get
$$
\sqrt{2h_0 + (\Phi^{-1}(h_0 | 0) - 1)^2 - 1} < \frac{1}{2},
$$ 
which reduces to $\frac{3}{8} \leq h_0 < \frac{1}{2}$. The condition $\Phi(h_1 | 1) = \frac{1}{2}$ is then already ensured by \eqref{eq:bernoulli_clamp_cond_phi1}. If on the other hand $h_0 \geq h_1$ then substituting $x = h_1$ into \eqref{eq:bernoulli_clamp_cond_phiboth} implies
$$
\sqrt{2h_1 + (\Phi^{-1}(h_1 | 0) - 1)^2 - 1} = \frac{1}{2},
$$
which reduces to $\Phi^{-1}(h_1|0) = 1 - \frac{1}{2}\sqrt{\frac{9}{7 - 8 h_0}}$ and $\frac{1}{2} \leq h_0 < \frac{25}{32}$.
The condition $\Phi(h_0 | 0) = \frac{1}{2}$ is then already ensured by \eqref{eq:bernoulli_clamp_cond_phi0}.

\end{example}

\begin{example}[Ties---discrete]

To get some intuition on the behaviour of SBC for discrete parameters, we further simplify the model into,
\begin{align}
\Theta &:= \left\{\frac{1}{3}, \frac{2}{3}\right\} \notag\\
Y &:= \{0,1\} \notag\\
\forall \theta \in \Theta: \pi_\text{prior}(\theta) &:= \frac{1}{2}  \notag\\
\randvar{y}  &\sim \mathrm{Bernoulli}(\randvar{\theta}). \label{eq:bernoulli_model_ties}
\end{align}
We know the correct posterior and marginal distributions are 
\begin{align}
    \pi_\text{marg}(0) &= \pi_\text{marg}(1) = \frac{1}{2} \notag\\
    \pi_\text{post}\left(\left.\frac{1}{3} \right| 0\right) &= \pi_\text{post}\left(\left.\frac{2}{3} \right| 1\right) = \frac{2}{3}\notag\\
    \pi_\text{post}\left(\left.\frac{1}{3} \right| 1\right) &= \pi_\text{post}\left(\left.\frac{2}{3} \right| 0\right) = \frac{1}{3}.
    \label{eq:bernoulli_ties_correct}
\end{align}
The posterior family is fully defined by just two numbers: $a = \phi\left(\frac{1}{3}|0\right)$ and $b = \phi\left(\frac{1}{3}|1\right)$. We can directly use the definitions to obtain for $f_1$ (i.e., just $\theta$):
\begin{gather*}
q_{\phi,f_1}(x | 0) = \begin{cases}
   \frac{2x}{3 a} & x \leq a \\
   \frac{2}{3} + \frac{x - a}{3\left(1 - a\right)} & \text{otherwise}
\end{cases} \\
q_{\phi,f_1}(x | 1) = \begin{cases}
   \frac{x}{3 b} & x \leq b \\
   \frac{1}{3} + \frac{2\left(x - b\right)}{3\left(1 - b\right)} & \text{otherwise.}
\end{cases}
\end{gather*}
For the case $ x \leq \min\left\{a,b\right\}$ the SBC condition becomes:
$$
2x = \frac{2x}{3a} + \frac{x}{3b}.
$$
this can hold for all $x$ in the range only if
\begin{gather}
2 = \frac{2}{3a} + \frac{1}{3b} \notag \\
a b - \frac{1}{3}b - \frac{1}{6}a = 0 \label{eq:disc_cond_1}.
\end{gather}
Now we can inspect the case $ x > \max\left\{a,b\right\}$ where the SBC condition becomes.
$$
2x = 1 + \frac{x-a}{1-a} +\frac{2\left(x - b\right)}{3\left(1 - b\right)}.
$$
Focusing on the coefficient for $x$ in the above equation yields
\begin{gather}
1 - \frac{1}{6(1-a)} - \frac{1}{3(1-b)} = 0 \notag \\
6ab - 4a - 5b + 3  = 0 \label{eq:disc_cond_2}.
\end{gather}
Combining \eqref{eq:disc_cond_1} and \eqref{eq:disc_cond_2} already leaves only two solution: either the posterior equals the prior ($\phi\left(\frac{1}{3}|0\right) = \phi\left(\frac{1}{3}|1\right) = \frac{1}{2}$) or $\phi = \pi_\text{post}$. Those solutions also satisfy all the other cases.

This example shows that when the underlying parameter space is discrete, there is less flexibility to craft $\phi$ to pass SBC and the space of posterior families passing SBC can have more structure than when no ties are present.

\end{example}

\end{document}